\pgfplotsset{compat=1.16}
\pgfplotsset{every tick label/.append style={font=\tiny}}
\newlength{\starsize}
\newlength{\starspread}
\tikzset{starsize/.code={\setlength{\starsize}{#1}},
         starspread/.code={\setlength{\starspread}{#1}}}
\tikzset{starsize=1mm,
         starspread=3mm}
\pgfqpoint{\starspread}{\starspread}}
\pgfqpoint{\starspread}{\starspread}}
\newcommand*\bigcdot{\mathpalette\bigcdot@{.5}}
\newcommand*\bigcdot@[2]{\mathbin{\vcenter{\hbox{\scalebox{#2}{$\m@th#1\bullet$}}}}}
\newcommand{\stitle}[1]{\vspace*{0.5em}\noindent{\bf #1.\/}}
\newcommand{\ie}{{\it i.e.},\xspace}
\newcommand{\U}{\mathcal{U}\xspace}
\newcommand{\V}{\mathcal{V}\xspace}
\newcommand{\G}{\mathcal{G}\xspace}
\newcommand{\N}{\mathcal{N}\xspace}
\newcommand{\C}{\mathcal{C}\xspace}
\newcommand{\EDG}{\mathcal{E}\xspace}
\newcommand{\OO}{\mathcal{O}\xspace}
\newcommand{\DM}{\mathbf{D}\xspace}
\newcommand{\IM}{\mathbf{I}\xspace}
\newcommand{\SM}{\mathbf{S}\xspace}
\newcommand{\MM}{\mathbf{M}\xspace}
\newcommand{\PM}{\mathbf{P}\xspace}
\newcommand{\YM}{\mathbf{Y}\xspace}
\newcommand{\XM}{\mathbf{X}\xspace}
\newcommand{\LM}{\mathbf{L}\xspace}
\newcommand{\HM}{\mathbf{H}\xspace}
\newcommand{\GM}{\mathbf{G}\xspace}
\newcommand{\FY}{\boldsymbol{\Upsilon}\xspace}
\newcommand{\SGVM}{\boldsymbol{\Sigma}\xspace}
\newcommand{\GaM}{\boldsymbol{\Gamma}\xspace}
\newcommand{\PsiM}{\boldsymbol{\Psi}\xspace}
\newcommand{\FM}{\mathbf{F}\xspace}
\newcommand{\BM}{\mathbf{B}\xspace}
\newcommand{\RM}{\mathbf{R}\xspace}
\newcommand{\TM}{\boldsymbol{\Phi}\xspace}
\newcommand{\ZM}{\mathbf{Z}\xspace}
\newcommand{\dimU}{d\xspace}
\newcommand{\QM}{\mathbf{Q}\xspace}
\newcommand{\algo}{\texttt{TPO}\xspace}
\newenvironment{customlegend}[1][]{%
    \begingroup
    \csname pgfplots@init@cleared@structures\endcsname
    \pgfplotsset{#1}%
}{%
    \csname pgfplots@createlegend\endcsname
    \endgroup
}%
\def\addlegendimage{\csname pgfplots@addlegendimage\endcsname}
\newcommand\footnoteref[1]{\protected@xdef\@thefnmark{\ref{#1}}\@footnotemark}
\let\oldnl\nl
\newcommand{\nonl}{\renewcommand{\nl}{\let\nl\oldnl}}
\DeclareMathOperator{\Tr}{trace}
\g@addto@macro{\@algocf@init}{\SetKwInOut{Parameter}{Parameters}} 
\definecolor{myred}{HTML}{fd7f6f}
\definecolor{myred_new}{HTML}{D8D8D8}
\definecolor{myred_new2}{HTML}{D7191C}
\definecolor{myblue}{HTML}{7eb0d5}
\definecolor{mygreen}{HTML}{b2e061}
\definecolor{mypurple}{HTML}{bd7ebe}
\definecolor{myorange}{HTML}{ffb55a}
\definecolor{myyellow}{HTML}{ffee65}
\definecolor{mypurple2}{HTML}{beb9db}
\definecolor{mypink}{HTML}{fdcce5}
\definecolor{mycyan}{HTML}{8bd3c7}
\definecolor{myblue2}{HTML}{115f9a}
\definecolor{myred2}{HTML}{c23728}
\apptocmd\normalsize{%
   \setlength{\abovedisplayskip}{1.5pt}
   \setlength{\belowdisplayskip}{1.5pt}
   \setlength{\abovedisplayshortskip}{1.5pt}
   \setlength{\belowdisplayshortskip}{1.5pt}
}{}{}
\newcommand{\eat}[1]{}
  \providecommand\BibTeX{{%
    \normalfont B\kern-0.5em{\scshape i\kern-0.25em b}\kern-0.8em\TeX}}}
\begin{document}

\title{Effective Clustering on Large Attributed Bipartite Graphs}
\subtitle{Technical Report}


\author{Renchi Yang}
\affiliation{%
  \institution{Hong Kong Baptist University}
  \country{}
}
\email{renchi@hkbu.edu.hk}
\orcid{0000-0002-7284-3096}

\author{Yidu Wu}
\authornote{Work done while at Hong Kong Baptist University.}
\affiliation{%
  \institution{Chinese University of Hong Kong}
  \country{}
}
\email{yiduwu@cuhk.edu.hk}

\author{Xiaoyang Lin}
\affiliation{%
  \institution{Hong Kong Baptist University}
  \country{}
}
\email{csxylin@hkbu.edu.hk}

\author{Qichen Wang}
\affiliation{%
  \institution{Hong Kong Baptist University}
  \country{}
}
\email{qcwang@hkbu.edu.hk}
\orcid{0000-0002-0959-5536}

\author{Tsz Nam Chan}
\affiliation{%
  \institution{Shenzhen University}
  \country{}
}
\email{edisonchan@szu.edu.cn}
\orcid{https://orcid.org/0000-0001-5851-7967}

\author{Jieming Shi}
\affiliation{%
  \institution{Hong Kong Polytechnic University}
  \country{}
}
\email{jieming.shi@polyu.edu.hk}
\orcid{0000-0002-0465-1551}

\renewcommand{\shortauthors}{}

\begin{abstract}
Attributed bipartite graphs (ABGs) are an expressive data model for describing the interactions between two sets of heterogeneous nodes that are associated with rich attributes, such as customer-product purchase networks and author-paper authorship graphs. Partitioning the target node set in such graphs into $k$ disjoint clusters (referred to as $k$-ABGC) finds widespread use in various domains, including social network analysis, recommendation systems, information retrieval, and bioinformatics. However, the majority of existing solutions towards $k$-ABGC either overlook attribute information or fail to capture bipartite graph structures accurately, engendering severely compromised result quality.
The severity of these issues are accentuated in real ABGs, which often encompass millions of nodes and a sheer volume of attribute data, rendering effective $k$-ABGC over such graphs highly challenging.



In this paper, we propose \algo, an effective and efficient approach to $k$-ABGC that achieves superb clustering performance on multiple real datasets. \algo obtains high clustering quality through two major contributions: (i) a novel formulation and transformation of the $k$-ABGC problem based on {\em multi-scale attribute affinity} specialized for capturing attribute affinities between nodes with the consideration of their multi-hop connections in ABGs, and (ii) a highly efficient solver that includes a suite of carefully-crafted optimizations for sidestepping explicit affinity matrix construction and facilitating faster convergence. Extensive experiments, comparing \algo against 19 baselines over 5 real ABGs, showcase the superior clustering quality of \algo measured against ground-truth labels. Moreover, compared to the state of the arts, \algo is often more than $40\times$ faster over both small and large ABGs.
\end{abstract}

\begin{CCSXML}
<ccs2012>
   <concept>
       <concept_id>10002950.10003714.10003715.10003719</concept_id>
       <concept_desc>Mathematics of computing~Computations on matrices</concept_desc>
       <concept_significance>300</concept_significance>
       </concept>
   <concept>
       <concept_id>10010147.10010257.10010258.10010260.10003697</concept_id>
       <concept_desc>Computing methodologies~Cluster analysis</concept_desc>
       <concept_significance>500</concept_significance>
       </concept>
   <concept>
       <concept_id>10010147.10010257.10010321.10010335</concept_id>
       <concept_desc>Computing methodologies~Spectral methods</concept_desc>
       <concept_significance>500</concept_significance>
       </concept>
   <concept>
       <concept_id>10002951.10003227.10003351.10003444</concept_id>
       <concept_desc>Information systems~Clustering</concept_desc>
       <concept_significance>500</concept_significance>
       </concept>
 </ccs2012>
\end{CCSXML}

\ccsdesc[300]{Mathematics of computing~Computations on matrices}
\ccsdesc[500]{Computing methodologies~Cluster analysis}
\ccsdesc[500]{Computing methodologies~Spectral methods}
\ccsdesc[500]{Information systems~Clustering}

\keywords{clustering, bipartite graphs, attributes, eigenvector}


\maketitle
\section{Introduction}
Bipartite graphs are an indispensable data structure used to model the interplay between two sets of entities from heterogeneous sources, e.g., author-publication associations, customer-merchant transactions, query-webpage pairing, and various user-item interactions on social media, e-commerce platforms, search engines, etc. In the real world, such graphs are often associated with rich attributes, e.g., the user profile in social networks, web page content in web graphs, hallmarks of pathways in cancer signaling networks, and paper keywords in academic graphs, which are termed {\em Attributed Bipartite Graphs} (hereinafter ABGs).

Given an ABG $\G$ with two disjoint node sets $\U$ and $\V$, $k$-{\em Attributed Bipartite Graph Clustering} ($k$-ABGC), a fundamental task of analyzing ABGs, seeks to partition the nodes in the node set of interest, e.g., $\U$ or $\V$, into $k$ non-overlapping clusters $\C_1,\C_2,\cdots,\C_k$, such that nodes within the same cluster $\C_i$ are close to each other in terms of both their attribute similarity and topological proximity in $\G$. Due to the omnipresence of ABGs, $k$-ABGC has seen a wide range of practical applications in social network analysis, recommender systems, information retrieval, and bioinformatics, such as user/content tagging \cite{pan2013automatic,yao2013annotation}, market basket analysis \cite{zha2001bipartite,zhang2021measuring}, document categorization \cite{dhillon2001co,wang2015incorporating}, identification of protein complexes, disease genes, and drug targets \cite{xu2016interactive,pavlopoulos2018bipartite}, and many others \cite{li2020hierarchical,xu2023effective,yan2018telecomm,ren2021ensemfdet,kim2020geosocial}.

As reviewed in Section \ref{sec:relatedwork}, existing solutions towards $k$-ABGC primarily rely on {\em bipartite graph co-clustering} (BGCC), {\em attributed graph clustering} (AGC), and {\em attributed network embedding} (ANE) techniques. Amid them, BGCC has been extensively investigated in the literature \cite{dhillon2001co,kluger2003spectral,ailem2015co,labiod2011co,dhillon2003information,xu2019deep} for clustering non-attributed bipartite graphs, whose basic idea is to simultaneously group nodes in $\U$ and $\V$ merely based on their interactions in $\G$, instead of clustering them severally. 
As pinpointed in prior works \cite{bothorel2015clustering}, the attributes present rich information to characterize the properties of nodes and hence, can complement scant topological information for better node clustering.  Consequently, BGCC methods exhibit subpar performance on ABGs as they overlook such information.



To leverage the complementary nature of graph topology and attributes for enhanced clustering effectiveness, considerable efforts \cite{yang2021effective,fanseu2023grace,cui2020adaptive,zhang2019attributed,lai2023re,bothorel2015clustering} have been invested in recent years towards devising effective AGC models and algorithms. Although these approaches enjoy improved performance over {\em unipartite} attributed graphs by fusing graph connectivity and attribute information of nodes via deep learning or sophisticated statistical models, they are sub-optimal for ABGs.


Over the past decade, network embedding has emerged as a popular and powerful tool for analyzing graph-structured data, especially those with nodal attributes. 
Notwithstanding a plethora of network embedding techniques invented \cite{cui2018survey,giamphy2023survey,yang2023pane}, most of them are designed for unipartite graphs. To capture the unique characteristics of bipartite graphs, \citet{huang2020biane} extend \textsf{node2vec} \cite{grover2016node2vec} to ABGs, at the expense of tremendous training overhead.
Adopting this category of approaches for $k$-ABGC requires a rear-mounted phase (e.g., $k$-Means) to cluster the node embeddings, which is not cost-effective given the high embedding dimensions (typically 128).
To summarize, existing approaches to $k$-ABGC either dilute clustering quality due to inadequate exploitation of attributes and bipartite graph topology, or incur vast computation costs, especially on sizable ABGs encompassing thousands of attributes, millions of nodes, and billions of edges.



In response to these challenges, we propose \algo, a novel \underline{T}hree-\underline{P}hase \underline{O}ptimization framework for $k$-ABGC that significantly advances the state of the art in $k$-ABGC, in terms of both result effectiveness and computation efficiency. First and foremost, \algo formulates the $k$-ABGC task as an optimization problem based on {\em multi-scale attribute affinity} (MSA), a new node affinity measure dedicated to ABGs. More concretely, the MSA of two homogeneous nodes $u_i,u_j$ in $\U$ of ABG $\G$ evaluates the similarity of their attributes aggregated from multi-hop neighborhoods, which effectively captures the affinity of nodes with consideration of both their attributes and topological connections in bipartite graphs.
However, calculating the MSA of all node pairs in $\G$ for clustering is prohibitively expensive for large graphs, as it entails colossal construction time and space consumption ($O(|\U|^2)$). On top of that, the exact optimization of our $k$-ABGC objective is also infeasible as an aftermath of its NP-hardness. 

To tackle these issues, \algo adopts a three-phase optimization scheme for an approximate solution with time and space costs linear to the size of $\G$. Under the hood, similar in spirit to kernel tricks \cite{liu2011kernel}, \algo first leverages a mathematical apparatus, {\em random features} \cite{rahimi2007random,yu2016orthogonal}, to bypass the materialization of the all-pairwise MSA. The clustering task is later framed as a non-negative matrix factorization, followed by a matrix approximation problem, based on our theoretically-grounded problem transformation. Particularly, the former attends to yielding an intermediate, while the latter iteratively refines the intermediary result to derive the eventual clusters. In addition to the linear-time iterative solvers, \algo further includes a greedy initialization trick for speeding up the convergence, and an attribute dimension reduction approach to conspicuously boost the practical efficiency of \algo over graphs with large attribute sets, without degrading result quality.
Our empirical studies, which involved 5 real ABGs and compared against 19 existing algorithms, demonstrate that \algo consistently attains superior or comparable clustering quality at a fraction of the cost compared to the state-of-the-art methods. For instance, on the largest Amazon dataset with over 10 million nodes and 22 million edges, \algo obtains the best clustering accuracy within 3 minutes, whereas the state-of-the-art demands more than 4 hours to terminate.







\section{Problem Formulation}\label{sec:preliminary}
\begin{table}[!t]
\centering
\renewcommand{\arraystretch}{1.0}
\begin{small}
\caption{Frequently used notations.}\vspace{-3mm} \label{tbl:notations}
\begin{tabular}{|p{0.44in}|p{2.36in}|}
    \hline
    {\bf Notation} &  {\bf Description}\\
    \hline
    $\U,\V,\EDG$   & The node sets $\U,\V$, and the edge set $\EDG$ of ABG $\G$.\\ \hline
    $\XM_\U,\XM_\V$   & Attribute vectors of nodes in $\U$ and $\V$.\\ \hline
    $d_\U,d_\V$   & Attribute dimensions of nodes in $\U$ and $\V$.\\ \hline
    $w(u_i,v_j)$ & Weight of edge $(u_i,v_j)$ in $\EDG$. \\ \hline
    $k$   & The number of clusters. \\ \hline
    $\alpha$   & Balance coefficient used in Eq. \eqref{eq:Z-obj}. \\ \hline
    $\gamma$   & Maximum number of iterations used in Eq. \eqref{eq:P}. \\ \hline
    $d$   & Dimension of ${\XM}^{\prime}_\U$ in Eq. \eqref{eq:Xd} ($d\le d_\U$). \\ \hline
    $T_f,T_g$   & Maximum number of iterations used in Algorithms \ref{alg:onmf} and \ref{alg:round}, respectively. \\ \hline
    $\LM_\U$   & Normalized adjacency matrix defined in Eq. \eqref{eq:L}. \\ \hline
    $\ZM_\U, \widehat{\ZM}_\U$   & Feature vectors of nodes in $\U$ and their normalized version defined in Eq. \eqref{eq:PX} and Eq. \eqref{eq:Z}, respectively. \\ \hline
    $s(u_i,u_j)$   & The MSA between nodes $u_i$ and $u_j$ defined in Eq. \eqref{eq:s}. \\ \hline
    $\RM$   & Matrix satisfies $\RM[i]\cdot\RM[j]\approx s(u_i,u_j)$. \\ \hline
    $\YM$   & The NCI matrix defined in Eq. \eqref{eq:Y}. \\ \hline
    $\FY$   & The continuous version of $\YM$ satisfying Eq. \eqref{eq:obj4}. \\ \hline
\end{tabular}%
\end{small}
\end{table}
\subsection{Notation and Terminology}\label{sec:notation}
We denote matrices using bold uppercase letters, e.g., $\MM\in \mathbb{R}^{n\times d}$, and the $i$-th row (resp. the $j$-th column) of $\MM$ is represented as $\MM[i]$ (resp. $\MM[:,j]$). Accordingly, $\MM[i,j]$ signifies the entry at the $i$-th row and $j$-th column of $\MM$. For each vector $\MM[i]$, we use $\|\MM[i]\|$ to represent its $L_2$ norm
and $\|\MM\|_F$ to represent the Frobenius norm of $\MM$.

Let $\G=(\U\cup \V, \EDG, \XM_{\U}, \XM_{\V})$ symbolize an {\em attributed bipartite graph} (ABG), where $\EDG$ is composed of edges connecting nodes in two disjoint node sets $\U$ and $\V$ and each edge $(u_i,v_j)$ is associated with an edge weight $w(u_i,v_j)$. 
Each node $u_i\in \U$ (resp. $v_i\in \V$) of $\G$ is characterized by a length-$d_\U$ (resp. length-$d_\V$) attribute vector $\XM_\U[i]$ (resp. $\XM_\V[i]$).
Further, we denote by $\BM_\U\in \mathbb{R}^{|\U|\times |\V|}$ the adjacency matrix of $\G$ from the perspective of $\U$, in which $\BM_\U[i,j]=w(u_i,v_j)$ if $(u_i,v_j)\in \EDG$ and 0 otherwise. 
Let $\DM_\U$ (resp. $\DM_\V$) be a $|\U|\times |\U|$ (resp. $|\V|\times |\V|$) diagonal matrix wherein the diagonal entry $\DM_\U[i,i]$ (resp. $\DM_\V[i,i]$) stands for the sum of the weights of edges incident to $u_i$ (resp. $v_i$), i.e., $\sum_{(u_i,v_\ell)\in \EDG}{w(u_i,v_\ell)}$ (resp. $\sum_{(u_\ell,v_i)\in \EDG}{w(u_\ell,v_i)}$). 
Table \ref{tbl:notations} lists the frequently used notations throughout the paper.

The overarching goal of $k$-ABGC is formalized in Definition \ref{def:kabgc} and exemplified in Figure \ref{fig:ABGC}. Note that by default, we regard $\U$ as the target node set to cluster. 
The number $k$ can be specified by users or configured by a preliminary procedure \cite{milligan1985examination}. 

\begin{figure}[!t]
\centering
\includegraphics[width=0.8\columnwidth]{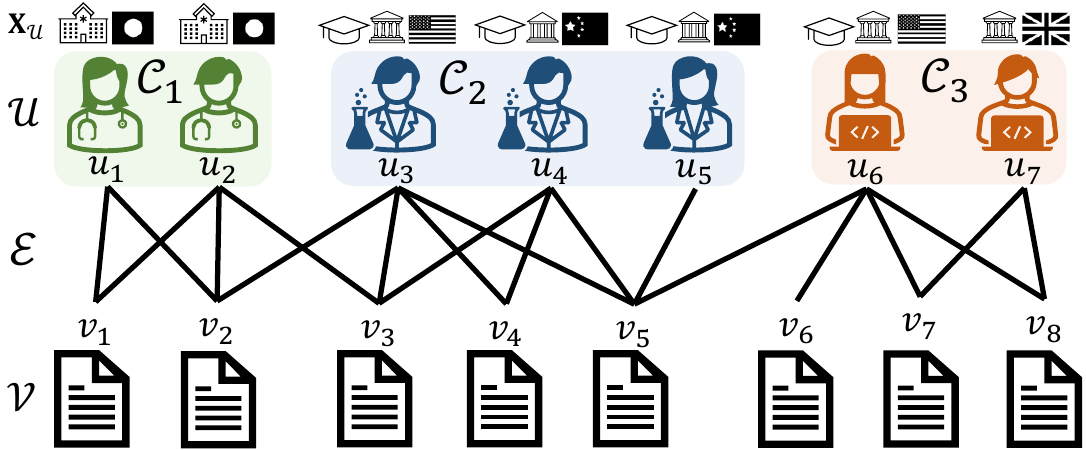}
\vspace{-3mm}
\caption{An Illustrative Example of $k$-ABGC}\label{fig:ABGC}
\end{figure}

\begin{definition}[$k$-Attributed Bipartite Graph Clustering ($k$-ABGC)]\label{def:kabgc} Given an ABG $\G$, the target node set $\U$, and the number $k$ of clusters, $k$-ABGC aims to partition node set $\U$ into $k$ disjoint clusters $\{\C_1,\C_2,\cdots,\C_k\}$ such that nodes within the same cluster are close to each other in terms of both topological proximity and attribute similarity while nodes across diverse clusters are distant.
\end{definition}

\eat{Figure \ref{fig:ABGC} exemplifies an ABG $\G$ with 7 researchers $u_1$-$u_7$ in $\U$, 8 research publications $v_1$-$v_8$ in $\V$, and the authorships in $\EDG$. Additionally, each researcher possesses a collection of attributes, including nationality, work institution, and academic qualifications. In Figure \ref{fig:ABGC}, $u_1,u_2$ share identical attributes and close collaboration, indicating they should be grouped in the same cluster. Similar observations can be made for node pairs $(u_3,u_4)$ and $(u_6,u_7)$, where the difference is that their attributes are partially analogous. Despite limited connections related to $u_5$, $u_5$ and $u_4$ are likely to be grouped together due to their identical attributes, with $u_4$ being the sole collaborator of $u_5$. Overall, given $k=3$, an intuitive and ideal solution for $k$-ABGC in Definition \ref{def:kabgc} is to divide the 7 researchers into 3 clusters, i.e., $\C_1=\{u_1,u_2\}$, $C_2=\{u_3,u_4,u_5\}$, and $\C_3=\{u_6,u_7\}$, with considering both their connectivity (collaboration) in $\G$ and attribute homogeneity. }

\subsection{Multi-Scale Attribute Affinity (MSA)}\label{sec:MSA}
Notice that Definition \ref{def:kabgc} cannot directly guide the generation of clusters, as it lacks a concrete optimization objective that quantifies node affinities. To this end, we first delineate our novel affinity measure MSA for nodes in terms of both graph structure and attributes, before formally introducing our objective in Section \ref{sec:obj}. 

\stitle{MSA formulation} We first assume that each node $u_i\in \U$ can be represented by a feature vector ${\ZM}_\U[i]$, which characterizes both the attributes as well as the rich semantics hidden in the bipartite graph topology.
Following the popular Skip-gram model \cite{mikolov2013distributed} and its extension to graphs \cite{perozzi2014deepwalk,grover2016node2vec}, we can model pair-wise affinity of nodes as a softmax unit \cite{goodfellow20166} parametrized by a dot product of their feature vectors. 
Rather than using the vanilla softmax function, we adopt a symmetric softmax function and formulate the MSA $s(u_i,u_j)$ between any two nodes $u_i,u_j$ in $\U$ as follows:
\begin{small}
\begin{equation}\label{eq:s}
s(u_i,u_j) = \frac{e^{\widehat{\ZM}_\U[i]\cdot \widehat{\ZM}_\U[j]}}{\sqrt{\sum_{u_\ell\in \U}{e^{\widehat{\ZM}_\U[i]\cdot \widehat{\ZM}_\U[\ell]}}}\cdot \sqrt{\sum_{u_\ell\in \U}{e^{\widehat{\ZM}_\U[j]\cdot \widehat{\ZM}_\U[\ell]}}}},
\end{equation}  
\end{small}
where $\widehat{\ZM}_\U$ is a normalized $\ZM_\U$ whose each $i$-th row satisfies
\begin{small}
\begin{equation}\label{eq:Z}
\widehat{\ZM}_\U[i] = {\ZM_\U[i]}/{\|\ZM_\U[i]\|}.
\end{equation}
\end{small}

MSA $s(u_i,u_j)$  is  symmetric, i.e., $s(u_i,u_j)=s(u_j,u_i)\ \forall{u_i,u_j\in \U}$. Additionally, by imposing a normalization, $-1\le \widehat{\ZM}_\U[i]\cdot \widehat{\ZM}_\U[j]\le 1$ $\forall{u_i,u_j\in \U}$, and hence, the MSA values w.r.t. any node $u_i\in \U$ are scaled to a similar range.


\stitle{Optimization Objective for $\ZM_\U$} Next, we focus on the obtainment of the feature vector $\widehat{\ZM}_\U[i]$ for each node $u_i\in \U$. A favorable choice might be {graph neural networks} (GNNs) \cite{kipf2016semi}, which, however, cannot be readily applied to ABGs as existing GNNs are primarily designed for general graphs, and it is rather costly to train classic GNNs.
As demystified by recent studies \cite{ma2021unified,yang2021attributes,zhu2021interpreting}, many popular GNNs models can be unified into an optimization framework from the perspective of numeric optimization, which essentially produces node feature vectors being smooth on nearby nodes in terms of the underlying graph.
Inspired by this finding, we extend this optimization framework to ABGs.
More specifically, its objective is as follows:
\begin{equation}\label{eq:Z-obj}
\min_{\ZM_\U}{(1-\alpha) \cdot \OO_a + \alpha\cdot \OO_g},
\end{equation}
which includes a non-negative coefficient $\alpha\in [0,1]$ and two terms: (i) a fitting term $\OO_{a}$ in Eq. \eqref{eq:Oa} aiming at ensuring $\ZM_\U$ is close to the input attribute vectors $\XM_\U$,
\begin{equation}\label{eq:Oa}
    \OO_a = \|\ZM_\U-\XM_\U\|_F^2
\end{equation}
and (ii) a regularization term $\OO_{g}$ in Eq. \eqref{eq:Og} constraining the feature vectors of two nodes with high connectivity to be similar.
\begin{small}
\begin{equation}\label{eq:Og}
\OO_g = \frac{1}{2}\sum_{u_i,u_j\in \U}{{\widehat{w}(u_i,u_j)}\cdot \left\|\frac{\ZM_\U[i]}{\sqrt{\DM_\U[i,i]}}-\frac{\ZM_\U[j]}{\sqrt{\DM_\U[j,j]}}\right\|^2}
\end{equation}
\end{small}

The regularization term requires scaling $\ZM_\U[i]$ of each node $u_i$ in Eq. \eqref{eq:Og} with a factor $1/\sqrt{\DM_\U[i,i]}$ to avoid distorting the values in $\ZM_\U[i]$ when $u_i$ connects to massive or scant links.
The weight $\widehat{w}(u_i,u_j)$ in Eq. \eqref{eq:Og} is defined by
\begin{small}
\begin{equation*}
\widehat{w}(u_i,u_j)=\sum_{v_\ell\in \N(u_i)\cap \N(u_j)}\frac{w(u_i,v_\ell)\cdot w(v_\ell,u_j)}{\DM_\V[\ell,\ell]},
\end{equation*}
\end{small}
which reflects the strength of connections between two homogeneous nodes $u_i$ and $u_j$ in $\U$ (e.g., researchers) via their common neighbors in the counterparty $\V$ (e.g., co-authored papers). As an example for illustration, consider researchers $u_3,u_4$ in Figure \ref{fig:ABGC}, $\widehat{w}(u_3,u_4)=\frac{1}{3}+\frac{1}{2}+\frac{1}{4}$, where the denominators $3$, $2$, and $4$ correspond to the numbers of authors in papers $v_3, v_4$ and $v_5$. Accordingly, 
$\widehat{w}(u_3,u_4)$ evaluates the overall contributions of $u_3,u_4$ to their collaborated research works in $\V$. Thus, the $\OO_g$ term in Eq. \eqref{eq:Z-obj} is to minimize the distance of feature vectors of researchers who have extensively collaborated with each other with high contributions.

The hyper-parameter $\alpha$ balances the attribute and topology information
encoded into $\ZM_\U$. In particular, when $\alpha=0$, feature vectors $\ZM_\U=\XM_\U$, and at the other extreme, i.e., $\alpha=1$, $\ZM_\U$ is entirely dependent on the topology of $\G$.

\stitle{Closed-form Solution of $\ZM_\U$}
Given an $\alpha$, Lemma \ref{lem:ZPX}\footnote{All proofs appear in Appendix \ref{sec:proofs}.} indicates that the optimal feature vectors $\ZM_\U$ to Eq. \eqref{eq:Z-obj} can be computed via iterative sparse matrix multiplications in Eq. \eqref{eq:PX} without undergoing expensive training.
\begin{lemma}\label{lem:ZPX}
When $\gamma\rightarrow\infty$, $\ZM_\U$ in Eq. \eqref{eq:PX} is the closed-form solution to the optimization problem in Eq. \eqref{eq:Z-obj}.
\begin{small}
\begin{equation}\label{eq:PX}
\ZM_\U = (1-\alpha)\sum_{r=0}^{\gamma}{\alpha^r \cdot  \left(\LM_\U\LM_\U^{\top}\right)^{r}}\XM_\U,
\end{equation}
where $\LM_\U$ is a normalized version of adjacency matrix $\BM_\U$, i.e.,
\begin{equation}\label{eq:L}
    \LM_\U = \DM_\U^{-\frac{1}{2}}\BM_\U\DM_\V^{-\frac{1}{2}}.
\end{equation}
\end{small}
\end{lemma}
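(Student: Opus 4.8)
The plan is to solve the unconstrained convex optimization in Eq.~\eqref{eq:Z-obj} by setting its gradient to zero and then recognizing the resulting linear system's solution as the limit of the Neumann-series iteration in Eq.~\eqref{eq:PX}. First I would rewrite the regularization term $\OO_g$ in Eq.~\eqref{eq:Og} in matrix form. Observe that $\OO_g$ is a weighted graph-Laplacian-type quadratic form over the homogeneous weights $\widehat{w}(u_i,u_j)$, with each $\ZM_\U[i]$ scaled by $1/\sqrt{\DM_\U[i,i]}$. Collecting the weights $\widehat{w}(u_i,u_j) = \sum_{v_\ell \in \N(u_i)\cap\N(u_j)} \tfrac{w(u_i,v_\ell)w(v_\ell,u_j)}{\DM_\V[\ell,\ell]}$ into a matrix shows that this homogeneous weight matrix is exactly $\BM_\U \DM_\V^{-1} \BM_\U^\top$, whose $(i,j)$ entry aggregates over shared neighbors in $\V$. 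After the symmetric $\DM_\U^{-1/2}$ scaling absorbed into the $1/\sqrt{\DM_\U[i,i]}$ factors, the quadratic form collapses to a Laplacian built from $\LM_\U \LM_\U^\top = \DM_\U^{-1/2}\BM_\U\DM_\V^{-1}\BM_\U^\top\DM_\U^{-1/2}$, with $\LM_\U$ as defined in Eq.~\eqref{eq:L}. Concretely, I expect $\OO_g = \Tr(\ZM_\U^\top (\IM - \LM_\U\LM_\U^\top)\ZM_\U)$, using that the row sums of the homogeneous weight matrix equal the diagonal scaling, so the degree matrix of the Laplacian is the identity.

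Next I would combine the two terms. The full objective becomes $(1-\alpha)\|\ZM_\U - \XM_\U\|_F^2 + \alpha\,\Tr(\ZM_\U^\top(\IM - \LM_\U\LM_\U^\top)\ZM_\U)$. Differentiating with respect to $\ZM_\U$ and equating to zero gives $(1-\alpha)(\ZM_\U - \XM_\U) + \alpha(\IM - \LM_\U\LM_\U^\top)\ZM_\U = 0$, which rearranges to
\begin{small}
\begin{equation*}
\bigl(\IM - \alpha\,\LM_\U\LM_\U^\top\bigr)\ZM_\U = (1-\alpha)\XM_\U.
\end{equation*}
\end{small}
Since $\LM_\U = \DM_\U^{-1/2}\BM_\U\DM_\V^{-1/2}$ is a normalized adjacency matrix, its singular values are bounded by $1$, so the eigenvalues of $\LM_\U\LM_\U^\top$ lie in $[0,1]$ and those of $\alpha\,\LM_\U\LM_\U^\top$ lie in $[0,\alpha]\subset[0,1)$ for $\alpha\in[0,1)$. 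Hence $\IM - \alpha\,\LM_\U\LM_\U^\top$ is positive definite and invertible, giving the unique minimizer $\ZM_\U = (1-\alpha)(\IM - \alpha\,\LM_\U\LM_\U^\top)^{-1}\XM_\U$.

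Finally I would invoke the Neumann series: because the spectral radius of $\alpha\,\LM_\U\LM_\U^\top$ is strictly less than $1$, $(\IM - \alpha\,\LM_\U\LM_\U^\top)^{-1} = \sum_{r=0}^\infty \alpha^r(\LM_\U\LM_\U^\top)^r$, and the truncated sum in Eq.~\eqref{eq:PX} converges to this inverse as $\gamma\to\infty$. Substituting yields exactly $\ZM_\U = (1-\alpha)\sum_{r=0}^\infty \alpha^r(\LM_\U\LM_\U^\top)^r \XM_\U$, matching Eq.~\eqref{eq:PX} in the limit, which completes the proof. The main obstacle I anticipate is the matrix bookkeeping in the first paragraph: correctly showing that the edge-weight sum defining $\widehat{w}$ assembles into $\BM_\U\DM_\V^{-1}\BM_\U^\top$ and, crucially, verifying that the effective degree matrix of the homogeneous graph cancels against the $1/\sqrt{\DM_\U[i,i]}$ normalization so that the Laplacian cleanly takes the form $\IM - \LM_\U\LM_\U^\top$ rather than a more complicated normalized Laplacian; the convexity and Neumann-series steps are then routine given the spectral bound on $\LM_\U$.
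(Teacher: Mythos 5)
Your proposal is correct and follows essentially the same route as the paper's proof: rewriting $\OO_g$ as $\Tr(\ZM_\U^\top(\IM-\LM_\U\LM_\U^\top)\ZM_\U)$ by assembling $\widehat{w}$ into $\BM_\U\DM_\V^{-1}\BM_\U^\top$ and using that its row sums equal $\DM_\U[i,i]$, then solving the zero-gradient equation $(\IM-\alpha\LM_\U\LM_\U^\top)\ZM_\U=(1-\alpha)\XM_\U$ via the Neumann series. Your explicit spectral-radius justification for convergence of the series (eigenvalues of $\alpha\LM_\U\LM_\U^\top$ in $[0,\alpha]\subset[0,1)$) is a small added rigor the paper leaves implicit, but it does not change the structure of the argument.
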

In practice, we set $\gamma$ in Eq. \eqref{eq:PX} to a finite number (typically $5$) for efficiency. Intuitively, the computation of $\ZM_\U$ essentially aggregates attributes from other homogeneous nodes as per their multi-scale proximities (e.g., the strength of connections via multiple hops (at most $\gamma$ hops)) in $\G$.
As such, the feature vectors of nodes with numerous direct or indirect linkages will be more likely to be close, yielding a high MSA in Eq. \eqref{eq:s}.




\subsection{Objective Function}\label{sec:obj}
Based on the foregoing definitions of $k$-ABGC and MSA, we formulate the problem of $k$-ABGC as the following objective function:
\begin{small}
\begin{equation}\label{eq:obj1}
\min_{\C_1,\C_2,\cdots,\C_k} \sum_{\ell=1}^{k}\frac{1}{|\C_\ell|}\sum_{u_i\in \C_\ell, u_j\in \U\setminus\C_\ell}{s(u_i,u_j)},
\end{equation}
\end{small}
More precisely, Eq. \eqref{eq:obj1} is to identify $k$ disjoint clusters $\C_1,\C_2,\cdots,\C_k$ in $\U$ such that the average MSA of two nodes in different clusters is low. Meanwhile, with this optimization objective, the average MSA of any two nodes in the same cluster will be maximized; in other words, nodes within the same cluster are tight-knit.


\eat{
\begin{example}[\bf A Running Example] Suppose that the ABG $\G$ in Figure \ref{fig:ABGC} is unweighted, i.e., all edge weights $w(u_i,u_j)$ are 1. During preprocessing, the attributes of nodes in $\U$ are converted into 3-dimensional vectors $\XM_\U$ as in Figure \ref{fig:XU}. Assume that the parameters $\alpha$ and $\gamma$ in Eq. \eqref{eq:PX} are $0.5$ and $5$, respectively, and the number $k$ of clusters is $3$. Figure \ref{fig:ZU} displays the feature vectors for researchers $u_1$ to $u_7$ obtained by adopting the attribute aggregation in Eq. \eqref{eq:PX} and imposing the normalization in Eq. \eqref{eq:Z}. We obtain the MSA values of every two researchers in Figure \ref{fig:SU} using Eq. \eqref{eq:s}. From Figure \ref{fig:SU}, it can be observed that the nodes with the highest MSA w.r.t. $u_1$-$u_7$ (excluding themselves) are $u_2$, $u_1$, $u_4$, $u_5$, $u_4$, $u_7$, $u_6$, respectively. This implies a partition of researchers $u_1$-$u_7$ into: $\C_1={u_1,u_2}$, $\C_2={u_3,u_4,u_5}$, and $\C_3={u_6,u_7}$, optimizing the objective in Eq. \eqref{eq:obj1} (i.e., a maximization of the average intra-cluster MSA and a minimization of the average inter-cluster MSA).
\begin{figure}[!h]
\vspace{-4mm}
\centering
\begin{small}
\subfloat[$\XM_\U$]{
\begin{tikzpicture}
\begin{axis}[
    axis line style={draw=none},
    width=2.5cm,
    height=3.6cm,
    colorbar,
    colorbar/width=2.5mm,
    colormap={blackwhite}{gray(0cm)=(0.9); gray(1cm)=(0.4)},
    ytick={0,1,2,3,4,5,6},
    yticklabels={$u_1$, $u_2$, $u_3$, $u_4$, $u_5$, $u_6$, $u_7$}
]
\addplot[matrix plot, point meta=explicit]
    coordinates {
    (0,0) [0.4] (1,0) [0.3] (2,0) [0.6]
    
    (0,1) [0.4] (1,1) [0.3] (2,1) [0.6]
    
    (0,2) [0.8] (1,2) [0.8] (2,2) [0.9]
    
    (0,3) [0.8] (1,3) [0.6] (2,3) [0.5]
    
    (0,4) [0.8] (1,4) [0.6] (2,4) [0.5]
    
    (0,5) [0.8] (1,5) [0.8] (2,5) [0.9]
    
    (0,6) [0.4] (1,6) [0.8] (2,6) [1.0]
 }; 
\end{axis}
\end{tikzpicture}\label{fig:XU}
}%
\subfloat[$\hat{\ZM}_\U$]{
\begin{tikzpicture}
\begin{axis}[
    axis line style={draw=none},
    width=2.5cm,
    height=3.6cm,
    colorbar,
    colorbar/width=2.5mm,
    colormap={blackwhite}{gray(0cm)=(0.9); gray(1cm)=(0.4)},
    ytick={0,1,2,3,4,5,6},
    yticklabels={$u_1$, $u_2$, $u_3$, $u_4$, $u_5$, $u_6$, $u_7$}
]
\addplot[matrix plot, point meta=explicit]
    coordinates {
    (0,0) [0.53] (1,0) [0.43] (2,0) [0.72]
    
    (0,1) [0.54] (1,1) [0.43] (2,1) [0.72]
    
    (0,2) [0.58] (1,2) [0.54] (2,2) [0.61]
    
    (0,3) [0.67] (1,3) [0.54] (2,3) [0.52]
    
    (0,4) [0.68] (1,4) [0.54] (2,4) [0.49]
    
    (0,5) [0.52] (1,5) [0.56] (2,5) [0.64]
    
    (0,6) [0.37] (1,6) [0.59] (2,6) [0.72]
 }; 
\end{axis}
\end{tikzpicture}\label{fig:ZU}
}%
\subfloat[$s(u_i,u_j)$]{
\begin{tikzpicture}
\begin{axis}[
    axis line style={draw=none},
    width=3.5cm,
    height=3.5cm,
    colorbar,
    colorbar/width=2.5mm,
    colormap={blackwhite}{gray(0cm)=(0.9); gray(1cm)=(0.4)},
    ytick={0,1,2,3,4,5,6},
    yticklabels={$u_1$, $u_2$, $u_3$, $u_4$, $u_5$, $u_6$, $u_7$},
    xtick={0,1,2,3,4,5,6},
    xticklabels={$u_1$, $u_2$, $u_3$, $u_4$, $u_5$, $u_6$, $u_7$},
    xticklabel style={font=\tiny},
    yticklabel style={font=\scriptsize}
]
\addplot[matrix plot, point meta=explicit]
coordinates {
(0,0)	[0.146]	(1,0)	[0.146]	(2,0)	[0.143]	(3,0)	[0.14]	(4,0)	[0.139]	(5,0)	[0.143]	(6,0)	[0.143]

(0,1)	[0.146]	(1,1)	[0.145]	(2,1)	[0.143]	(3,1)	[0.141]	(4,1)	[0.14]	(5,1)	[0.143]	(6,1)	[0.143]

(0,2)	[0.142]	(1,2)	[0.142]	(2,2)	[0.146]	(3,2)	[0.146]	(4,2)	[0.145]	(5,2)	[0.143]	(6,2)	[0.142]

(0,3)	[0.14]	(1,3)	[0.141]	(2,3)	[0.144]	(3,3)	[0.146]	(4,3)	[0.147]	(5,3)	[0.143]	(6,3)	[0.138]

(0,4)	[0.139]	(1,4)	[0.14]	(2,4)	[0.144]	(3,4)	[0.147]	(4,4)	[0.147]	(5,4)	[0.142]	(6,4)	[0.137]

(0,5)	[0.143]	(1,5)	[0.143]	(2,5)	[0.144]	(3,5)	[0.143]	(4,5)	[0.142]	(5,5)	[0.147]	(6,5)	[0.146]

(0,6)	[0.143]	(1,6)	[0.143]	(2,6)	[0.142]	(3,6)	[0.138]	(4,6)	[0.137]	(5,6)	[0.146]	(6,6)	[0.148]
}; 
\end{axis}
\end{tikzpicture}\label{fig:SU}
}%
\end{small}
\vspace{-3mm}
\caption{A Running Example.} \label{fig:toy-MSA}
\end{figure}

\end{example}
}

According to \cite{shi2000normalized}, Eq. \eqref{eq:obj1} is an NP-complete combinatorial optimization problem.  Hence, the exact solution to Eq. \eqref{eq:obj1} is computationally infeasible when $\U$ contains a large number of nodes.
Moreover, the direct optimization of Eq. \eqref{eq:obj1} demands materializing $s(u_i,u_j)$ of every node pairs in $\U\times \U$. As such, deriving an approximate solution by optimizing Eq. \eqref{eq:obj1} with even a handful of epochs entails an $O(|\EDG|\cdot |\U|\cdot d_\U)$ computational cost and a quadratic space overhead $O(|\U|^2)$, rendering it incompetent for large ABGs.

\section{The \algo Algorithm}\label{sec:algo}
To address the above-said challenges, this section presents our {\em \underline{T}hree-\underline{P}hase \underline{O}ptimization} framework (\algo) to $k$-ABGC computation based on Eq. \eqref{eq:obj1} without explicitly constructing the MSA matrix.

\subsection{Synoptic Overview}
At a high level, \algo draws inspiration from the equivalence between the optimization objectives in Eq. \eqref{eq:obj1} and Eq. \eqref{eq:obj3}, as Lemma~\ref{lem:obj3}. 
\begin{lemma}\label{lem:obj3}
Eq. \eqref{eq:obj1} is equivalent to the following objective:
\begin{equation}\label{eq:obj3}
\min_{\YM\ge 0, \HM\ge 0}\|\RM-\YM\HM^{\top}\|^2_F\ \text{s.t. $\YM$ is an NCI matrix.}
\end{equation}
\end{lemma}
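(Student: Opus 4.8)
The plan is to show that both problems collapse to the same \emph{association--maximization} over normalized cluster indicator (NCI) matrices, namely $\max_{\YM}\Tr(\YM^\top\SM\YM)$, where $\SM\in\mathbb{R}^{|\U|\times|\U|}$ is the MSA matrix with $\SM[i,j]=s(u_i,u_j)$ and $\YM$ ranges over NCI matrices, i.e.\ $\YM[i,\ell]=1/\sqrt{|\C_\ell|}$ when $u_i\in\C_\ell$ and $0$ otherwise. Two properties of such $\YM$ are used throughout: $\YM\ge 0$, and its columns are orthonormal, so $\YM^\top\YM=\IM$. A preliminary observation is that $\SM$ is symmetric positive semidefinite: writing $s(u_i,u_j)=e^{\widehat{\ZM}_\U[i]\cdot\widehat{\ZM}_\U[j]}/\sqrt{d_i d_j}$ exhibits it as a symmetric normalization of the exponential-inner-product kernel $e^{\langle\cdot,\cdot\rangle}$, which is PSD; hence $\SM=\RM\RM^\top$ for some $\RM$, and the random features realize exactly the approximation $\RM[i]\cdot\RM[j]\approx s(u_i,u_j)$ recorded for $\RM$.

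First I would dispatch the factorization side, Eq.~\eqref{eq:obj3}. Fixing an NCI $\YM$ and minimizing over $\HM$, I use $\YM^\top\YM=\IM$ to complete the square:
\[
\|\RM-\YM\HM^\top\|_F^2=\|\RM\|_F^2-\|\RM^\top\YM\|_F^2+\|\HM-\RM^\top\YM\|_F^2 .
\]
The first two terms are independent of $\HM$, so the minimizer is $\HM^\star=\RM^\top\YM$, which is feasible under $\HM\ge 0$ because $\RM\ge 0$ and $\YM\ge 0$ in our construction. Substituting back and using $\RM\RM^\top=\SM$ gives the minimum value $\|\RM\|_F^2-\Tr(\YM^\top\SM\YM)$. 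Since $\|\RM\|_F^2$ is constant, Eq.~\eqref{eq:obj3} is equivalent to $\max_{\YM}\Tr(\YM^\top\SM\YM)$ over NCI matrices.

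Next I would rewrite the combinatorial objective Eq.~\eqref{eq:obj1} in the same trace language. Let $\DM$ be the diagonal degree matrix of $\SM$, $\DM[i,i]=\sum_{u_j\in\U}s(u_i,u_j)$. A direct expansion of the $\ell$-th summand shows $\tfrac{1}{|\C_\ell|}\sum_{u_i\in\C_\ell,\,u_j\notin\C_\ell}s(u_i,u_j)=(\YM^\top\DM\YM)[\ell,\ell]-(\YM^\top\SM\YM)[\ell,\ell]$, so summing over $\ell$ yields Eq.~\eqref{eq:obj1} $=\Tr(\YM^\top\DM\YM)-\Tr(\YM^\top\SM\YM)$. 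Comparing with the previous paragraph, the two problems coincide precisely when $\Tr(\YM^\top\DM\YM)$ is a partition-independent constant. This is the step I expect to be the main obstacle: it is where the symmetric normalization in Eq.~\eqref{eq:s} earns its keep, rendering the degrees $\DM[i,i]$ (uniformly) constant so that $\Tr(\YM^\top\DM\YM)=\sum_{\ell}\tfrac{1}{|\C_\ell|}\sum_{u_i\in\C_\ell}\DM[i,i]$ collapses to a fixed multiple of $k$. Establishing this normalization identity cleanly---together with verifying the non-negativity feasibility of $\HM^\star$ used above---is the crux; once it is in hand, minimizing Eq.~\eqref{eq:obj1} reduces to $\max_{\YM}\Tr(\YM^\top\SM\YM)$, matching Eq.~\eqref{eq:obj3} and closing the equivalence.
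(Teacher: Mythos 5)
Your proposal follows essentially the same route as the paper's own proof: the paper likewise reduces Eq.~\eqref{eq:obj3} to $\max_{\YM}\Tr(\YM^{\top}\SM\YM)$ by substituting the minimizer $\HM=\RM^{\top}\YM$ (derived via a zero-gradient argument rather than your completion of the square), and reduces Eq.~\eqref{eq:obj1} to $\min_{\YM}\Tr\left(\YM^{\top}(\SM_d-\SM)\YM\right)$ through the very same cut-to-Laplacian identity (its auxiliary Lemma~\ref{lem:obj2}). The step you single out as the crux---discarding the degree term $\Tr(\YM^{\top}\SM_d\YM)$, and likewise the nonnegativity feasibility of the optimal $\HM$---is treated with exactly the same looseness in the paper, which simply asserts that the objective ``can be further simplified'' to the trace maximization, so your argument is neither more nor less complete than the paper's own.
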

Specifically, if we can identify a matrix $\RM$ such that $\RM[i]\cdot \RM[j]= s(u_i,u_j)$ $\forall{u_i.u_j\in \U}$, the computation of $k$ non-overlapping clusters $\C_1$, $\C_2$, $\cdots$, $\C_k$ towards optimizing Eq. \eqref{eq:obj1} is equivalent to decomposing $\RM$ into two non-negative matrices $\YM$ and $\HM$, where $\YM$ represents a {\em normalized cluster indicator} (NCI) matrix $\YM\in \mathbb{R}^{|\U|\times k}$, as defined in Eq. \eqref{eq:Y}.
\begin{equation}\label{eq:Y}
\YM[i,\ell]=
\begin{cases}
\frac{1}{\sqrt{|\C_\ell|}}\quad \text{if $u_i$ belongs to in cluster $\C_\ell$},\\
0\quad \text{otherwise}.
\end{cases}
\end{equation}
According to Eq. \eqref{eq:Y}, for each node $u_i\in \U$, its corresponding vector $\YM[i]$ in the NCI matrix comprises solely one non-zero entry $\YM[i,\ell]$ indicating the clustering membership of $u_i$, and the value should be ${1}/{\sqrt{|\C_\ell|}}$. This characteristic ensures that $\YM$ is column-orthogonal, i.e., $\YM^{\top}\YM=\IM$. However, this constraint on $\YM$ renders the factorization of $\RM$ hard to converge. Instead of directly computing the exact $\YM$, we employ a two-step approximation strategy. More specifically, \algo first builds a $|\U|\times k$ matrix $\FY$ (a continuous version of $\YM$) which minimizes the factorization loss in Eq. \eqref{eq:obj4}:
\begin{equation}\label{eq:obj4}
\min_{\FY\ge 0, \HM\ge 0}\|\RM-\FY\HM^{\top}\|^2_F\ \text{s.t. $\FY^{\top}\FY=\IM$},
\end{equation}
in which the constraint on $\YM$ in Eq. \eqref{eq:obj3} is relaxed to be $\FY\ge 0$ and $\FY^{\top}\FY=\IM$. Afterward, the task is to transform $\FY$ into an NCI matrix $\YM$ by minimizing their difference about Eq. \eqref{eq:obj3}.


\begin{figure}[!t]
\centering
\includegraphics[width=0.95\columnwidth]{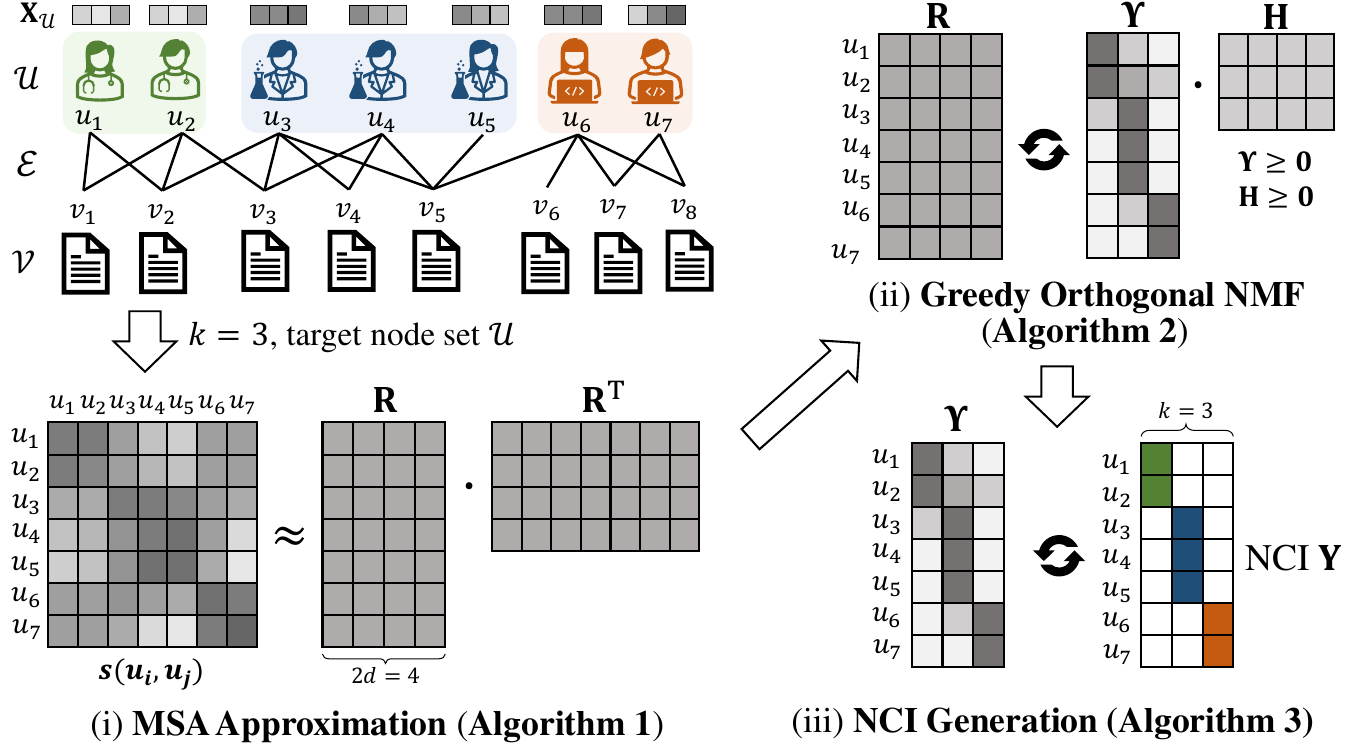}
\vspace{-3mm}
\caption{Overview of \algo}\label{fig:overview}
\end{figure}

As outlined in Figure \ref{fig:overview}, given an ABG $\G$, the number of $k$ of clusters, and the node set $\U$ to be partitioned as input, \algo outputs an approximate solution to the $k$-ABGC problem in Eq. \eqref{eq:obj1} through three phases: (i) constructing a low-dimensional matrix $\RM$ such that $\RM[i]\cdot \RM[j]\approx s(u_i,u_j)$ $\forall{u_i.u_j\in \U}$ without explicitly materializing the MSA of all node pairs (Algorithm~\ref{alg:aa}, Section \ref{subsec:MSAAprox}); (ii) factorizing $\RM$ as per Eq. \eqref{eq:obj4} to create a $\U\times k$ non-negative column-orthogonal matrix $\FY$ (Algorithm~\ref{alg:onmf}, Section \ref{sec:ONMF}); and (iii) effectively converting $\FY$ into an NCI $\YM$ (Algorithm~\ref{alg:round}, Section \ref{sec:NCI-gen}). In what follows, we elaborate on the algorithmic details of these three subroutines. Due to space limit, we defer 
the complexity analysis of them and \algo to Appendix~\ref{sec:complexity}.







\subsection{MSA Approximation via Random Features}\label{subsec:MSAAprox}

Algorithm~\ref{alg:aa} illustrates the pseudo-code of linearizing the approximate computation of MSA in Eq. \eqref{eq:s} as the matrix product $\RM\cdot\RM^{\top}$ with matrix $\RM$. The fundamental idea is to leverage and tweak the {\em random features} \cite{rahimi2007random,yu2016orthogonal} technique designed for approximating the Gaussian kernel $e^{-\|\mathbf{x}-\mathbf{y}\|^2/2}$ of any vectors $\mathbf{x}$ and $\mathbf{y}$. 

After taking as input the ABG $\G$ and parameters $\alpha,\gamma$, Algorithm~\ref{alg:aa} begins by calculating $\LM_\U$ according to Eq. \eqref{eq:L} and initializing $\ZM_\U$ as $\alpha \XM_\U$ (Lines 1-2). At Lines 3-4, we update $\ZM_\U$ via $\gamma$ iterations of the following matrix multiplication:
\begin{equation}\label{eq:update-Z}
\ZM_\U \gets \alpha\cdot(\XM_\U + \LM_\U\cdot (\LM_\U^{\top}\ZM_\U) ).
\end{equation}
Particularly, we structure the matrix multiplication $\LM_\U\LM_\U^{\top}\ZM_\U$ as $\LM_\U\cdot (\LM_\U^{\top}\ZM_\U)$ in Eq. \eqref{eq:update-Z} to boost the computation efficiency. Subsequently, Algorithm~\ref{alg:aa} transforms $\ZM_\U$ into $\widehat{\ZM}_\U$ by applying an $L_2$ normalization to each row in $\ZM_\U$ (Line 5) and then proceeds to constructing $\RM$ (Lines 6-9). 

To be specific, we first generate a $d_\U \times d_\U$ Gaussian random matrix $\GM$ with every entry sampled independently from the standard normal distribution (Line 6) and then apply a QR decomposition over it to get a $d_\U \times d_\U$ orthogonal matrix $\QM$ (Line 7). The matrix $\QM$ is distributed uniformly on the Stiefel manifold, i.e., the space of all orthogonal matrices \cite{muirhead2009aspects}. Next, Algorithm~\ref{alg:aa} calculates $\RM^{\prime}$ by
\begin{small}
\begin{equation}\label{eq:R-prime}
\RM^{\prime} = \sqrt{\frac{e}{d_\U}}\cdot \left(sin(\widehat{\ZM}^{\circ}_\U) \mathbin\Vert cos(\widehat{\ZM}^{\circ}_\U)\right)\in \mathbb{R}^{|\U| \times 2d_\U},
\end{equation}
\end{small}
where $\widehat{\ZM}^{\circ}_\U= \sqrt{d_\U}\cdot \widehat{\ZM}_\U\cdot {\QM}^{\top}$ and $\mathbin\Vert$ represents the horizontal concatenation operator for matrices (Line 8). Finally, in Line 9, the matrix $\RM$ is obtained by normalizing each row of $\RM^{\prime}$ as
\begin{small}
\begin{equation}\label{eq:Ri}
\RM[i] = \frac{\RM^{\prime}[i]}{\sqrt{\RM^{\prime}[i]\cdot \mathbf{r}}}\in \mathbb{R}^{2d_\U}\ \text{where $\mathbf{r} = \sum_{u_\ell\in \U}{\RM^{\prime}[\ell]}$}.
\end{equation}
\end{small}

\begin{algorithm}[!t]
\caption{MSA Approximation}\label{alg:aa}
\KwIn{An ABG $\G=(\U \cup \V, \EDG, \XM_\U)$, target node set $\U$, the balance coefficient $\alpha$, the number $\gamma$ of iterations}
\KwOut{Matrix $\RM$}
Calculate $\LM_\U$ according to Eq. \eqref{eq:L}\;
$\ZM_\U \gets \alpha\XM_\U$\;
\For{$r\gets 1$ to $\gamma$}{
Update $\ZM_\U$ according to Eq. \eqref{eq:update-Z}\;
}
Normalize $\ZM_\U$ as $\widehat{\ZM}_\U$ by Eq. \eqref{eq:Z}\;
Sample a Gaussian random matrix $\GM\in \mathbb{R}^{d_\U \times d_\U}$\;
Compute $\QM$ by a QR decomposition over $\GM$\;
Compute $\RM^{\prime}$ according to Eq. \eqref{eq:R-prime}\;
\lFor{$u_i\in \U$}{Compute $\RM[i]$ according to Eq. \eqref{eq:Ri}}
\Return{$\RM$}\;
\end{algorithm}

\begin{theorem}\label{lem:RR}
For any two nodes $u_i, u_j \in \U$, if $\RM$ is the output of Algorithm~\ref{alg:aa}, then the following inequality holds:
\begin{small}
\begin{equation*}
\frac{1 - \frac{17}{16d_\U^2} - \frac{1}{4d_\U}}{1 + \frac{17}{16d_\U^2} + \frac{1}{4d_\U}}\cdot s(u_i,u_j) \le \mathbb{E}[\RM[i]\cdot \RM[j]] \le \frac{1 + \frac{17}{16d_\U^2} + \frac{1}{4d_\U}}{1 - \frac{17}{16d_\U^2} - \frac{1}{4d_\U}}\cdot s(u_i,u_j)
\end{equation*}
\end{small}
\end{theorem}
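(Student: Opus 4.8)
The plan is to reduce the claimed two-sided bound to a single statement about the bias of the orthogonal random feature map, and then propagate that bias through the normalization in Eq.~\eqref{eq:Ri}. Writing $\mathbf{q}_m$ for the $m$-th row of the Haar-distributed orthogonal matrix $\QM$ and using the row construction $\widehat{\ZM}^{\circ}_\U[i]=\sqrt{d_\U}\,\widehat{\ZM}_\U[i]\QM^{\top}$, the identity $\sin a\sin b+\cos a\cos b=\cos(a-b)$ collapses the feature inner product to
\begin{equation*}
\RM^{\prime}[i]\cdot\RM^{\prime}[j]=\frac{e}{d_\U}\sum_{m=1}^{d_\U}\cos\!\left(\sqrt{d_\U}\,\big(\widehat{\ZM}_\U[i]-\widehat{\ZM}_\U[j]\big)\cdot\mathbf{q}_m\right).
\end{equation*}
Because each row $\widehat{\ZM}_\U[i]$ is $L_2$-normalized, $\widehat{\ZM}_\U[i]\cdot\widehat{\ZM}_\U[j]=1-\tfrac12\|\widehat{\ZM}_\U[i]-\widehat{\ZM}_\U[j]\|^2$, so the MSA numerator satisfies $e^{\widehat{\ZM}_\U[i]\cdot\widehat{\ZM}_\U[j]}=e\cdot e^{-\|\widehat{\ZM}_\U[i]-\widehat{\ZM}_\U[j]\|^2/2}$; this exposes the target as a scaled Gaussian kernel and accounts for the $\sqrt{e/d_\U}$ prefactor in Eq.~\eqref{eq:R-prime}.

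Next I would take the expectation over $\QM$. Since $\QM$ is uniform on the orthogonal group, each row $\mathbf{q}_m$ is marginally uniform on the unit sphere $S^{d_\U-1}$, so by linearity
\begin{equation*}
\mathbb{E}\big[\RM^{\prime}[i]\cdot\RM^{\prime}[j]\big]=e\cdot\mathbb{E}_{\mathbf{q}}\!\left[\cos\!\left(\sqrt{d_\U}\,(\widehat{\ZM}_\U[i]-\widehat{\ZM}_\U[j])\cdot\mathbf{q}\right)\right],
\end{equation*}
and the joint dependence among the rows is irrelevant at the level of the mean. By rotational symmetry this reduces to a one-dimensional integral in $\delta=\|\widehat{\ZM}_\U[i]-\widehat{\ZM}_\U[j]\|$ over a single sphere coordinate, whose even moments are $\mathbb{E}[q_1^{2k}]=(2k-1)!!/\prod_{t=0}^{k-1}(d_\U+2t)$. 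Expanding the cosine in its power series then gives $\mathbb{E}_{\mathbf{q}}[\cos(\cdots)]=\sum_{k\ge0}\frac{(-\delta^2/2)^k}{k!}c_k$ with $c_k=\prod_{t=0}^{k-1}\frac{d_\U}{d_\U+2t}$, to be compared against $e^{-\delta^2/2}=\sum_{k\ge0}\frac{(-\delta^2/2)^k}{k!}$.

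The heart of the argument is to show that this series lies within a multiplicative factor $1\pm\varepsilon$ of $e^{-\delta^2/2}$, with $\varepsilon=\tfrac{17}{16d_\U^2}+\tfrac1{4d_\U}$, uniformly over the admissible range of $\delta$. The coefficients satisfy $c_0=c_1=1$ and $0<c_k\le1$, so the two series agree through first order and the discrepancy is driven by $1-c_k=1-\prod_{t}(1+2t/d_\U)^{-1}$; I would carry this to second order in $1/d_\U$ (which is exactly where the $1/d_\U$ and $1/d_\U^2$ terms of $\varepsilon$ originate), taking care to exploit the cancellation in the alternating series rather than bounding term-by-term in absolute value. Granting this, $\mathbb{E}[\RM^{\prime}[i]\cdot\RM^{\prime}[j]]\in[(1-\varepsilon)e^{\widehat{\ZM}_\U[i]\cdot\widehat{\ZM}_\U[j]},(1+\varepsilon)e^{\widehat{\ZM}_\U[i]\cdot\widehat{\ZM}_\U[j]}]$, while each self-product is \emph{exact}, $\RM^{\prime}[\ell]\cdot\RM^{\prime}[\ell]=\frac{e}{d_\U}\cdot d_\U=e=e^{\|\widehat{\ZM}_\U[\ell]\|^2}$, matching the diagonal term of $\sum_{u_\ell}e^{\widehat{\ZM}_\U[i]\cdot\widehat{\ZM}_\U[\ell]}$. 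Hence the normalizers $\RM^{\prime}[i]\cdot\mathbf{r}=\sum_{u_\ell}\RM^{\prime}[i]\cdot\RM^{\prime}[\ell]$ inherit the same $(1\pm\varepsilon)$ sandwich relative to the MSA denominators.

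Finally I would assemble the ratio. Writing $\RM[i]\cdot\RM[j]=\RM^{\prime}[i]\cdot\RM^{\prime}[j]/\sqrt{(\RM^{\prime}[i]\cdot\mathbf{r})(\RM^{\prime}[j]\cdot\mathbf{r})}$, its structure mirrors $s(u_i,u_j)$ exactly, and substituting the worst-case multiplicative factors (a $1+\varepsilon$ numerator against $1-\varepsilon$ normalizers, and vice versa) yields the claimed bounds $\tfrac{1-\varepsilon}{1+\varepsilon}s(u_i,u_j)$ and $\tfrac{1+\varepsilon}{1-\varepsilon}s(u_i,u_j)$. The two places I expect to fight are: (i) pinning down the exact constant $\tfrac{17}{16d_\U^2}+\tfrac1{4d_\U}$ from the resulting ${}_0F_1$/Bessel-type series, since a naive absolute-value bound loses a factor and reintroduces an $e^{\delta^2}$ blow-up over $\delta\in[0,2]$, so the cancellation must be tracked carefully; and (ii) the fact that $\mathbb{E}$ of the ratio is not the ratio of the $\mathbb{E}$'s, which must be bridged either by a concentration argument for the normalizers (dominated by the exact diagonal term $e$) or by carrying the per-realization bounds through the normalization.
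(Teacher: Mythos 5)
Your plan follows the same route as the paper's proof: collapse $\RM^{\prime}[i]\cdot\RM^{\prime}[j]$ via $\sin a\sin b+\cos a\cos b=\cos(a-b)$ into $\frac{e}{d_\U}\sum_{\ell}\cos(\widehat{\QM}[\ell]\cdot\mathbf{z})$ with $\mathbf{z}=\widehat{\ZM}_\U[i]-\widehat{\ZM}_\U[j]$ and $\widehat{\QM}=\sqrt{d_\U}\,\QM$, bound $\mathbb{E}[\cos(\widehat{\QM}[\ell]\cdot\mathbf{z})]$ multiplicatively against $e^{-\|\mathbf{z}\|^2/2}$, use unit normalization to rewrite $e\cdot e^{-\|\mathbf{z}\|^2/2}=e^{\widehat{\ZM}_\U[i]\cdot\widehat{\ZM}_\U[j]}$, sum over $u_\ell$ to transfer the same sandwich to the normalizers $\RM^{\prime}[i]\cdot\mathbf{r}$, and assemble the ratio. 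The genuine gap is that the single hard step---what you call the heart of the argument---is deferred with ``granting this.'' The paper does not derive that estimate either: it imports it wholesale as Lemma 5 of \cite{yu2016orthogonal}, namely $\bigl|\mathbb{E}[\cos(\widehat{\QM}[\ell]\cdot\mathbf{z})]/e^{-\|\mathbf{z}\|^2/2}-(1-\|\mathbf{z}\|^4/(4d_\U))\bigr|\le \|\mathbf{z}\|^4(\|\mathbf{z}\|^4+8\|\mathbf{z}\|^2+8)/(16d_\U^2)$, and then sets $\|\mathbf{z}\|^2\le 1$, so that the constant $17$ is literally $1+8+8$ and the $\tfrac{1}{4d_\U}$ term is $\|\mathbf{z}\|^4/(4d_\U)$ at its maximum. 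Your proposed sphere-moment/power-series computation is precisely the content of that cited lemma, so completing your proof means either reproving it with explicit constants (nontrivial; this is where all the work lies) or citing it as the paper does. Moreover, your plan aims for uniformity over $\delta\in[0,2]$, and over that range the advertised constants are unobtainable by this route: at $\|\mathbf{z}\|^2=4$ the cited bound degrades to $\tfrac{4}{d_\U}+\tfrac{56}{d_\U^2}$. The stated theorem constants rest on the paper's assertion that $\|\mathbf{z}\|^2\in[0,1]$ (itself questionable, since unit rows only force $\|\mathbf{z}\|^2\le 4$, or $\le 2$ for nonnegative attribute vectors), so any completed version of your argument must also restrict to that regime.

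On the other hand, two of your side observations are sharper than the paper's own treatment. The exactness of the self-products, $\RM^{\prime}[\ell]\cdot\RM^{\prime}[\ell]=e$, is correct and not noted in the paper. And your worry (ii) is well founded: the paper's final step substitutes the bounds on $\mathbb{E}[\RM^{\prime}[i]\cdot\RM^{\prime}[j]]$ and $\mathbb{E}[\RM^{\prime}[i]\cdot\mathbf{r}]$ directly into $\mathbb{E}\bigl[\RM^{\prime}[i]\cdot\RM^{\prime}[j]/\sqrt{(\RM^{\prime}[i]\cdot\mathbf{r})\,(\RM^{\prime}[j]\cdot\mathbf{r})}\bigr]$, i.e., it treats the expectation of a ratio as a ratio of expectations without any concentration or per-realization argument. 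You flagged this honestly but left it unresolved; supplying that bridge (concentration of the normalizers around their means, or carrying deterministic per-realization bounds through the normalization) is exactly what would be needed to make your proof---or the paper's---fully rigorous on this point.
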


Theorem \ref{lem:RR} indicates that $\mathbb{E}[\RM[i]\cdot \RM[j]]$ serves as an accurate estimator of $s(u_i,u_j)$, exhibiting extremely low bias, particularly because $d_\U$ often exceeds hundreds in practical scenarios.

\subsubsection{\bf SVD-based Attribute Dimension Reduction}\label{sec:d}
Although Algorithm~\ref{alg:aa} circumvents the need to construct the MSA for all node pairs, it remains tenaciously challenging when dealing with ABGs with vast attribute sets, i.e., $d_\U$ being large.
Recall that the major computation expenditure in Algorithm~\ref{alg:aa} lies at Lines 3-4 and Lines 7-8, which need $O(\gamma\cdot |\EDG|\cdot d_\U)$ and $O(d_\U^3+|\U|\cdot d_\U^2)$ time, respectively. As a result, when $d_\U$ is high, e.g., $d_\U = O(|\U|)$, the computational complexity of Algorithm~\ref{alg:aa} increases dramatically to be cubic, rendering it impractical for large-scale ABGs.

To address this, we refine the input attribute vectors $\XM_\U$ by reducing their dimension from $d_\U$ to a much smaller constant $d$ ($d\ll d_\U$). This approach aims to ensure that the $d$-dimensional approximation $\XM^{\prime}_\U$ of $\XM_\U$ still accurately preserves the MSA as per Eq. \eqref{eq:s}. This adjustment reduces the computational cost to a linear time complexity of $O(\gamma\cdot |\EDG| + |\U|)$ since $d$ is a constant.
To realize this idea, we first apply the top-$d$ {\em singular value decomposition} (SVD) over $\XM_\U$ to produce the decomposition result $\XM_\U\approx \GaM\SGVM\PsiM^{\top}$. Utilizing the column-orthogonal (semi-unitary) property of $\PsiM$, i.e., $\PsiM^{\top}\PsiM=\IM$, we have $\XM_\U\XM_\U^{\top}\approx \GaM\SGVM\PsiM^{\top}\PsiM\SGVM\GaM^{\top}=\GaM\SGVM^2\GaM^{\top}$,
%
implying 
\begin{equation}\label{eq:Xd}
\XM^{\prime}_\U=\GaM\SGVM\in \mathbb{R}^{|\U|\times d},
\end{equation}
which can be employed as a low-dimensional substitute of $\XM_\U$ input to Algorithm~\ref{alg:aa}.
Along this line, we can derive a low-dimensional version $\ZM_\U^{\prime}$ of feature vectors $\ZM_\U$ using the iterative process at Lines 2-4 in Algorithm~\ref{alg:aa} by simply replacing $\XM_\U$ as $\XM^{\prime}_\U$, i.e.,
\begin{small}
\begin{equation*}
\ZM_\U^{\prime}=(1-\alpha)\sum_{r=0}^{\infty}{\alpha^r \cdot  \left(\LM_\U\LM_\U^{\top}\right)^{r}}\XM^{\prime}_\U.
\end{equation*}
\end{small}

\begin{lemma}\label{lem:FF-ZZ}
Let $\GaM\SGVM\PsiM^{\top}$ be the exact top-$d$ SVD of $\XM_\U$.
\begin{small}
\begin{equation*}
\left|{\ZM_\U^{\prime}[i]\cdot \ZM_\U^{\prime}[j]}-{\ZM_\U[i]\cdot \ZM_\U[j]}\right| \le \frac{\sigma_{d+1}^2 \sqrt{\DM_\U[i,i]\cdot \DM_\U[j,j]}}{1-\alpha}
\end{equation*}
\end{small}
holds for every two nodes $u_i,u_j\in \U$, where $\sigma_{d+1}$ is the $(d+1)$-th largest singular value of $\XM_\U$.
\end{lemma}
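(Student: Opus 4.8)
The plan is to reduce the claimed estimate to a single bilinear form in the propagation operator underlying Eq.~\eqref{eq:PX} and the tail of the SVD of $\XM_\U$. Write $\mathbf{P}=(1-\alpha)\sum_{r=0}^{\infty}\alpha^{r}(\LM_\U\LM_\U^{\top})^{r}$, so that $\ZM_\U=\mathbf{P}\XM_\U$ and $\ZM^{\prime}_\U=\mathbf{P}\XM^{\prime}_\U$ with $\XM^{\prime}_\U=\GaM\SGVM$. Since $\LM_\U\LM_\U^{\top}$ is symmetric, so is $\mathbf{P}$, and the feature inner products read $\ZM_\U[i]\cdot\ZM_\U[j]=\mathbf{P}[i]\,\XM_\U\XM_\U^{\top}\,\mathbf{P}[j]^{\top}$, with the analogous identity for $\ZM^{\prime}_\U$ carrying $\XM^{\prime}_\U{\XM^{\prime}_\U}^{\top}$ in the middle. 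Subtracting, the quantity to be bounded equals $\bigl|\mathbf{P}[i]\,\mathbf{E}\,\mathbf{P}[j]^{\top}\bigr|$, where $\mathbf{E}=\XM_\U\XM_\U^{\top}-\XM^{\prime}_\U{\XM^{\prime}_\U}^{\top}$.

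First I would control $\mathbf{E}$. Expanding $\XM_\U$ in its singular system, the top-$d$ truncation satisfies $\XM^{\prime}_\U{\XM^{\prime}_\U}^{\top}=\GaM\SGVM^{2}\GaM^{\top}=\sum_{\ell\le d}\sigma_\ell^{2}\gamma_\ell\gamma_\ell^{\top}$, so that $\mathbf{E}=\sum_{\ell>d}\sigma_\ell^{2}\gamma_\ell\gamma_\ell^{\top}$ is positive semidefinite with spectral norm exactly $\sigma_{d+1}^{2}$. Writing $\mathbf{E}=\mathbf{E}^{1/2}\mathbf{E}^{1/2}$ and applying Cauchy--Schwarz to $\mathbf{E}^{1/2}\mathbf{P}[i]^{\top}$ and $\mathbf{E}^{1/2}\mathbf{P}[j]^{\top}$ (equivalently, submultiplicativity) then yields $\bigl|\mathbf{P}[i]\,\mathbf{E}\,\mathbf{P}[j]^{\top}\bigr|\le\|\mathbf{E}\|_2\,\|\mathbf{P}[i]\|\,\|\mathbf{P}[j]\|\le\sigma_{d+1}^{2}\,\|\mathbf{P}[i]\|\,\|\mathbf{P}[j]\|$, reducing the whole lemma to a bound on the Euclidean norm of each row of $\mathbf{P}$.

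The remaining step is to show $\|\mathbf{P}[i]\|\le\sqrt{\DM_\U[i,i]/(1-\alpha)}$. Here I would exploit the structure of $\LM_\U\LM_\U^{\top}$. Because $\LM_\U=\DM_\U^{-1/2}\BM_\U\DM_\V^{-1/2}$ is the normalized bi-adjacency matrix, its largest singular value is $1$, so $\LM_\U\LM_\U^{\top}$ is PSD with spectral radius at most $1$; hence the Neumann series converges to $\mathbf{P}=(1-\alpha)(\IM-\alpha\LM_\U\LM_\U^{\top})^{-1}$, whose eigenvalues lie in $[1-\alpha,1]$. Conjugating by $\DM_\U^{\pm1/2}$ converts $\LM_\U\LM_\U^{\top}$ into the two-step random-walk matrix $\DM_\U^{-1}\WM$ with $\WM=\BM_\U\DM_\V^{-1}\BM_\U^{\top}$, whose row sums are precisely $\DM_\U[i,i]$; thus $\mathbf{P}=\DM_\U^{1/2}\mathbf{P}_{\mathrm{rw}}\DM_\U^{-1/2}$ where $\mathbf{P}_{\mathrm{rw}}=(1-\alpha)(\IM-\alpha\DM_\U^{-1}\WM)^{-1}$ is nonnegative and row-stochastic. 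Consequently $\mathbf{P}[i,j]=\sqrt{\DM_\U[i,i]/\DM_\U[j,j]}\;\mathbf{P}_{\mathrm{rw}}[i,j]$, so $\|\mathbf{P}[i]\|^{2}=\DM_\U[i,i]\sum_{j}\mathbf{P}_{\mathrm{rw}}[i,j]^{2}/\DM_\U[j,j]$, and bounding the weighted sum by $(1-\alpha)^{-1}$ gives $\|\mathbf{P}[i]\|^{2}\le\DM_\U[i,i]/(1-\alpha)$. Multiplying the two row-norm bounds together with $\sigma_{d+1}^{2}$ produces exactly the claimed estimate.

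The main obstacle is this last estimate: controlling $\sum_{j}\mathbf{P}_{\mathrm{rw}}[i,j]^{2}/\DM_\U[j,j]$. The first two steps are essentially bookkeeping together with the standard SVD-tail fact, but converting the contraction and stochasticity of $\mathbf{P}$ into a clean \emph{degree-weighted} row-norm bound—rather than the coarser $\|\mathbf{P}[i]\|\le1$ that the spectral radius alone delivers—must be done carefully, using the reversibility of the underlying walk (so that $\DM_\U[i,i]\mathbf{P}_{\mathrm{rw}}[i,j]=\DM_\U[j,j]\mathbf{P}_{\mathrm{rw}}[j,i]$) to surface the $\DM_\U[i,i]$ and $(1-\alpha)^{-1}$ factors in the stated form.
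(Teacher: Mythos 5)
Your proposal is correct in substance and shares the paper's skeleton: the paper likewise defines $\PM_\U=(1-\alpha)\sum_{r\ge 0}\alpha^r(\LM_\U\LM_\U^{\top})^r$, writes the quantity to bound as the bilinear form $\left|\PM_\U[i]\cdot\mathbf{E}\cdot\PM_\U[j]^{\top}\right|$ with $\mathbf{E}=\XM_\U\XM_\U^{\top}-\GaM\SGVM^2\GaM^{\top}$, controls $\mathbf{E}$ by $\sigma_{d+1}^2$ (via Eckart--Young; your exact-tail representation $\mathbf{E}=\sum_{\ell>d}\sigma_\ell^2\gamma_\ell\gamma_\ell^{\top}$ is an equivalent shortcut since the truncated SVD is exact), and exploits the conjugation $\PM_\U=\DM_\U^{1/2}\,\mathbf{P}_{\mathrm{rw}}\,\DM_\U^{-1/2}$ onto a nonnegative row-stochastic matrix. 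Where you genuinely diverge is the splitting of the bilinear form: the paper pairs the \emph{entrywise} bound $|\mathbf{E}[\ell,h]|\le\sigma_{d+1}^2$ with the entrywise nonnegativity of $\PM_\U$ and $L_1$ (row-sum) control of its rows, whereas you pair the \emph{spectral} bound $\|\mathbf{E}\|_2\le\sigma_{d+1}^2$ with Cauchy--Schwarz and $L_2$ row norms. Your pairing is arguably cleaner: it does not need $\PM_\U\ge 0$, and the spectral bound on the tail is exact.

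The step you flag as the ``main obstacle'' in fact closes immediately with the tool you name. The reversibility identity $\DM_\U[i,i]\,\mathbf{P}_{\mathrm{rw}}[i,j]=\DM_\U[j,j]\,\mathbf{P}_{\mathrm{rw}}[j,i]$ (a direct consequence of the symmetry of $\PM_\U$) gives
\begin{equation*}
\|\PM_\U[i]\|^2=\DM_\U[i,i]\sum_{j}\frac{\mathbf{P}_{\mathrm{rw}}[i,j]^2}{\DM_\U[j,j]}=\sum_{j}\mathbf{P}_{\mathrm{rw}}[i,j]\,\mathbf{P}_{\mathrm{rw}}[j,i]=\left(\mathbf{P}_{\mathrm{rw}}^2\right)[i,i]\le 1,
\end{equation*}
since a product of row-stochastic nonnegative matrices is again row-stochastic; alternatively, $\mathbf{P}_{\mathrm{rw}}[i,j]\le 1$ entrywise yields $\sum_j \mathbf{P}_{\mathrm{rw}}[i,j]^2/\DM_\U[j,j]\le 1$ whenever $\DM_\U[j,j]\ge 1$. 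Either way $\|\PM_\U[i]\|\,\|\PM_\U[j]\|\le 1\le\sqrt{\DM_\U[i,i]\,\DM_\U[j,j]}/(1-\alpha)$ once node degrees are at least $1$. Be aware that some lower bound on degrees is unavoidable here, but this is not a defect of your route: the paper's own $L_1$ argument silently drops the factors $1/\sqrt{\DM_\U[\ell,\ell]}$ from its sums (requiring $\DM_\U[\ell,\ell]\ge 1$), and the lemma as stated can actually fail when edge weights are so small that $\DM_\U[i,i]<1-\alpha$. With the degree normalization made explicit, your argument even delivers the slightly stronger bound $\sigma_{d+1}^2\sqrt{\DM_\U[i,i]\,\DM_\U[j,j]}$, without the $1/(1-\alpha)$ factor.
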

Lemma \ref{lem:FF-ZZ} establishes the approximation guarantee of $\ZM_\U^{\prime}$, which theoretically assures the accurate approximation of the MSA defined in Eq. \eqref{eq:s}. Aside from the capabilities of preserving MSA and reducing computation load, this SVD-based trick can surprisingly denoise attribute data for enhanced clustering by its close connection to {\em principal component analysis} (PCA), as validated by our experiments in Section \ref{sec:exp-quality}.

\subsection{Greedy Orthogonal NMF}\label{sec:ONMF}




\begin{algorithm}[!t]
\caption{Greedy Orthogonal NMF}\label{alg:onmf}
\KwIn{Matrix $\RM$, the number $k$ of clusters, the number $T_f$ of iterations}
\KwOut{Matrix $\FY$}
$\GaM, \SGVM, \PsiM \gets \mathtt{RandomizedSVD}(\RM, k)$\;
Initialize $\FY$ and $\HM$ according to Eq. \eqref{eq:init-HY}\;
\For{$t\gets 1$ to $T_f$}{
    Update $\HM[j,\ell]\ \forall{1\le j\le 2d, 1\le \ell \le k}$ by Eq. \eqref{eq:update-Q}\;
    Update $\FY[i,\ell]\ \forall{u_i\in \U, 1\le \ell \le k}$ by Eq. \eqref{eq:update-Y}\;
}
\Return{$\FY$}\;
\end{algorithm}

Upon constructing $\RM\in \mathbb{R}^{|\U|\times 2d}$ (with $d=d_\U$ if the dimension reduction from Section \ref{sec:d} is not applied) in Algorithm~\ref{alg:aa}, \algo passes it to the second phase, i.e., conducting an orthogonal non-negative matrix factorization (NMF) of $\RM$ as in Eq. \eqref{eq:obj4} to create $\FY$. 
The pseudo-code of our solver to this problem is presented in Algorithm~\ref{alg:onmf}, iteratively updating $\FY$ and $\HM$ using an alternative framework towards optimizing the objective function in Eq.~\eqref{eq:obj4}. (Lines 3-5). 
Specifically, given the number $T_f$ of iterations and initial guess of $\HM$ and $\FY$, in each iteration, we first update each $(j,\ell)$-entry ($1\le j\le 2d$ and $1\le \ell \le k$) in $\HM$ following Eq. \eqref{eq:update-Q} while fixing $\FY$, and then update $\FY[i,\ell]$ for $u_i\in \U$ and $1\le \ell \le k$ as in Eq. \eqref{eq:update-Y} with $\HM$ fixed.
\begin{small}
\begin{equation}\label{eq:update-Q}
\HM[j,\ell]= \HM[j,\ell]\cdot \frac{(\RM^{\top}\FY)[j,\ell]}{(\HM\cdot (\FY^{\top}\FY))[j,\ell]}
\end{equation}
\end{small}
\begin{small}
\begin{equation}\label{eq:update-Y}
\FY[i,\ell]= \FY[i,\ell]\cdot \sqrt{\frac{(\RM\HM)[i,\ell]}{(\FY\cdot (\FY^{\top}\cdot(\RM\HM)))[i,\ell]}}
\end{equation}
\end{small}

The above update rules for solving Eq. \eqref{eq:obj4} can be derived by utilizing the {\em auxiliary function approach} \cite{lee2000algorithms} with Lagrangian multipliers in convex optimization, whose convergence is guaranteed by the monotonicity theorem \cite{ding2006orthogonal}.
Note that we reorder the matrix multiplications $\HM\FY^{\top}\FY$ and $\FY\FY^{\top}\RM\HM$ in Eq. \eqref{eq:update-Q} and \eqref{eq:update-Y} to $\HM\cdot (\FY^{\top}\FY)$ and $\FY\cdot (\FY^{\top}\cdot(\RM\HM))$, respectively, so as to avert materializing $2d \times |\U|$ dense matrix $\HM\FY^{\top}$ and $|\U|\times |\U|$ dense matrix $\FY\FY^{\top}$. As such, the computational complexities of updating $\HM$ and $\FY$ in Eq. \eqref{eq:update-Q} and \eqref{eq:update-Y} are reduced to $O(|\U|dk+|\U|k^2)$ per iteration.


The aforementioned computation is still rather costly due to the numerous iterations needed for the convergence of $\FY$ and $\HM$, especially when $\FY$ and $\HM$ are initialized randomly.
We resort to a greedy seeding strategy to expedite convergence, as in many optimization problems. That is, we carefully select a good initialization of $\FY$ and $\HM$ in a fast but theoretically grounded manner. As described in Lines 1-2 in Algorithm~\ref{alg:aa}, 
we set $\FY$ and $\HM$ as follows:
\begin{equation}\label{eq:init-HY}
\FY = \GaM,\ \HM = \PsiM\SGVM,
\end{equation}
where $\GaM$ and $\PsiM$ are the top-$k$ left and right singular vectors of $\RM$, respectively, and $\SGVM$ is a diagonal matrix whose diagonal entries are top-$k$ singular values of $\RM$, which are obtained by invoking the {\em truncated} randomized SVD algorithm \cite{halko2011finding} with $\RM$ and $k$. Note that this routine consumes $O(|\U|dk+(\U+d)k^2)$ time \cite{halko2011finding} and can be done efficiently in practice in virtue of its randomized algorithmic design as well as the highly-optimized libraries (LAPACK and BLAS) for matrix operations under the hood.

Given the fact that singular vectors $\FY=\GaM$ are column-orthogonal, i.e., $\FY^{\top}\FY=\IM$, the Eckart-Young Theorem \cite{gloub1996matrix} (Theorem \ref{lem:eym} in Appendix \ref{sec:proofs}) pinpoints that Eq. \eqref{eq:init-HY} offers the optimal solution to Eq. \eqref{eq:obj4} when the non-negative constraints over $\FY$ and $\HM$ are relaxed. In simpler terms, Eq. \eqref{eq:init-HY} immediately gains a rough solution to our optimization objective in Eq. \eqref{eq:obj4}, thereby drastically curtailing the number of iterations needed for Lines 3-5.


\subsection{Effective NCI Generation}\label{sec:NCI-gen}


\begin{algorithm}[!t]
\caption{Effective NCI Generation}\label{alg:round}
\KwIn{Matrix $\FY$ and the number $T_g$ of iterations}
\KwOut{The NCI matrix $\YM$}
$\TM=\IM$\;
\For{$t\gets 1$ to $T_g$}{
    \For{$u_i\in \mathcal{U}$}{
        Determine $\ell^\ast$ by Eq. \eqref{eq:ell-ast}\;
        Update $\YM$ by Eq. \eqref{eq:Y-update}\;
    }
    Normalize $\YM$ such that each column has a unit $L_2$ norm\;
    $\TM \gets \FY^{\top}\YM$\;
}
\Return{$\YM$}\;
\end{algorithm}

In its final stage, \algo generates an NCI matrix $\YM$ by minimizing the ``difference'' between $\FY$ returned by Algorithm~\ref{alg:onmf} 
and the target NCI matrix $\YM$. Recall from Eq. \eqref{eq:obj3}, our original objective is to find a $|\U|\times k$ NCI matrix $\YM$ and a $2d\times k$ non-negative $\HM$ such that the total squared reconstruction error $\|\RM-\YM\HM^{\top}\|^2_F=\sum_{u_i\in \U}{\sum_{j=1}^{d}{(\RM[i,j]-\YM[i]\cdot \HM[j])^2}}$ is minimized. Considering $\FY$ is a continuous version of $\YM$ (relaxing the constraint in Eq. \eqref{eq:Y}), $\|\RM-\FY\HM^{\top}\|^2_F$ is capable of attaining a strictly lower reconstruction error compared to $\|\RM-\YM\HM^{\top}\|^2_F$. Therefore, an ideal solution $\YM$ to Eq. \eqref{eq:obj3} ensures that $\|\RM-\YM\HM^{\top}\|^2_F$ closely approximates $\|\RM-\FY\HM^{\top}\|^2_F$ in Eq. \eqref{eq:obj4}. Mathematically, the conversion from matrix $\FY$ into the NCI matrix $\YM$ can be formulated as the minimization of the difference of their reconstruction errors, i.e., $\left|\|\RM-\YM\HM^{\top}\|^2_F-\|\RM-\FY\HM^{\top}\|^2_F \right|=\left|\Tr((\YM\YM^{\top}-\FY\FY^{\top})\cdot \RM\RM^{\top}) \right|$ by Lemma \ref{lem:RYH-RGammaH}.
\begin{lemma}\label{lem:RYH-RGammaH} The following equation holds:
\begin{small}
\begin{equation}\label{eq:obj5}
\left|\|\RM-\YM\HM^{\top}\|^2_F-\|\RM-\FY\HM^{\top}\|^2_F \right|=\left|\Tr((\YM\YM^{\top}-\FY\FY^{\top})\cdot \RM\RM^{\top}) \right|.
\end{equation}
\end{small}
\end{lemma}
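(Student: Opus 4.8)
The plan is to collapse each of the two reconstruction errors into a closed-form trace expression by exploiting the column-orthogonality of the two factors, and then simply subtract. The first thing I would record is that \emph{both} $\YM$ and $\FY$ are column-orthogonal. For $\FY$ this is immediate, since $\FY^{\top}\FY=\IM$ is imposed as the explicit constraint in Eq.~\eqref{eq:obj4}. For $\YM$ it follows from the NCI structure in Eq.~\eqref{eq:Y}: each row has a single nonzero entry equal to $1/\sqrt{|\C_\ell|}$, so the $\ell$-th diagonal entry of $\YM^{\top}\YM$ is a sum of $|\C_\ell|$ copies of $1/|\C_\ell|$ (hence $1$), and the off-diagonal entries vanish because the clusters are disjoint; thus $\YM^{\top}\YM=\IM$.

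The core step is a small lemma about a generic column-orthogonal factor: for any fixed $\UM$ with $\UM^{\top}\UM=\IM$, the error $\|\RM-\UM\HM^{\top}\|_F^2$ is minimized over $\HM$ at $\HM=\RM^{\top}\UM$, with minimum value $\Tr(\RM\RM^{\top})-\Tr(\UM\UM^{\top}\RM\RM^{\top})$. To prove this I would expand the squared Frobenius norm as $\|\RM-\UM\HM^{\top}\|_F^2=\Tr(\RM\RM^{\top})-2\Tr(\RM^{\top}\UM\HM^{\top})+\Tr(\HM\UM^{\top}\UM\HM^{\top})$, use $\UM^{\top}\UM=\IM$ to collapse the last term to $\Tr(\HM\HM^{\top})$, differentiate in $\HM$ to get the stationarity condition $\HM=\RM^{\top}\UM$, and substitute back; invoking $\UM^{\top}\UM=\IM$ once more makes the cross and quadratic terms equal, so after a single cancellation only $\Tr(\RM\RM^{\top})-\Tr(\UM\UM^{\top}\RM\RM^{\top})$ remains.

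With this in hand I would apply it twice: once with $\UM=\YM$ to obtain the minimized error of Eq.~\eqref{eq:obj3}, and once with $\UM=\FY$ to obtain the minimized error of Eq.~\eqref{eq:obj4}, each error being evaluated at its own optimal $\HM$ (which is exactly the minimized objective in each case). Subtracting the two closed forms cancels the shared $\Tr(\RM\RM^{\top})$ term and leaves $\Tr(\FY\FY^{\top}\RM\RM^{\top})-\Tr(\YM\YM^{\top}\RM\RM^{\top}) = -\Tr\!\big((\YM\YM^{\top}-\FY\FY^{\top})\RM\RM^{\top}\big)$, and taking absolute values gives precisely Eq.~\eqref{eq:obj5}.

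The main obstacle, and the point demanding the most care, is the role of $\HM$. The identity is \emph{false} for an arbitrary common $\HM$: a direct expansion with a single shared $\HM$ (using only the orthogonality of both factors to cancel the quadratic terms) leaves the residual linear term $2\Tr\big((\FY-\YM)\HM^{\top}\RM^{\top}\big)$, which does not reduce to the claimed trace. The result genuinely relies on each reconstruction error being the \emph{minimum} over its own $\HM$, so I would be explicit that $\HM=\RM^{\top}\YM$ and $\HM=\RM^{\top}\FY$ are the respective minimizers; it is the orthogonality $\UM^{\top}\UM=\IM$ that makes all $\HM$-dependence disappear and leaves only $\UM\UM^{\top}\RM\RM^{\top}$. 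A minor caveat worth flagging is that the unconstrained minimizer $\RM^{\top}\UM$ need not satisfy the $\HM\ge 0$ constraint of Eqs.~\eqref{eq:obj3}--\eqref{eq:obj4}, but since the lemma is an algebraic identity relating these minimized errors, the non-negativity constraint is not needed for the calculation itself.
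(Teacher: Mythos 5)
Your proof is correct and takes essentially the same route as the paper's: the paper likewise collapses each reconstruction error, evaluated at its own minimizer ($\HM=\RM^{\top}\YM$, resp.\ $\HM=\RM^{\top}\FY$) and using column-orthogonality of the factor, into $\Tr(\RM\RM^{\top})$ plus a trace term (its Eq.~\eqref{eq:J-YRRY}, established inside the proof of Lemma~\ref{lem:obj3}), then subtracts and invokes the cyclic property of the trace. Your write-up is in fact more careful on the two points the paper glosses over: you state explicitly that the identity fails for a shared arbitrary $\HM$ and holds only when each error is taken at its own optimal $\HM$, and your collapsed form $\Tr(\RM\RM^{\top})-\Tr(\UM\UM^{\top}\RM\RM^{\top})$ carries the correct sign, whereas the paper's Eq.~\eqref{eq:J-YRRY} has a sign slip (reducing $-2\Tr(\YM^{\top}\RM\RM^{\top}\YM)+\Tr(\YM^{\top}\RM\RM^{\top}\YM)$ to $+\Tr(\YM^{\top}\RM\RM^{\top}\YM)$ instead of $-\Tr(\YM^{\top}\RM\RM^{\top}\YM)$) that happens to be harmless for this lemma only because both sides sit inside absolute values.
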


Further, we reformulate the problem as follows:
\begin{equation}\label{eq:obj6}
\min_{\TM, \YM}{\|\FY\TM-\YM\|_2}\ \text{s.t.}\ \text{$\TM\TM^{\top}=\IM$ and $\YM$ is an NCI matrix},
\end{equation}
which implies that, if the NCI matrix $\YM$ and the $k\times k$ row-orthogonal matrix $\TM$ minimize $\|\FY\TM-\YM\|_2$, $\YM\YM^{\top}-\FY\FY^{\top}\approx \FY\TM\TM^{\top}\FY^{\top}-\FY\FY^{\top}\approx 0$ holds and the objective loss in Eq. \eqref{eq:obj5} is therefore minimized. 

To solve Eq. \eqref{eq:obj6}, we develop Algorithm~\ref{alg:round} in \algo, which obtains the NCI matrix $\YM$ through an iterative framework wherein $\TM$ and $\YM$ are refined in an alternative fashion till convergence. Initially, Algorithm~\ref{alg:round} starts by taking as input the matrix $\FY$ and the number $T_g$ of iterations and initializing $\TM$ as a $k\times k$ identity matrix (Line~1). 
It then launches an iterative process at Lines 2-7 
to jointly refine $\YM$ and $\TM$. Specifically, in each of the $T_g$ iterations, \algo first determines the cluster id of each node $u_i\in \U$ via (Line 4)
\begin{equation}\label{eq:ell-ast}
\ell^\ast= \arg\max_{1\le \ell\le k}{\FY[i]\cdot \TM[:,\ell]}
\end{equation}
and then updates the cluster indicator $\YM[i]$ of node $u_i$ as follows (Line 5):
\begin{equation}\label{eq:Y-update}
\YM[i,\ell] = \begin{cases}
1 &\ \text{if $\ell=\ell^\ast$,} \\
0 &\ \text{otherwise},
\end{cases}
\ \forall{1\le \ell \le k}.
\end{equation}
Each column in $\YM$ is later $L_2$-normalized, i.e., 
\begin{equation}\label{eq:y-norm}
\textstyle \ \forall{1\le \ell\le k}\ \sqrt{\sum_{u_i\in \U}\YM[i,\ell]^2}=1,
\end{equation}
in accordance with the NCI constraint in Eq. \eqref{eq:Y} (Line 6). 
In a nutshell, Lines 3-6 optimizes Eq. \eqref{eq:obj6} by updating $\YM$ with $\TM$ fixed. To explain, recall the constraint of the NCI matrix $\YM$ stated in Eq. \eqref{eq:Y}, each row of $\YM$ has merely one non-zero entry. Hence, by locating the column id $\ell^\ast$ whose corresponding entry $(\FY\TM)[i,\ell^\ast]$ is maximum in the $i$-th row of $\FY\TM$ (i.e., Eq. \eqref{eq:ell-ast}) and meanwhile updating $\YM[i]$ as Eqs.~\eqref{eq:Y-update} and ~\eqref{eq:y-norm} as Lines 5-6, 
the distance between $\FY\TM$ and $\YM$ in Eq. \eqref{eq:obj6} is naturally minimized.

\begin{table}[!t]
\centering
\renewcommand{\arraystretch}{1.0}
\caption{Attributed Bipartite Graphs}\label{tbl:datasets}
\vspace{-2mm}
\resizebox{\columnwidth}{!}{%
\begin{tabular}{l|c|c|c|c|c}
\hline
{\bf Name} & {{\em CiteSeer}}  & {{\em Cora}}  & {{\em MovieLens}}  & {{\em Google}}  & {{\em Amazon}}  \\ \hline
$|\U|$ & 1,237 & 1,312 & 6,040 & 64,527 & 2,330,066 \\ 
$|\V|$ & 742 & 789 & 3,883 & 868,937 & 8,026,324 \\ 
$|\EDG|$ & 1,665 & 2,314 & 1,000,209 & 1,487,747 & 22,507,155 \\ 
$d_\U$ & 3,703 & 1,433 & 30 & 1,024 & 800 \\ 
$d_\V$ & 3,703 & 1,433 & 21 & - & - \\ 
$k$ & 6 & 7 & 21 & 5 & 3 \\ 
\hline
 \end{tabular}
 }
\end{table}

\begin{table*}[!t]
\centering
\renewcommand{\arraystretch}{0.8}
\caption{Clustering Quality (Larger ACC, NMI, and ARI indicate higher clustering quality).}\vspace{-3mm}
\begin{small}
\addtolength{\tabcolsep}{-0.25em}
\begin{tabular}{c|ccc | ccc | ccc | ccc | ccc | ccc | ccc | c}
\hline
\multirow{2}{*}{\bf Method} & \multicolumn{3}{c|}{\bf{ {\em CiteSeer}}} & \multicolumn{3}{c|}{\bf{ {\em Cora}}}  & \multicolumn{3}{c|}{\bf{ {\em MovieLens}}}  & \multicolumn{3}{c|}{\bf{ {\em Google}}} & \multicolumn{3}{c|}{\bf{ {\em Amazon}}} & \multirow{2}{*}{\bf Rank}  \\ \cline{2-16}
& ACC & NMI & ARI & ACC& NMI & ARI & ACC & NMI & ARI & ACC & NMI & ARI & ACC & NMI & ARI  & \\ 
\hline
\textsf{KMeans} \cite{hartigan1979algorithm} 	&	\cellcolor{gray!15} 0.526	&\cellcolor{gray!30}	0.277	&	0.225	&	0.431	&	0.229	&	0.137	&	0.298	&	0.363	&	0.170	&	0.370	&	0.053	&	0.012	&\cellcolor{gray!30}	0.502	&	0.038	&	0.079	&	5.4	     \\ 
\textsf{SpecClust} \cite{von2007tutorial} 	&	0.222	&	0.017	&	-0.001	&	0.311	&	0.026	&	0.003	&	0.318	&	0.392	&	0.197	&	- &	- &	- &	- &	- &	- &	11.27	    \\
\textsf{NMF} \cite{xu2003document} 	&	0.508	&	0.222	&\cellcolor{gray!15}	0.228	&	0.380	&	0.165	&	0.110	&	\cellcolor{gray!15}0.568	&	\cellcolor{gray!15}0.611	&	\cellcolor{gray!15}0.482	&	0.384	&	0.069	&	0.062	&	0.390	&	0.006	&	0.015	&	5.6	    \\ \hline
\textsf{SCC} \cite{dhillon2001co} 	&	0.243	&	0.015	&	0.012	&	0.280	&	0.040	&	0.017	&	0.128	&	0.030	&	0.004	&	0.425	&	0.038	&	0.038	&	0.470	&	0.018	&	-0.016	&	10.27	     \\ 
\textsf{SBC} \cite{kluger2003spectral} 	&	0.239	&	0.015	&	0.012	&	0.262	&	0.059	&	0.023	&	0.116	&	0.035	&	0.006	&	0.394	&	0.006	&	-0.003	&	0.485	&	0.003	&	0.005	&	10.87	    \\
\textsf{CCMOD} \cite{ailem2015co} 	&	0.200	&	0.010	&	0.003	&	0.264	&	0.066	&	0.040	&	0.141	&	0.029	&	0.010	&	OOM	&	OOM	&	OOM	&	OOM	&	OOM	&	OOM	&	12.27	     \\
\textsf{SpecMOD} \cite{labiod2011co} 	&	0.238	&	0.012	&	0.003	&	0.290	&	0.023	&	-0.007	&	0.125	&	0.033	&	0.009	&	OOM	&	OOM	&	OOM	&	OOM	&	OOM	&	OOM	&	12.6	     \\
\textsf{InfoCC} \cite{dhillon2003information} 	&	0.235	&	0.013	&	0.007	&	0.223	&	0.035	&	0.018	&	0.095	&	0.036	&	0.007	&	0.277	&	0.008	&	0.008	&	0.378	&	0.007	&	0.003	&	11.54	     \\ 
\textsf{DeepCC} \cite{xu2019deep} 	&	0.205	&	0.013	&	0.004	&	0.213	&	0.014	&	0.006	&	0.093	&	0.027	&	0.004	&	0.324	&	0.105	&	0.031	&	- &	- &	- &	12.67	      \\
\textsf{HOPE}~\cite{YangShi23}   	&	0.473	&	0.169	&	0.288	&	0.268	&	0.025	&	0.043	&	0.115	&	0.009	&	0.037	&	0.269	&	0	&	0	&	0.452 &	0 &	0.002 &	 \\ \hline
\textsf{ACMin} \cite{yang2021effective} 	&	0.450	&	0.143	&	0.141	&\cellcolor{gray!15}	0.650	&	0.470	&\cellcolor{gray!30}	0.410	&	0.122	&	0.032	&	0.009	&	0.312	&	0.023	&	0.020	&	0.428	&	0.012	&	0.003	&	7.67	      \\
\textsf{GRACE} \cite{fanseu2023grace} 	&	0.469	&	0.209	&	0.199	&	0.351	&	0.136	&	0.103	&	0.298	&	0.249	&	0.195	&	0.420	&	0 &	0 &	0.427	&	0.008	&	0 &	7.47	      \\ 
\textsf{AGCC} \cite{zhang2019attributed} 	&	0.448	&	0.144	&	0.153	&\cellcolor{gray!15}	0.650	&\cellcolor{gray!30}	0.517	&	0.406	&	0.538	&	0.589	&	0.480	&	OOM	&	OOM	&	OOM	&	OOM	&	OOM	&	OOM	&	7.34	    \\
\textsf{Dink-Net} \cite{Dink-Net}  	&	0.308	&	0	&	0	&	0.231	&	0.004	&	0.007	&	0.123	&	0.001	&	0	&\cellcolor{gray!30}	0.429	&	0	&	0	&	OOM	&	OOM	&	OOM	&	 	    \\ \hline
\textsf{node2vec} \cite{grover2016node2vec} 	&	0.209	&	0.007	&	0.001	&	0.220	&	0.008	&	0 &	0.074	&	0.011	&	0 &	0.280	&	0 &	0 &	- &	- &	- &	14.8	     \\
\textsf{BiNE} \cite{gao2018bine} 	&	0.196	&	0.005	&	0 &	0.174	&	0.005	&	-0.002	&	0.071	&	0.012	&	0 &	- &	- &	- &	- &	- &	- &	15.54	     \\
\textsf{GEBE} \cite{yang2022scalable} 	&	0.231	&	0.013	&	0.002	&	0.293	&	0.014	&	0.006	&	0.095	&	0.014	&	-0.002	&\cellcolor{gray!15}	0.428	&	0 &	0 &	0.489	&	0 &	0 &	12.14	     \\
\textsf{PANE} \cite{yang2020scaling,yang2023pane} 	&	0.443	&	0.154	&	0.136	&	0.537	&\cellcolor{gray!55}	0.526	&	0.339	&\cellcolor{gray!30}	0.855	&\cellcolor{gray!30}	0.923	&	\cellcolor{gray!30} 0.838	&	0.359	&\cellcolor{gray!30}	0.127	&\cellcolor{gray!15}	0.070	&	\cellcolor{gray!15} 0.497	&\cellcolor{gray!55}	0.083	&\cellcolor{gray!30}	0.102	&\cellcolor{gray!15}	4.2	     \\
\textsf{BiANE} \cite{huang2020biane} 	&	0.259	&	0.057	&	0.016	&	0.341	&	0.239	&	0.080	&	0.091	&	0.053	&	0.013	&	- &	- &	- &	- &	- &	- &	10.21	     \\ \hline
\algo ($d=d_\U$) 	&	\cellcolor{gray!30} 0.541	&\cellcolor{gray!15}	0.256	&\cellcolor{gray!30}	0.265	&\cellcolor{gray!30}	0.662	&\cellcolor{gray!15}	0.477	&\cellcolor{gray!15}	0.408	&\cellcolor{gray!55}	0.931	&\cellcolor{gray!55}	0.961	&\cellcolor{gray!55}	0.957	&	0.367	&\cellcolor{gray!15}	0.112	&\cellcolor{gray!30}	0.091	&\cellcolor{gray!30}	0.502	&\cellcolor{gray!15}	0.045	&\cellcolor{gray!15}	0.091	&\cellcolor{gray!30}	2.54	     \\
\algo 	& \cellcolor{gray!55}	0.625	&\cellcolor{gray!55}	0.322	&\cellcolor{gray!55}	0.348	&\cellcolor{gray!55}	0.671	&	0.475	&\cellcolor{gray!55}	0.416	&\cellcolor{gray!55}	0.931	&\cellcolor{gray!55}	0.961	&\cellcolor{gray!55}	0.957	&\cellcolor{gray!55}	0.444	&\cellcolor{gray!55}	0.135	&\cellcolor{gray!55}	0.138	&\cellcolor{gray!55}	0.504	&\cellcolor{gray!30}	0.055	&\cellcolor{gray!55}	0.104	&\cellcolor{gray!55}	1.27	     \\ \hline
\end{tabular}
\end{small}
\label{tbl:node-clustering}
\vspace{0ex}
\end{table*}

With the refined $\YM$ at hand, the subsequent work turns into updating the $k\times k$ matrix $\TM$ towards optimizing
\begin{equation*}
\min_{\TM}{\|\FY\TM-\YM\|_2}\ \text{s.t.}\ \text{$\TM\TM^{\top}=\IM$}.
\end{equation*}
Given $\YM$, the minimizer to this problem is $\TM=\FY^{\top}\YM$ by utilizing Lemma 4.14 in \cite{woodruff2014sketching}. Therefore, $\TM$ is updated to $\FY^{\top}\YM$ at Line 7.

After repeating the above procedure for $T_g$ iterations, \algo returns $\YM$ as the final clustering result. Practically, a dozen iterations are sufficient to yield high-caliber $\YM$, as validated in Section~\ref{sec:exp-param}.




\section{Experiments}\label{sec:exp}
In this section, we experimentally evaluate our proposed $k$-ABGC method \algo against 19 competitors over five real ABGs in terms of clustering quality and efficiency.
All the experiments are conducted on a Linux machine powered by 2 Xeon Gold 6330 @2.0GHz CPUs and 1TB RAM.
For reproducibility, the source code and datasets are available at \url{https://github.com/HKBU-LAGAS/TPC}.

\subsection{Experimental Setup}\label{sec:exp-setup}

\stitle{Datasets} 
Table \ref{tbl:datasets} lists the statistics of the five datasets used in the experimental study.
$|\U|$, $|\V|$, and $|\EDG|$ denote the cardinality of two disjoint node sets $\U$, $\V$, and edge set $\EDG$ of $\G$, respectively, while $d_{\U}$ (resp. $d_{\V}$) stands for the dimensions of attribute vectors of nodes in $\U$ (resp. $\V$). The number of ground-truth clusters of nodes $\U$ in $\G$ is $k$.
{\em Citeseer} and {\em Cora} are synthesized from real citation graphs in \cite{kipf2016semi} by dividing nodes in each cluster into two equal-sized partitions (i.e., $\U$ and $\V$) and removing intra-partition edges and isolated nodes as in \cite{xie2022bgnn}. In particular, nodes represent publications, edges denote their citation relationships, and labels correspond to the fields of study.
The well-known {\em MovieLens} dataset \cite{harper2015movielens} comprises user-movie ratings, where clustering labels are users' occupations in $\U$. {\em Google} and {\em Amazon} are extracted from the Google Maps \cite{yan2023personalized} and Amazon review dataset \cite{he2016ups}, where edges represent the reviews on restaurants and books posted by users.

\stitle{Competitors and Parameters}
We compare \algo against 19 existing methods, which can be categorized into four groups as follows:
\begin{itemize}[leftmargin=*]
\item {\em Data Clustering}: \textsf{KMeans} \cite{hartigan1979algorithm}, \textsf{NMF} \cite{xu2003document}, and \textsf{SpecClust} \cite{von2007tutorial};
\item {\em Network Embedding}: \textsf{node2vec} \cite{grover2016node2vec}, \textsf{BiNE} \cite{gao2018bine}, \textsf{BiANE} \cite{huang2020biane}, \textsf{PANE} \cite{yang2020scaling,yang2023pane}, and \textsf{GEBE} \cite{yang2022scalable};
\item {\em Attributed Graph Clustering}: \textsf{AGCC} \cite{zhang2019attributed}, \textsf{ACMin} \cite{yang2021effective}, \textsf{GRACE}~\cite{fanseu2023grace}, \textsf{Dink-Net} \cite{Dink-Net};
\item {\em Bipartite Graph Clustering}: \textsf{SCC} \cite{dhillon2001co}, \textsf{SBC} \cite{kluger2003spectral}, \textsf{InfoCC} \cite{dhillon2003information}, \textsf{SpecMOD} \cite{labiod2011co}, \textsf{CCMOD} \cite{ailem2015co}, \textsf{DeepCC} \cite{xu2019deep}, and \textsf{HOPE}~\cite{YangShi23}.
\end{itemize}



Unless otherwise specified, on all datasets, we set the numbers $T_f$ and $T_g$ of iterations required by Algorithms \ref{alg:onmf} and \ref{alg:round} in our proposed \algo to $5$ and $20$, respectively. Regarding parameters $\alpha$ and $\gamma$, by default, we set $\alpha=0.6, \gamma=6$ on {\em CiteSeer} and {\em MovieLens}, $\alpha=0.9, \gamma=10$ on {\em Cora} and {\em Google}, and $\alpha=0.5, \gamma=1$ on {\em Amazon}, respectively. To deal with the high attribute dimensions $d_\U$ of the {\em CiteSeer}, {\em Cora}, {\em Google}, and {\em Amazon} datasets, we set their new attribute dimensions $d$ in Section \ref{sec:d} to $32$, $128$, $32$, and $64$, respectively. We refer to the version of \algo~{\em without} the attribute dimension reduction module in Section \ref{sec:d} as \algo ($d=d_\U$).
More implementation details of our method and baselines are  in Appendix \ref{sec:eval_metrics}.



\pgfplotsset{ every non-boxed x axis/.append style={x axis line style=-} }
\pgfplotsset{ every non-boxed y axis/.append style={y axis line style=-} }
\begin{figure*}[!t]
\centering
\begin{small}
\subfloat[\em{Cora}]{
\begin{tikzpicture}[scale=1]
\begin{axis}[
    height=\columnwidth/2.3,
    width=\columnwidth/2.3,
    xtick=\empty,
    ybar=1.5pt,
    bar width=0.22cm,
    enlarge x limits=true,
    ylabel={\em running time} (sec),
    xticklabel=\empty,
    ymin=0.01,
    ymax=25,
    ytick={0.01,0.1,1,10},
    ymode=log,
    log origin y=infty,
    log basis y={10},
    every axis y label/.style={at={(current axis.north west)},right=10mm,above=0mm},
    legend style={draw=none, at={(1.02,1.02)},anchor=north west,cells={anchor=west},font=\scriptsize},
    legend image code/.code={ \draw [#1] (0cm,-0.1cm) rectangle (0.3cm,0.15cm); },
    ]

\addplot [pattern={grid}] coordinates {(1,0.47) };
\addplot [pattern={crosshatch dots}] coordinates {(1,4.47) }; 
\addplot [pattern=north west lines] coordinates {(1,18.9994) }; 
\addplot [pattern=crosshatch] coordinates {(1,0.877064) };
\addplot [pattern=north east lines] coordinates {(1,2.469) }; 
\addplot [pattern=horizontal lines] coordinates {(1,0.512) }; 
\addplot [pattern=checkerboard] coordinates {(1,0.07) }; 


\legend{\algo,\algo ($d=d_\U$),\textsf{AGCC},\textsf{ACMin},\textsf{PANE},\textsf{KMeans},\textsf{NMF}}
\end{axis}
\end{tikzpicture}\hspace{-3mm}\label{fig:time-cora}%
}%
\subfloat[{\em MovieLens}]{
\begin{tikzpicture}[scale=1]
\begin{axis}[
    height=\columnwidth/2.3,
    width=\columnwidth/2.3,
    xtick=\empty,
    ybar=1.5pt,
    bar width=0.22cm,
    enlarge x limits=true,
    ylabel={\em running time} (sec),
    xticklabel=\empty,
    ymin=0.1,
    ymax=120,
    ytick={0.1,1,10,100},
    ymode=log,
    log origin y=infty,
    log basis y={10},
    every axis y label/.style={at={(current axis.north west)},right=10mm,above=0mm},
    legend style={draw=none, at={(1.02,1.02)},anchor=north west,cells={anchor=west},font=\scriptsize},
    legend image code/.code={ \draw [#1] (0cm,-0.1cm) rectangle (0.3cm,0.15cm); },
    ]

\addplot [pattern={grid}] coordinates {(1,1.36) }; 
\addplot [pattern={crosshatch dots}] coordinates {(1,1.36) }; 
\addplot [pattern=north east lines] coordinates {(1,4.70) }; 
\addplot [pattern=checkerboard] coordinates {(1,0.466) }; 
\addplot [pattern=north west lines] coordinates {(1,86.53) }; 
\addplot [pattern=horizontal lines] coordinates {(1,5.986) }; 
\addplot [pattern=crosshatch] coordinates {(1,2.313) };

\legend{\algo,\algo ($d=d_\U$),\textsf{PANE},\textsf{NMF},\textsf{AGCC},\textsf{SpecClust},\textsf{GRACE}}

\end{axis}
\end{tikzpicture}\hspace{-2mm}\label{fig:time-movielens}%
}%
\subfloat[{\em Google}]{
\begin{tikzpicture}[scale=1]
\begin{axis}[
    height=\columnwidth/2.3,
    width=\columnwidth/2.3,
    xtick=\empty,
    ybar=1.5pt,
    bar width=0.22cm,
    enlarge x limits=true,
    ylabel={\em running time} (sec),
    xticklabel=\empty,
    ymin=1,
    ymax=120000,
    ytick={1,10,100,1000,10000,100000},
    ymode=log,
    log origin y=infty,
    log basis y={10},
    every axis y label/.style={at={(current axis.north west)},right=10mm,above=0mm},
    legend style={draw=none, at={(1.02,1.02)},anchor=north west,cells={anchor=west},font=\scriptsize},
    legend image code/.code={ \draw [#1] (0cm,-0.1cm) rectangle (0.3cm,0.15cm); },
    ]

\addplot [pattern={grid}] coordinates {(1,28.7) }; 
\addplot [pattern={crosshatch dots}] coordinates {(1,571.5) }; 
\addplot [pattern=north east lines] coordinates {(1,1379.5) }; 
\addplot [pattern=checkerboard] coordinates {(1,7.934) };  
\addplot [pattern=north west lines] coordinates {(1,8.217) }; 
\addplot [pattern=crosshatch] coordinates {(1,89628) };
\addplot [pattern=horizontal lines] coordinates {(1,19.169) };

\legend{\algo,\algo ($d=d_\U$),\textsf{PANE},\textsf{NMF},\textsf{SCC},\textsf{DeepCC},\textsf{KMeans}}

\end{axis}
\end{tikzpicture}\hspace{-2mm}\label{fig:time-google}%
}%
\subfloat[{\em Amazon}]{
\begin{tikzpicture}[scale=1]
\begin{axis}[
    height=\columnwidth/2.3,
    width=\columnwidth/2.3,
    xtick=\empty,
    ybar=1.5pt,
    bar width=0.22cm,
    enlarge x limits=true,
    ylabel={\em running time} (sec),
    xticklabel=\empty,
    ymin=1,
    ymax=20000,
    ymode=log,
    ytick={1,10,100,1000,10000},
    log basis y={10},
    every axis y label/.style={at={(current axis.north west)},right=10mm,above=0mm},
    legend style={draw=none, at={(1.02,1.02)},anchor=north west,cells={anchor=west},font=\scriptsize},
    legend image code/.code={ \draw [#1] (0cm,-0.1cm) rectangle (0.3cm,0.15cm); },
    ]

\addplot [pattern={grid}] coordinates {(1,178) }; 
\addplot [pattern={crosshatch dots}] coordinates {(1,1161.65) }; 
\addplot [pattern=north east lines] coordinates {(1,14802.35) };
\addplot [pattern=horizontal lines] coordinates {(1,24.74) }; 
\addplot [pattern=crosshatch] coordinates {(1,935.886) }; 
\addplot [pattern=north west lines] coordinates {(1,221.96) };  
\addplot [pattern=checkerboard] coordinates {(1,18.77) };

\legend{\algo,\algo ($d=d_\U$),\textsf{PANE},\textsf{KMeans},\textsf{ACMin},\textsf{SBC},\textsf{NMF}}

\end{axis}
\end{tikzpicture}\hspace{0mm}\label{fig:time-amazon}%
}%
\vspace{-4mm}
\end{small}
\caption{Running time in seconds.} \label{fig:time}
\end{figure*}

 \begin{figure*}[!t]
\centering
\begin{small}
\begin{tikzpicture}
    \begin{customlegend}[legend columns=4,
        legend entries={CiteSeer, Cora, MovieLens, Google},
        legend style={at={(0.45,1.35)},anchor=north,draw=none,font=\footnotesize,column sep=0.2cm}]
    \addlegendimage{line width=0.2mm,mark size=4pt,mark=diamond}
    \addlegendimage{line width=0.2mm,mark size=4pt,mark=triangle}
    \addlegendimage{line width=0.2mm,mark size=4pt,mark=o}
    \addlegendimage{line width=0.2mm,mark size=4pt,mark=square}
    \end{customlegend}
\end{tikzpicture}
\\[-\lineskip]
\vspace{-4mm}
\subfloat[Varying $\alpha$]{
\begin{tikzpicture}[scale=1,every mark/.append style={mark size=2pt}]
    \begin{axis}[
        height=\columnwidth/2.4,
        width=\columnwidth/1.9,
        ylabel={\it ACC},
        xmin=0.5, xmax=9.5,
        ymin=0.35, ymax=0.94,
        xtick={1,3,5,7,9},
        ytick={0.4,0.5,0.6,0.7,0.8,0.9},
        xticklabel style = {font=\footnotesize},
        yticklabel style = {font=\footnotesize},
        xticklabels={0.1,0.3,0.5,0.7,0.9},
        yticklabels={0.4,0.5,0.6,0.7,0.8,0.9},
        every axis y label/.style={font=\footnotesize,at={(current axis.north west)},right=2mm,above=0mm},
        legend style={fill=none,font=\small,at={(0.02,0.99)},anchor=north west,draw=none},
    ]
    \addplot[line width=0.3mm, mark=triangle]  
        plot coordinates {
(1,	0.396	)
(2,	0.406	)
(3,	0.441	)
(4,	0.473	)
(5,	0.54	)
(6,	0.583	)
(7,	0.606	)
(8,	0.616	)
(9,	0.671	)
    };

    \addplot[line width=0.3mm, mark=diamond]  
        plot coordinates {
(1,	0.513	)
(2,	0.52	)
(3,	0.521	)
(4,	0.527	)
(5,	0.521	)
(6,	0.619	)
(7,	0.551	)
(8,	0.544	)
(9,	0.566	)

    };

    \addplot[line width=0.3mm, mark=o]  
        plot coordinates {
(1,	0.916	)
(2,	0.916	)
(3,	0.911	)
(4,	0.932	)
(5,	0.912	)
(6,	0.931	)
(7,	0.91	)
(8,	0.91	)
(9,	0.89	)

    };

    \addplot[line width=0.3mm, mark=square]  
        plot coordinates {
(1,	0.36	)
(2,	0.361	)
(3,	0.361	)
(4,	0.36	)
(5,	0.359	)
(6,	0.353	)
(7,	0.37	)
(8,	0.417	)
(9,	0.444	)

    };

    \end{axis}
\end{tikzpicture}\hspace{0mm}\label{fig:vary-alpha}%
}
\subfloat[Varying $\gamma$]{
\begin{tikzpicture}[scale=1,every mark/.append style={mark size=2pt}]
    \begin{axis}[
        height=\columnwidth/2.4,
        width=\columnwidth/1.9,
        ylabel={\it ACC},
        xmin=0.5, xmax=11.5,
        ymin=0.3, ymax=0.94,
        xtick={1,3,5,7,9,11},
        ytick={0.4,0.5,0.6,0.7,0.8,0.9},
        xticklabel style = {font=\footnotesize},
        yticklabel style = {font=\footnotesize},
        xticklabels={0,2,4,6,8,10},
        yticklabels={0.4,0.5,0.6,0.7,0.8,0.9},
        every axis y label/.style={font=\footnotesize,at={(current axis.north west)},right=2mm,above=0mm},
        legend style={fill=none,font=\small,at={(0.02,0.99)},anchor=north west,draw=none},
    ]
    \addplot[line width=0.3mm, mark=triangle]  
        plot coordinates {
(1,	0.332	)
(2,	0.52	)
(3,	0.581	)
(4,	0.606	)
(5,	0.611	)
(6,	0.617	)
(7,	0.649	)
(8,	0.657	)
(9,	0.659	)
(10,	0.666	)
(11,	0.671	)
    };

    \addplot[line width=0.3mm, mark=diamond]  
        plot coordinates {
(1,	0.518	)
(2,	0.523	)
(3,	0.533	)
(4,	0.53	)
(5,	0.614	)
(6,	0.614	)
(7,	0.625	)
(8,	0.626	)
(9,	0.628	)
(10,	0.626	)
(11,	0.626	)

    };

    \addplot[line width=0.3mm, mark=o]  
        plot coordinates {
(1,	0.895	)
(2,	0.917	)
(3,	0.912	)
(4,	0.889	)
(5,	0.931	)
(6,	0.931	)
(7,	0.931	)
(8,	0.931	)
(9,	0.931	)
(10,	0.931	)
(11,	0.931	)

    };

    \addplot[line width=0.3mm, mark=square]  
        plot coordinates {
(1,	0.36	)
(2,	0.359	)
(3,	0.353	)
(4,	0.366	)
(5,	0.378	)
(6,	0.381	)
(7,	0.38	)
(8,	0.431	)
(9,	0.436	)
(10,	0.441	)
(11,	0.444	)

    };
    \end{axis}
\end{tikzpicture}\hspace{0mm}\label{fig:vary-gamma}%
}
\subfloat[Varying $T_f$]{
\begin{tikzpicture}[scale=1,every mark/.append style={mark size=2pt}]
    \begin{axis}[
        height=\columnwidth/2.4,
        width=\columnwidth/1.9,
        ylabel={\it ACC},
        xmin=0.5, xmax=5.5,
        ymin=0.4, ymax=0.94,
        xtick={1,2,3,4,5},
        ytick={0.4,0.5,0.6,0.7,0.8,0.9},
        xticklabel style = {font=\footnotesize},
        yticklabel style = {font=\footnotesize},
        xticklabels={0,5,10,15,20},
        yticklabels={0.4,0.5,0.6,0.7,0.8,0.9},
        every axis y label/.style={font=\footnotesize,at={(current axis.north west)},right=2mm,above=0mm},
        legend style={fill=none,font=\small,at={(0.02,0.99)},anchor=north west,draw=none},
    ]
    \addplot[line width=0.3mm, mark=triangle]  
        plot coordinates {
(1,	0.664	)
(2,	0.671	)
(3,	0.672	)
(4,	0.672	)
(5,	0.629	)
    };

    \addplot[line width=0.3mm, mark=diamond]  
        plot coordinates {
(1,	0.601	)
(2,	0.612	)
(3,	0.622	)
(4,	0.625	)
(5,	0.622	)

    };

    \addplot[line width=0.3mm, mark=o]  
        plot coordinates {
(1,	0.930	)
(2,	0.931	)
(3,	0.931	)
(4,	0.931	)
(5,	0.931	)

    };

    \addplot[line width=0.3mm, mark=square]  
        plot coordinates {
(1,	0.440	)
(2,	0.443	)
(3,	0.444	)
(4,	0.444	)
(5,	0.444	)

    };

    \end{axis}
\end{tikzpicture}\hspace{0mm}\label{fig:vary-tf}%
}%
\subfloat[Varying $T_g$]{
\begin{tikzpicture}[scale=1,every mark/.append style={mark size=2pt}]
    \begin{axis}[
        height=\columnwidth/2.4,
        width=\columnwidth/1.9,
        ylabel={\it ACC},
        xmin=0.5, xmax=5.5,
        ymin=0.4, ymax=0.94,
        xtick={1,2,3,4,5},
        ytick={0.4,0.5,0.6,0.7,0.8,0.9},
        xticklabel style = {font=\footnotesize},
        yticklabel style = {font=\footnotesize},
        xticklabels={0,5,10,15,20},
        yticklabels={0.4,0.5,0.6,0.7,0.8,0.9},
        every axis y label/.style={font=\footnotesize,at={(current axis.north west)},right=2mm,above=0mm},
        legend style={fill=none,font=\small,at={(0.02,0.99)},anchor=north west,draw=none},
    ]
    \addplot[line width=0.3mm, mark=triangle]  
        plot coordinates {
(1,	0.502	)
(2,	0.662	)
(3,	0.669	)
(4,	0.672	)
(5,	0.671	)

    };

    \addplot[line width=0.3mm, mark=diamond]  
        plot coordinates {
(1,	0.478	)
(2,	0.619	)
(3,	0.619	)
(4,	0.618	)
(5,	0.627	)

    };

    \addplot[line width=0.3mm, mark=o]  
        plot coordinates {
(1,	0.76	)
(2,	0.931	)
(3,	0.931	)
(4,	0.931	)
(5,	0.931	)

    };

    \addplot[line width=0.3mm, mark=square]  
        plot coordinates {
(1,	0.401	)
(2,	0.439	)
(3,	0.443	)
(4,	0.443	)
(5,	0.443	)

    };
    \end{axis}
\end{tikzpicture}\hspace{0mm}\label{fig:vary-tg}%
}%
\subfloat[Varying $\dimU$]{
\begin{tikzpicture}[scale=1,every mark/.append style={mark size=2pt}]
    \begin{axis}[
        height=\columnwidth/2.4,
        width=\columnwidth/1.9,
        ylabel={\it ACC},
        xmin=0.5, xmax=5.5,
        ymin=0.36, ymax=0.94,
        xtick={1,2,3,4,5},
        ytick={0.4,0.5,0.6,0.7,0.8,0.9},
        xticklabel style = {font=\footnotesize},
        yticklabel style = {font=\footnotesize},
        xticklabels={16,32,64,128,256},
        yticklabels={0.4,0.5,0.6,0.7,0.8,0.9},
        every axis y label/.style={font=\footnotesize,at={(current axis.north west)},right=2mm,above=0mm},
        legend style={fill=none,font=\small,at={(0.02,0.99)},anchor=north west,draw=none},
    ]
    \addplot[line width=0.3mm, mark=triangle]  
        plot coordinates {
(1,	0.624	)
(2,	0.613	)
(3,	0.592	)
(4,	0.671	)
(5,	0.598	)

    };

    \addplot[line width=0.3mm, mark=diamond]  
        plot coordinates {
(1,	0.528	)
(2,	0.618	)
(3,	0.555	)
(4,	0.518	)
(5,	0.511	)

    };

    \addplot[line width=0.3mm, mark=o]  
        plot coordinates {
(1,	0.721	)
(2,	0.931	)
(3,	0.931	)
(4,	0.931	)
(5,	0.931	)

    };

    \addplot[line width=0.3mm, mark=square]  
        plot coordinates {
(1,	0.436	)
(2,	0.444	)
(3,	0.385	)
(4,	0.437	)
(5,	0.383	)

    };
    \end{axis}
\end{tikzpicture}\hspace{4mm}\label{fig:vary-dim}%
}%
\end{small}
 \vspace{-4mm}
\caption{Clustering accuracy when varying parameters.} \label{fig:parameter}
\vspace{0mm}
\end{figure*}
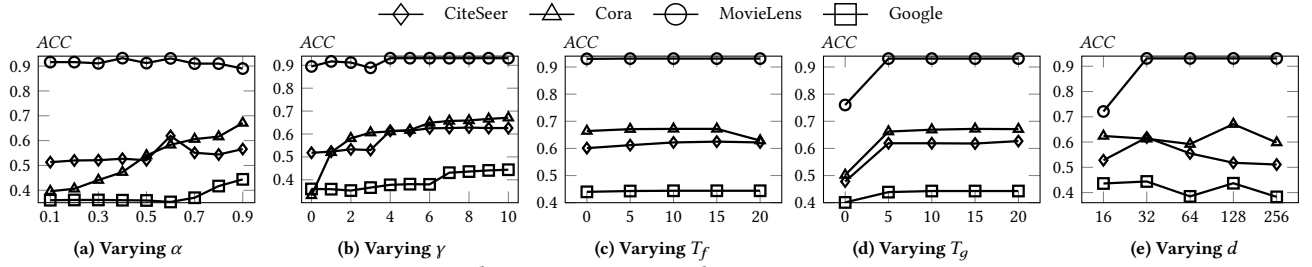

\stitle{Evaluation Metrics}
Following convention, we adopt three widely used measures \cite{lancichinetti2009detecting,cui2020adaptive,zhang2019attributed,wang2019attributed,yang2021effective,kim2022abc} to assess the clustering quality, namely (i) {\em Clustering Accuracy} (ACC), (ii) {\em Normalized Mutual Information} (NMI), and (3) {\em Adjusted Rand Index} (ARI), for measuring the quality of clusters produced by each evaluated method in the presence of the ground-truth clusters of the tested dataset. 
Particularly, ACC and NMI scores range from $0$ to $1.0$, whilst ARI ranges from $-0.5$
to $1.0$. For each of these metrics, higher values indicate better clustering quality.
Regarding efficiency evaluation, we report the running time in seconds (measured in wall-clock time) of each method on each dataset, excluding the time for input (loading datasets) and output (saving clustering results). The formulas for evaluation metrics are in Appendix \ref{sec:eval_metrics}.


\subsection{Clustering Performance}\label{sec:exp-quality}

This set of experiments reports the clustering quality achieved by \algo and all competitors over the five datasets, as well as their respective running times.
We omit a method if it cannot report the results within three days or incur out-of-memory (OOM) errors. Since \algo is randomized, we repeat it five times and report the average performance.

\stitle{Clustering Quality} Table \ref{tbl:node-clustering} shows the ACC, NMI, and ARI scores of all methods on five ABGs, and their overall average performance rankings. We highlight the top-$3$ best clustering results on each dataset in gray with darker shades indicating higher quality. \algo consistently outperforms the 17 competitors on the {\em CiteSeer}, {\em MovieLens}, and {\em Google} datasets in terms of ACC, NMI, and ARI, by substantial margins of up to $9.9\%$ for ACC, $4.5\%$ for NMI, and $12\%$ for ARI, respectively. The only exceptions are on {\em Cora} and {\em Amazon}, where \algo achieves the highest ACC and ARI results but inferior NMI scores compared to \textsf{PANE} or \textsf{AGCC}. In addition, \algo ($d=d_\U$) exhibits competitive clustering effectiveness, which either is second only to \algo or obtains the third best clustering results in most cases. Specifically, \algo ($d=d_\U$) is comparable to \algo on {\em Cora}, {\em MovieLens}, and {\em Amazon} with a performance degradation at most $0.9\%$ in ACC, $1.0\%$ in NMI, and $1.3\%$ in ARI. 
Over all datasets, \algo and \algo ($d=d_\U$) attain the best and second best average performance rank (smaller rank is better), respectively. The evident superiority of \algo and \algo ($d=d_\U$) manifests the accuracy of our proposed MSA model in Section \ref{sec:MSA} in preserving the attribute similarity and topological connections between nodes, as well as the effectiveness of theoretically-grounded three-phase optimization framework introduced in Section \ref{sec:algo}. 

At this point, a keen reader may wonder why \algo with attribute dimension reduction outperforms \algo ($d=d_\U$) on most datasets, especially {\em CiteSeer} and {\em Google}, as it seems that the former is an approximate version of the latter. Notice that \algo and \algo ($d=d_\U$) output identical results, as dimension reduction is not needed on {\em MovieLens} and \algo turns to be \algo ($d=d_\U$). Recall that the only difference between \algo and \algo ($d=d_\U$) is that \algo employs a truncated SVD over the input attribute vectors $\XM_\U$ of a node in $\U$ for dimension reduction as stated in Section \ref{sec:d}. Aside from the crucial theoretical assurance offered by this SVD-based approach in the MSA approximation, it implicitly conducts a PCA on the attribute vectors, extracting key features from the input attributes while eradicating noisy ones. In brief, the SVD-based trick in Section \ref{sec:d} grants \algo the additional ability to denoise the attribute data, thus elevating the results' quality.

\stitle{Efficiency}
For clarity, we compare the empirical efficiency of \algo and \algo ($d=d_\U$) only against competitors ranked in the top 7 for clustering quality, as shown in Table \ref{tbl:node-clustering}.
Figure \ref{fig:time} plots the computation times required by each of these methods on {\em Cora}, {\em MovieLens}, {\em Google}, and {\em Amazon}. The $y$-axis is the running time (seconds) in the log scale. On each of the diagrams in Figures \ref{fig:time-cora}, \ref{fig:time-movielens}, \ref{fig:time-google}, and \ref{fig:time-amazon}, all the bars are displayed from left to right in an ascending order w.r.t. their average performance rank in Table \ref{tbl:node-clustering}. Accordingly, except the first two bars from the left for \algo and \algo ($d=d_\U$), the third bars (from the left) in these figures illustrate the running times of the best competitors, i.e., \textsf{AGCC} on {\em Cora}, and \textsf{PANE} on {\em MovieLens}, {\em Google}, and {\em Amazon}, respectively. As we can see, \algo is consistently faster than the state-of-the-art approaches, \textsf{AGCC} or \textsf{PANE}, on four datasets, often by orders of magnitude. For instance, on {\em Cora}, {\em Google}, and {\em Amazon}, \algo takes $0.47$, $28.7$, and $178$ seconds, respectively, whereas the best baselines \textsf{AGCC} or \textsf{PANE} cost around $19$ seconds, $23$ minutes, and $4.1$ hours, respectively, attaining $40\times$, $48\times$, and $83\times$ speedup. In addition, \algo also enjoys a considerable efficiency gain of up to $19.9\times$ over \algo ($d=d_\U$), attributed to the SVD-based dimension reduction (Section \ref{sec:d}). On the {\em MovieLens} dataset, the input attribute dimension $d_\U=30$ is low, and the attribute dimension reduction is therefore disabled, making \algo and \algo ($d=d_\U$) yield the same running time, which is $3.46\times$ over the best competitor \textsf{PANE}. Although \textsf{NMF}, \textsf{KMeans}, and \textsf{SCC} run much faster than \algo on some datasets by either neglecting the graph topology or discarding the attribute data, their result quality is no match for our solution \algo.

In summary, \algo consistently delivers superior results for $k$-ABGC tasks over ABGs with various volumes while offering high practical efficiency,
which corroborates the efficacy of our novel objective function based on MSA in Section \ref{sec:preliminary} and the optimization solver with careful algorithmic designs developed in Section \ref{sec:algo}.

\subsection{Parameter Analysis}\label{sec:exp-param}
In these experiments, we empirically investigate the impact of five key parameters in \algo: $\alpha, \gamma, T_f, T_g$, and $\dimU$. For each of them, we run \algo over {\em CiteSeer}, {\em Cora}, {\em MovieLens}, and {\em Google}, respectively, by varying the parameter with others fixed as in Section \ref{sec:exp-setup}.

\stitle{Varying $\alpha$ and $\gamma$} 
\eat{
As discussed in Section \ref{sec:MSA}, the $\alpha$ is a coefficient used to balance the attribute and topological information encoded in the node feature vectors $\ZM_\U$ in Eq . \eqref{eq:PX} (the larger $\alpha$ is, the less important the role of attributes is), and $\gamma$ signifies that for each node $u_i\in \U$, its feature vector $\ZM_\U[i]$ can aggregate attributes from nodes that are at most $\gamma$ hops away from $u_i$. 
}
Figures \ref{fig:vary-alpha} 
shows that on {\em Cora} and {\em Google}, \algo's clustering performance markedly improves as $\alpha$ increases from $0.1$ to $0.9$, indicating the importance of graph structure in these datasets. On {\em CiteSeer} and {\em MovieLens}, setting $\alpha=0.6$ will be a favorable choice, which results in an optimal combination of attributes and graph topology and hence the highest ACC 
scores. Figures \ref{fig:vary-gamma} 
depicts the ACC 
scores when $\gamma$ increases from $0$ to $10$. When $\gamma=0$, the graph structure is disregarded in \algo, namely $\ZM_\U=\XM_\U$. It can be observed on all datasets that the clustering quality rises with $\gamma$ increasing except {\em CiteSeer} and {\em MovieLens}, where the ACC 
results reach a plateau after $\gamma\ge 6$. This is consistent with the fact that a larger $\gamma$ produces a more accurate solution $\ZM_\U$ to the objective in Eq. \eqref{eq:Z-obj}, and thus, higher clustering quality.

\stitle{Varying $T_f$ and $T_g$} Figures \ref{fig:vary-tf} 
presents the ACC 
scores when the $T_f$ of iterations in Algorithm \ref{alg:onmf} is varied from $0$ to $20$. We can conclude that our greedy seeding strategy described in Section \ref{sec:ONMF} is highly effective in enabling swift convergence, as additional optimization iterations merely bring minor gains in clustering performance. On {\em Cora} and {\em CiteSeer}, the ACC 
scores see an uptick when varying $\gamma$ from $0$ to $10$, followed by a pronounced downturn. Such a performance decline is caused by overfitting in solving Eq. \eqref{eq:obj4}.
From Figures \ref{fig:vary-tg} 
reporting clustering performance changes when varying $T_g$ from $0$ to $20$, we can make analogous observations on the four datasets. The evaluation scores first experience a sharp increase as $T_g$ increases from $0$ to $5$. After that, the ACC 
remain invariant with $\gamma$ increasing. The results manifest the effectiveness of our solver developed in Section \ref{sec:NCI-gen} in fast NCI generation.

\stitle{Varying $\dimU$} 
Intuitively, a large $d$ may lead to accurate preservation of MSA as per Lemma \ref{lem:FF-ZZ} and further improve clustering quality. However, in practice, the original attribute vectors $\XM_\U$ embody noises, especially when $d_\U$ is high.
As pinpointed and validated in Sections \ref{sec:d} and  \ref{sec:exp-quality}, our SVD-based dimension reduction inherently applies a PCA over $\XM_\U$ for noise elimination, considerably upgrading the empirical result quality.
That is to say, the choice of $d$ strikes a balance between capturing MSA and removing noisy data, consistent with our empirical results in Figure \ref{fig:vary-dim}.
In particular, on {\em CiteSeer}, {\em Cora}, and {\em Google}, picking $32$, $128$, and $32$ for dimension $d$, respectively, can strike a good balance between MSA preservation and noisy reduction for superior clustering performance. On {\em MovieLens}, the attribute dimension reduction is not enabled when $d\ge 32$ since its original dimension $d_\U=30$.
We refer interested readers to Appendix \ref{sec:exp-param-efficiency} for NMI and efficiency results.
\section{Related Work}\label{sec:relatedwork}
\eat{In the sequel, we review existing studies germane to $k$-ABGC and highlight where they differ from what we investigate here.}

\stitle{Bipartite Graph Clustering}
A classic methodology \cite{zhou2007bipartite} for bipartite graph clustering first projects a bipartite graph $\G$ into a unipartite graph by connecting every two nodes from the same partition $\U$ if they share common neighbors in $\G$. Then, a standard graph clustering algorithm for node clustering can be adopted on the constructed unipartite graph. However, the projection often leads to unipartite graphs $O(|\U|^2)$ edges, which is intolerable for even medium-sized graphs. In our previous work~\cite{YangShi23}, we address this problem by transforming it into a two-stage approximation framework.

Unlike the projection-based methods, another line of research focuses on simultaneously clustering two disjoint sets of nodes (i.e., $\U$ and $\V$) in a bipartite graph. These co-clustering techniques have been extensively investigated in the literature \cite{govaert2013co} and span a variety of applications in bioinformatics and text mining.
Several attempts \cite{dhillon2001co,kluger2003spectral,labiod2011co} are made to extend spectral clustering to bipartite graphs. Analogously, \citet{ailem2015co} and \citet{dhillon2003information} propose generating co-clusters by extending and optimizing classic metrics of modularity and mutual information on bipartite graphs, respectively.
\textsf{DeepCC} \cite{xu2019deep} creates low-dimension instances and features using a deep autoencoder, then assigns clusters using a variant of the Gaussian mixture model.
To handle the resolution limit in prior works as well as incorporate attribute information, \citet{kim2022abc} designed \textsf{ABC}, which incurs a severe efficiency issue due to its quadratic running time $O(|\U|^2+|\V|^2)$.

\stitle{Attributed Graph Clustering}
As surveyed in \cite{bothorel2015clustering,chunaev2019community,yang2021effective,li2023efficient}, there is a large body of work on attributed graph clustering (AGC).
According to \cite{yang2021effective}, existing AGC techniques can be categorized into four groups: edge-weigh-based methods \cite{neville2003clustering,ruan2013efficient}, distance-based methods \cite{zhou2009graph,falih2017anca}, statistics-based models \cite{yang2009combining,xu2012model,yang2021effective}, and graph learning-based methods \cite{wang2019attributed,zhang2019attributed,fanseu2023grace,wang2017mgae,mrabah2022rethinking}.
Among them, graph learning-based approaches \cite{zhang2019attributed,fanseu2023grace,mrabah2022rethinking,Dink-Net} have achieved state-of-the-art performance, as reported in \cite{yang2021effective,lai2023re}.
\eat{
More specifically, \textsf{AGCC} \cite{zhang2019attributed} employs an adaptive graph convolution to learn embeddings and then utilizes spectral clustering on the learned embeddings for clustering.
\textsf{GRACE} \cite{fanseu2023grace} performs a graph convolution on denoised nodes and directly conducts clustering, guided by its compactness measure. 
To tackle the issues of feature randomness and feature drift in AGC, \citet{mrabah2022rethinking} developed a new graph auto-encoder model \textsf{R-GAE} to perform joint clustering and embedding learning.
}
These methods obtain high clustering quality on attributed graphs at the cost of costly neural network training, thus incurring poor scalability on large graphs.
To our knowledge, the statistical-model-based solution, \textsf{ACMin} \cite{yang2021effective}, is the only AGC method that scales to massive graphs with millions of nodes and billions of edges, while attaining high result quality.
However, none of them are custom-made for ABGs, producing compromised result quality for $k$-ABGC. 

\stitle{Network Embedding}
In recent years, network embedding, which converts each node in a graph into an embedding vector capturing the surrounding structures, has been employed in a wide range of graph analytics tasks, and has seen remarkable success \cite{cui2018survey,giamphy2023survey}. In particular, by simply feeding them into data clustering methods, e.g., \textsf{KMeans}, such embedding vectors can be utilized to cope with $k$-ABGC. However, the majority of network embedding works \cite{perozzi2014deepwalk,grover2016node2vec,tang2015line,tsitsulin2018verse,qiu2019netsmf,yang2020homogeneous,gao2018bine,yang2022scalable} are designed for graphs in the absence of node attributes. To bridge this gap, a series of efforts \cite{yang2020scaling,yang2023pane,pan2021unsupervised,velivckovic2018deep,huang2017accelerated,liu2018content,wang2018united,cui2020adaptive} have been made towards incorporating node attributes into embedding vectors for enhanced result utility. These approaches still suffer from sub-optimal clustering performance as they fall short of preserving the hidden semantics underlying bipartite graphs.
To learn effective node embeddings over ABGs, \cite{huang2020biane,ahmed2020node} extend SkipGram models \cite{mikolov2013efficient} to ABGs by picking node-pair samples with consideration of both their intra-partition/inter-partition proximities and attribute similarities. \citet{athar2023asbine} project the ABG into two homogeneous graphs based on topological connections and attribute similarities, and then invoke unsupervised GNNs on the constructed graphs for embedding generation. Moreover, \citet{zhang2017learning} propose \textsf{IGE} \cite{zhang2017learning} for learning node embeddings on dynamic ABGs with a focus on temporal dependence of edges rather than the bipartite graph structures. These works either fall short of preserving multi-hop relationships between nodes or struggle to cope with large ABGs due to the significant expense of training.

\section{Conclusion}\label{sec:conclude}
This paper presents \algo, an effective and efficient solution for $k$-ABGC tasks. \algo achieves remarkable performance, attributed to a novel problem formulation based on the proposed multi-scale attribute affinity measure for nodes in ABGs, and a well-thought-out three-phase optimization framework for solving the problem.
Through a series of theoretically-grounded efficiency techniques developed in this paper, \algo is able to scale to large ABGs with millions of nodes and hundreds of millions of edges while offering state-of-the-art result quality. The superiority of \algo over 19 baselines is experimentally validated over 5 real ABGs in terms of both clustering quality and empirical efficiency.
\eat{
Regarding future work, we intend to scale \algo to larger datasets by parallelizing it over multi-core CPUs and GPUs.}

\begin{acks}
Renchi Yang is supported by the NSFC YSF grant (No. 62302414) and Hong Kong RGC ECS grant (No. 22202623). Qichen Wang is supported by Hong Kong RGC Grants (Project No. C2004-21GF and C2003-23Y). Tsz Nam Chan is supported by the NSFC grant 62202401. Jieming Shi is supported by Hong Kong RGC ECS (No. 25201221) and NSFC 62202404.
\end{acks}

\pagebreak


\appendix
\section{Appendix}\label{sec:appendix}

\subsection{Illustrative Examples}
Figure \ref{fig:ABGC} exemplifies an ABG $\G$ with 7 researchers $u_1$-$u_7$ in $\U$, 8 research publications $v_1$-$v_8$ in $\V$, and the authorships in $\EDG$. Additionally, each researcher possesses a collection of attributes, including nationality, work institution, and academic qualifications. In Figure \ref{fig:ABGC}, $u_1,u_2$ share identical attributes and close collaboration, indicating they should be grouped in the same cluster. Similar observations can be made for node pairs $(u_3,u_4)$ and $(u_6,u_7)$, where the difference is that their attributes are partially analogous. Despite limited connections related to $u_5$, $u_5$ and $u_4$ are likely to be grouped together due to their identical attributes, with $u_4$ being the sole collaborator of $u_5$. Overall, given $k=3$, an intuitive and ideal solution for $k$-ABGC in Definition \ref{def:kabgc} is to divide the 7 researchers into 3 clusters, i.e., $\C_1=\{u_1,u_2\}$, $C_2=\{u_3,u_4,u_5\}$, and $\C_3=\{u_6,u_7\}$, with considering both their connectivity (collaboration) in $\G$ and attribute homogeneity.

Example \ref{example} provides an intuitive understanding of our optimization objective function in Eq. \eqref{sec:obj}. 

\begin{example}[\bf A Running Example]\label{example} Suppose that the ABG $\G$ in Figure \ref{fig:ABGC} is unweighted, i.e., all edge weights $w(u_i,u_j)$ are 1. During preprocessing, the attributes of nodes in $\U$ are converted into 3-dimensional vectors $\XM_\U$ as in Figure \ref{fig:XU}. Assume that the parameters $\alpha$ and $\gamma$ in Eq. \eqref{eq:PX} are $0.5$ and $5$, respectively, and the number $k$ of clusters is $3$. Figure \ref{fig:ZU} displays the feature vectors for researchers $u_1$ to $u_7$ obtained by adopting the attribute aggregation in Eq. \eqref{eq:PX} and imposing the normalization in Eq. \eqref{eq:Z}. We obtain the MSA values of every two researchers in Figure \ref{fig:SU} using Eq. \eqref{eq:s}. From Figure \ref{fig:SU}, it can be observed that the nodes with the highest MSA w.r.t. $u_1$-$u_7$ (excluding themselves) are $u_2$, $u_1$, $u_4$, $u_5$, $u_4$, $u_7$, $u_6$, respectively. This implies a partition of researchers $u_1$-$u_7$ into: $\C_1={u_1,u_2}$, $\C_2={u_3,u_4,u_5}$, and $\C_3={u_6,u_7}$, optimizing the objective in Eq. \eqref{eq:obj1} (i.e., a maximization of the average intra-cluster MSA and a minimization of the average inter-cluster MSA).
\begin{figure}[H]
\vspace{-4mm}
\centering
\begin{small}
\subfloat[$\XM_\U$]{
\begin{tikzpicture}
\begin{axis}[
    axis line style={draw=none},
    width=2.5cm,
    height=3.6cm,
    colorbar,
    colorbar/width=2.5mm,
    colormap={blackwhite}{gray(0cm)=(0.9); gray(1cm)=(0.4)},
    ytick={0,1,2,3,4,5,6},
    yticklabels={$u_1$, $u_2$, $u_3$, $u_4$, $u_5$, $u_6$, $u_7$}
]
\addplot[matrix plot, point meta=explicit]
    coordinates {
    (0,0) [0.4] (1,0) [0.3] (2,0) [0.6]
    
    (0,1) [0.4] (1,1) [0.3] (2,1) [0.6]
    
    (0,2) [0.8] (1,2) [0.8] (2,2) [0.9]
    
    (0,3) [0.8] (1,3) [0.6] (2,3) [0.5]
    
    (0,4) [0.8] (1,4) [0.6] (2,4) [0.5]
    
    (0,5) [0.8] (1,5) [0.8] (2,5) [0.9]
    
    (0,6) [0.4] (1,6) [0.8] (2,6) [1.0]
 }; 
\end{axis}
\end{tikzpicture}\label{fig:XU}
}%
\subfloat[$\hat{\ZM}_\U$]{
\begin{tikzpicture}
\begin{axis}[
    axis line style={draw=none},
    width=2.5cm,
    height=3.6cm,
    colorbar,
    colorbar/width=2.5mm,
    colormap={blackwhite}{gray(0cm)=(0.9); gray(1cm)=(0.4)},
    ytick={0,1,2,3,4,5,6},
    yticklabels={$u_1$, $u_2$, $u_3$, $u_4$, $u_5$, $u_6$, $u_7$}
]
\addplot[matrix plot, point meta=explicit]
    coordinates {
    (0,0) [0.53] (1,0) [0.43] (2,0) [0.72]
    
    (0,1) [0.54] (1,1) [0.43] (2,1) [0.72]
    
    (0,2) [0.58] (1,2) [0.54] (2,2) [0.61]
    
    (0,3) [0.67] (1,3) [0.54] (2,3) [0.52]
    
    (0,4) [0.68] (1,4) [0.54] (2,4) [0.49]
    
    (0,5) [0.52] (1,5) [0.56] (2,5) [0.64]
    
    (0,6) [0.37] (1,6) [0.59] (2,6) [0.72]
 }; 
\end{axis}
\end{tikzpicture}\label{fig:ZU}
}%
\subfloat[$s(u_i,u_j)$]{
\begin{tikzpicture}
\begin{axis}[
    axis line style={draw=none},
    width=3.5cm,
    height=3.5cm,
    colorbar,
    colorbar/width=2.5mm,
    colormap={blackwhite}{gray(0cm)=(0.9); gray(1cm)=(0.4)},
    ytick={0,1,2,3,4,5,6},
    yticklabels={$u_1$, $u_2$, $u_3$, $u_4$, $u_5$, $u_6$, $u_7$},
    xtick={0,1,2,3,4,5,6},
    xticklabels={$u_1$, $u_2$, $u_3$, $u_4$, $u_5$, $u_6$, $u_7$},
    xticklabel style={font=\tiny},
    yticklabel style={font=\scriptsize}
]
\addplot[matrix plot, point meta=explicit]
coordinates {
(0,0)	[0.146]	(1,0)	[0.146]	(2,0)	[0.143]	(3,0)	[0.14]	(4,0)	[0.139]	(5,0)	[0.143]	(6,0)	[0.143]

(0,1)	[0.146]	(1,1)	[0.145]	(2,1)	[0.143]	(3,1)	[0.141]	(4,1)	[0.14]	(5,1)	[0.143]	(6,1)	[0.143]

(0,2)	[0.142]	(1,2)	[0.142]	(2,2)	[0.146]	(3,2)	[0.146]	(4,2)	[0.145]	(5,2)	[0.143]	(6,2)	[0.142]

(0,3)	[0.14]	(1,3)	[0.141]	(2,3)	[0.144]	(3,3)	[0.146]	(4,3)	[0.147]	(5,3)	[0.143]	(6,3)	[0.138]

(0,4)	[0.139]	(1,4)	[0.14]	(2,4)	[0.144]	(3,4)	[0.147]	(4,4)	[0.147]	(5,4)	[0.142]	(6,4)	[0.137]

(0,5)	[0.143]	(1,5)	[0.143]	(2,5)	[0.144]	(3,5)	[0.143]	(4,5)	[0.142]	(5,5)	[0.147]	(6,5)	[0.146]

(0,6)	[0.143]	(1,6)	[0.143]	(2,6)	[0.142]	(3,6)	[0.138]	(4,6)	[0.137]	(5,6)	[0.146]	(6,6)	[0.148]
}; 
\end{axis}
\end{tikzpicture}\label{fig:SU}
}%
\end{small}
\vspace{-3mm}
\caption{A Running Example.} \label{fig:toy-MSA}
\end{figure}
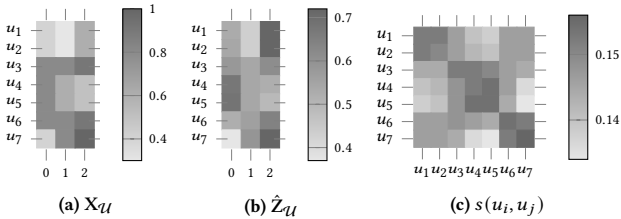

\end{example}

\eat{
\subsection{Pseudocodes of Algorithms~\ref{alg:onmf} and ~\ref{alg:round}}\label{sec:pseudo-code}
\stitle{Greedy Orthogonal NMF}
The pseudo-code of our solver introduced in Section \ref{sec:ONMF} is presented in Algorithm~\ref{alg:onmf}.

\stitle{Effective NCI Generation} 
Below we illustrate the pseudo-code of Algorithm~\ref{alg:round} for generating the NCI matrix $\YM$ in Section \ref{sec:NCI-gen}.
}

\subsection{Complexity Analysis}\label{sec:complexity}

\stitle{Algorithm~\ref{alg:aa}}
As mentioned in Section \ref{sec:d}, Lines 3-4 and Lines 7-8 in Algorithm~\ref{alg:aa} need $O(|\EDG|\cdot d\gamma+ d^3+|\U|\cdot d^2)$ time in total. As for other operations in Algorithm~\ref{alg:aa}, their processing overheads are determined by the number of entries in $\LM$, $\XM_\U$, $\widehat{\ZM}_\U$, and $\RM$, which can be bounded by $O(|\EDG|+|\U|\cdot d + d^2)$. Accordingly, the total cost entailed by Algorithm~\ref{alg:aa} is $O(|\EDG|\cdot d\gamma+ d^3+|\U|\cdot d^2)$.

\stitle{Algorithm~\ref{alg:onmf}}
Recall that the computational expense of Algorithm~\ref{alg:onmf} involves two parts: invocation of the randomized SVD (Line 1) over $\RM$ as well as $T_f$ rounds of $\FY$ and $\HM$ updates at Lines 3-5, which demands $O(|\U|\cdot d k+(\U+d)\cdot k^2)$ time by \cite{halko2011finding} and $O(|\U|\cdot d k T_f+|\U|\cdot k^2 T_f)$ time according to the analysis in Section \ref{sec:ONMF}, respectively. 

\stitle{Algorithm~\ref{alg:round}}
In Algorithm~\ref{alg:round}, the computation cost is dominated by Eq. \eqref{eq:ell-ast} ($O(|\U|\cdot k^2)$ time for all nodes per iteration) and the sparse matrix multiplication at Line 7 ($O(|\U|\cdot k)$ time per iteration), leading to $O(|\U|\cdot k^2 T_g)$ time for $T_g$ iterations in sum. 

\stitle{\algo} Overall, the asymptotic computational complexity of \algo is $O(|\EDG|\cdot d\gamma + d^3 +|\U|\cdot d^2 + |\U|\cdot d k T_f + |\U|\cdot k^2 T_g)$, which can be simplified as $O(|\EDG|+|\U|\cdot k^2)$ if $\gamma$, $d$, $T_f$ and $T_g$ are regarded as constants. In addition to the space overhead of $O(|\EDG|+|\U|\cdot d_\U)$ for the storage of $\G$, \algo requires materializing $\ZM_\U$ and $\RM$ in Algorithm~\ref{alg:aa}, amounting to a space consumption of $O(|\U|\cdot d)$. Hence, the space complexity of \algo is bounded by $O(|\EDG|+|\U|\cdot d_\U)$ since $d\le d_\U$.


\subsection{Experimental Setup Details}\label{sec:eval_metrics}
\eat{
\stitle{Datasets Details}
Table \ref{tbl:datasets} lists the statistics of the five datasets used in the experimental study.
$|\U|$, $|\V|$, and $|\EDG|$ denote the cardinality of two disjoint node sets $\U$, $\V$, and edge set $\EDG$ of $\G$, respectively, while $d_{\U}$ (resp. $d_{\V}$) stands for the dimensions of attribute vectors of nodes in $\U$ (resp. $\V$). The number of ground-truth clusters of nodes $\U$ in $\G$ is $k$.
{\em Citeseer} and {\em Cora} are synthesized from real citation graphs in \cite{kipf2016semi} by dividing nodes in each cluster into two equal-sized partitions (i.e., $\U$ and $\V$) and removing intra-partition edges and isolated nodes as in \cite{xie2022bgnn}. In particular, nodes represent publications, edges denote their citation relationships, and labels correspond to the fields of study.
The well-known {\em MovieLens} dataset \cite{harper2015movielens} comprises user-movie ratings, where clustering labels are users' occupations in $\U$. {\em Google} and {\em Amazon} are extracted from the Google Maps \cite{yan2023personalized} and Amazon review dataset \cite{he2016ups}, where edges represent the reviews on restaurants and books posted by users.

\begin{table}[H]
\centering
\renewcommand{\arraystretch}{1.0}
\caption{Attributed Bipartite Graphs}\label{tbl:datasets}
\vspace{-2mm}
\resizebox{\columnwidth}{!}{%
\begin{tabular}{l|c|c|c|c|c}
\hline
{\bf Name} & {{\em CiteSeer}}  & {{\em Cora}}  & {{\em MovieLens}}  & {{\em Google}}  & {{\em Amazon}}  \\ \hline
$|\U|$ & 1,237 & 1,312 & 6,040 & 64,527 & 2,330,066 \\ 
$|\V|$ & 742 & 789 & 3,883 & 868,937 & 8,026,324 \\ 
$|\EDG|$ & 1,665 & 2,314 & 1,000,209 & 1,487,747 & 22,507,155 \\ 
$d_\U$ & 3,703 & 1,433 & 30 & 1,024 & 800 \\ 
$d_\V$ & 3,703 & 1,433 & 21 & - & - \\ 
$k$ & 6 & 7 & 21 & 5 & 3 \\ 
\hline
 \end{tabular}
 }
\end{table}
}

 \begin{figure}[!t]
\centering
\begin{small}
\begin{tikzpicture}
    \begin{customlegend}[legend columns=4,
        legend entries={CiteSeer, Cora, MovieLens, Google},
        legend style={at={(0.45,1.35)},anchor=north,draw=none,font=\footnotesize,column sep=0.2cm}]
    \addlegendimage{line width=0.2mm,mark size=4pt,mark=diamond}
    \addlegendimage{line width=0.2mm,mark size=4pt,mark=triangle}
    \addlegendimage{line width=0.2mm,mark size=4pt,mark=o}
    \addlegendimage{line width=0.2mm,mark size=4pt,mark=square}
    \end{customlegend}
\end{tikzpicture}
\\[-\lineskip]
\vspace{-3mm}
\subfloat[Varying $\gamma$]{
\begin{tikzpicture}[scale=1,every mark/.append style={mark size=2pt}]
    \begin{axis}[
        height=\columnwidth/2.4,
        width=\columnwidth/1.8,
        ylabel={\it running time} (sec),
        xmin=0.5, xmax=11.5,
        ymin=0.1, ymax=100,
        xtick={1,3,5,7,9,11},
        xticklabel style = {font=\footnotesize},
        yticklabel style = {font=\footnotesize},
        xticklabels={0,2,4,6,8,10},
        ymode=log,
        log basis y={10},
        every axis y label/.style={font=\footnotesize,at={(current axis.north west)},right=9mm,above=0mm},
        legend style={fill=none,font=\small,at={(0.02,0.99)},anchor=north west,draw=none},
    ]
    \addplot[line width=0.3mm, mark=triangle]  
        plot coordinates {
(1,	0.398419	)
(2,	0.404904	)
(3,	0.41543	)
(4,	0.420189	)
(5,	0.429834	)
(6,	0.435975	)
(7,	0.443174	)
(8,	0.451807	)
(9,	0.458491	)
(10,	0.489973	)
(11,	0.492316	)

    };

    \addplot[line width=0.3mm, mark=diamond]  
        plot coordinates {
(1,	0.153488	)
(2,	0.15516	)
(3,	0.158201	)
(4,	0.160443	)
(5,	0.160913	)
(6,	0.163824	)
(7,	0.163559	)
(8,	0.166855	)
(9,	0.169791	)
(10,	0.171901	)
(11,	0.172782	)

    };

    \addplot[line width=0.3mm, mark=o]  
        plot coordinates {
(1,	0.714294	)
(2,	0.806739	)
(3,	0.968287	)
(4,	1.029998	)
(5,	0.975936	)
(6,	1.414916	)
(7,	1.550673	)
(8,	1.561349	)
(9,	1.759684	)
(10,	1.856843	)
(11,	1.934207	)

    };

    \addplot[line width=0.3mm, mark=square]  
        plot coordinates {
(1,	6.837013	)
(2,	9.101928	)
(3,	11.053181	)
(4,	13.467513	)
(5,	14.911215	)
(6,	17.160328	)
(7,	19.466162	)
(8,	21.376914	)
(9,	23.421551	)
(10,	25.594696	)
(11,	27.300644	)

    };
    
    \end{axis}
\end{tikzpicture}\hspace{2mm}\label{fig:vary-gamma-time}%
}
\subfloat[Varying $T_f$]{
\begin{tikzpicture}[scale=1,every mark/.append style={mark size=2pt}]
    \begin{axis}[
        height=\columnwidth/2.4,
        width=\columnwidth/1.8,
        ylabel={\it running time} (sec),
        xmin=0.5, xmax=5.5,
        ymin=0.1, ymax=100,
        xtick={1,2,3,4,5},
        xticklabel style = {font=\footnotesize},
        yticklabel style = {font=\footnotesize},
        xticklabels={0,5,10,20,40},
        ymode=log,
        log basis y={10},
        every axis y label/.style={font=\footnotesize,at={(current axis.north west)},right=9mm,above=0mm},
        legend style={fill=none,font=\small,at={(0.02,0.99)},anchor=north west,draw=none},
    ]
    \addplot[line width=0.3mm, mark=triangle]  
        plot coordinates {
(1,	0.401663	)
(2,	0.579143	)
(3,	0.660915	)
(4,	0.683376	)
(5,	0.751038	)

    };

    \addplot[line width=0.3mm, mark=diamond]  
        plot coordinates {
(1,	0.132876	)
(2,	0.164742	)
(3,	0.19845	)
(4,	0.253895	)
(5,	0.261982	)

    };

    \addplot[line width=0.3mm, mark=o]  
        plot coordinates {
(1,	1.172929	)
(2,	1.151766	)
(3,	1.797224	)
(4,	2.017214	)
(5,	2.223069	)

    };

    \addplot[line width=0.3mm, mark=square]  
        plot coordinates {
(1,	26.257487	)
(2,	27.848746	)
(3,	28.239414	)
(4,	28.436374	)
(5,	28.580676	)

    };
    \end{axis}
\end{tikzpicture}\hspace{4mm}\label{fig:vary-tf-time}%
}%

\vspace{-2mm}
\subfloat[Varying $T_g$]{
\begin{tikzpicture}[scale=1,every mark/.append style={mark size=2pt}]
    \begin{axis}[
        height=\columnwidth/2.4,
        width=\columnwidth/1.8,
        ylabel={\it running time} (sec),
        xmin=0.5, xmax=5.5,
        ymin=0.1, ymax=100,
        xtick={1,2,3,4,5},
        xticklabel style = {font=\footnotesize},
        yticklabel style = {font=\footnotesize},
        xticklabels={0,5,10,20,40},
        ymode=log,
        log basis y={10},
        every axis y label/.style={font=\footnotesize,at={(current axis.north west)},right=9mm,above=0mm},
        legend style={fill=none,font=\small,at={(0.02,0.99)},anchor=north west,draw=none},
    ]
    \addplot[line width=0.3mm, mark=triangle]  
        plot coordinates {
(1,	0.458161	)
(2,	0.461147	)
(3,	0.47778	)
(4,	0.5031	)
(5,	0.514223	)

    };

    \addplot[line width=0.3mm, mark=diamond]  
        plot coordinates {
(1,	0.15678	)
(2,	0.159429	)
(3,	0.163092	)
(4,	0.165056	)
(5,	0.166336	)

    };

    \addplot[line width=0.3mm, mark=o]  
        plot coordinates {
(1,	1.226954	)
(2,	1.432011	)
(3,	1.560567	)
(4,	1.578336	)
(5,	1.669809	)

    };
    
    \addplot[line width=0.3mm, mark=square]  
        plot coordinates {
(1,	27.045351	)
(2,	27.161634	)
(3,	27.248344	)
(4,	27.37252	)
(5,	27.433951	)

    };
    \end{axis}
\end{tikzpicture}\hspace{2mm}\label{fig:vary-tg-time}%
}%
\subfloat[Varying $\dimU$]{
\begin{tikzpicture}[scale=1,every mark/.append style={mark size=2pt}]
    \begin{axis}[
        height=\columnwidth/2.4,
        width=\columnwidth/1.8,
        ylabel={\it running time} (sec),
        xmin=0.5, xmax=5.5,
        ymin=0.1, ymax=200,
        xtick={1,2,3,4,5},
        xticklabel style = {font=\footnotesize},
        yticklabel style = {font=\footnotesize},
        xticklabels={16,32,64,128,256},
        ymode=log,
        log basis y={10},
        every axis y label/.style={font=\footnotesize,at={(current axis.north west)},right=9mm,above=0mm},
        legend style={fill=none,font=\small,at={(0.02,0.99)},anchor=north west,draw=none},
    ]
    \addplot[line width=0.3mm, mark=triangle]  
        plot coordinates {
(1,	0.107339	)
(2,	0.234953	)
(3,	0.230472	)
(4,	0.472356	)
(5,	0.766333	)

    };

    \addplot[line width=0.3mm, mark=diamond]  
        plot coordinates {
(1,	0.106607	)
(2,	0.165781	)
(3,	0.249661	)
(4,	0.428256	)
(5,	0.883366	)

    };

    \addplot[line width=0.3mm, mark=o]  
        plot coordinates {
(1,	1.100654	)
(2,	1.587722	)
(3,	1.154994	)
(4,	1.434372	)
(5,	1.533608	)

    };

    \addplot[line width=0.3mm, mark=square]  
        plot coordinates {
(1,	11.680138	)
(2,	26.839837	)
(3,	48.084676	)
(4,	103.282308	)
(5,	180.269697	)

    };
    \end{axis}
\end{tikzpicture}\hspace{4mm}\label{fig:vary-dim-time}%
}%
\end{small}
 \vspace{-2mm}
\caption{Efficiency when varying parameters.} \label{fig:parameter-time}
\vspace{-2mm}
\end{figure}
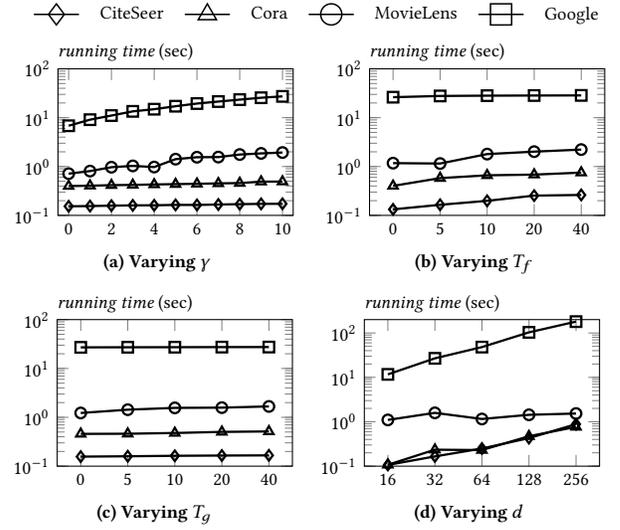

  \begin{figure*}[!t]
\centering
\begin{small}
\begin{tikzpicture}
    \begin{customlegend}[legend columns=4,
        legend entries={CiteSeer, Cora, MovieLens, Google},
        legend style={at={(0.45,1.35)},anchor=north,draw=none,font=\footnotesize,column sep=0.2cm}]
    \addlegendimage{line width=0.2mm,mark size=4pt,mark=diamond}
    \addlegendimage{line width=0.2mm,mark size=4pt,mark=triangle}
    \addlegendimage{line width=0.2mm,mark size=4pt,mark=o}
    \addlegendimage{line width=0.2mm,mark size=4pt,mark=square}
    \end{customlegend}
\end{tikzpicture}
\\[-\lineskip]
\vspace{-3mm}
\subfloat[Varying $\alpha$]{
\begin{tikzpicture}[scale=1,every mark/.append style={mark size=2pt}]
    \begin{axis}[
        height=\columnwidth/2.4,
        width=\columnwidth/1.9,
        ylabel={\it NMI},
        xmin=0.5, xmax=9.5,
        ymin=0.10, ymax=0.98,
        xtick={1,3,5,7,9},
        ytick={0.1,0.3,0.5,0.7,0.9},
        xticklabel style = {font=\footnotesize},
        yticklabel style = {font=\footnotesize},
        xticklabels={0.1,0.3,0.5,0.7,0.9},
        yticklabels={0.1,0.3,0.5,0.7,0.9},
        every axis y label/.style={font=\footnotesize,at={(current axis.north west)},right=2mm,above=0mm},
        legend style={fill=none,font=\small,at={(0.02,0.99)},anchor=north west,draw=none},
    ]
    \addplot[line width=0.3mm, mark=triangle]  
        plot coordinates {
(1,0.181)
(2,0.201)
(3,0.241)
(4,0.286)
(5,0.332)
(6,0.385)
(7,0.423)
(8,0.435)
(9,0.475)

    };

    \addplot[line width=0.3mm, mark=diamond]  
        plot coordinates {
(1,0.245)
(2,0.249)
(3,0.249)
(4,0.259)
(5,0.258)
(6,0.316)
(7,0.256)
(8,0.244)
(9,0.241)

    };

    \addplot[line width=0.3mm, mark=o]  
        plot coordinates {
(1,0.945)
(2,0.946)
(3,0.939)
(4,0.963)
(5,0.94)
(6,0.961)
(7,0.949)
(8,0.949)
(9,0.926)

    };

    \addplot[line width=0.3mm, mark=square]  
        plot coordinates {
(1,0.114)
(2,0.114)
(3,0.116)
(4,0.118)
(5,0.121)
(6,0.125)
(7,0.13)
(8,0.107)
(9,0.135)

    };

    \end{axis}
\end{tikzpicture}\hspace{0mm}\label{fig:vary-alpha-NMI}%
}
\subfloat[Varying $\gamma$]{
\begin{tikzpicture}[scale=1,every mark/.append style={mark size=2pt}]
    \begin{axis}[
        height=\columnwidth/2.4,
        width=\columnwidth/1.9,
        ylabel={\it NMI},
        xmin=0.5, xmax=11.5,
        ymin=0.10, ymax=0.98,
        xtick={1,3,5,7,9,11},
        ytick={0.1,0.3,0.5,0.7,0.9},
        xticklabel style = {font=\footnotesize},
        yticklabel style = {font=\footnotesize},
        xticklabels={0,2,4,6,8,10},
        yticklabels={0.1,0.3,0.5,0.7,0.9},
        every axis y label/.style={font=\footnotesize,at={(current axis.north west)},right=2mm,above=0mm},
        legend style={fill=none,font=\small,at={(0.02,0.99)},anchor=north west,draw=none},
    ]
    \addplot[line width=0.3mm, mark=triangle]  
        plot coordinates {
(1,0.116)
(2,0.298)
(3,0.39)
(4,0.419)
(5,0.432)
(6,0.44)
(7,0.451)
(8,0.459)
(9,0.461)
(10,0.463)
(11,0.473)

    };

    \addplot[line width=0.3mm, mark=diamond]  
        plot coordinates {
(1,0.242)
(2,0.252)
(3,0.257)
(4,0.26)
(5,0.312)
(6,0.311)
(7,0.322)
(8,0.324)
(9,0.324)
(10,0.322)
(11,0.322)

    };

    \addplot[line width=0.3mm, mark=o]  
        plot coordinates {
(1,0.948)
(2,0.947)
(3,0.94)
(4,0.926)
(5,0.962)
(6,0.961)
(7,0.961)
(8,0.961)
(9,0.961)
(10,0.961)
(11,0.96)

    };

    \addplot[line width=0.3mm, mark=square]  
        plot coordinates {
(1,0.113)
(2,0.118)
(3,0.124)
(4,0.128)
(5,0.131)
(6,0.132)
(7,0.132)
(8,0.118)
(9,0.125)
(10,0.131)
(11,0.136)

    };
    \end{axis}
\end{tikzpicture}\hspace{0mm}\label{fig:vary-gamma-NMI}%
}
\subfloat[Varying $T_f$]{
\begin{tikzpicture}[scale=1,every mark/.append style={mark size=2pt}]
    \begin{axis}[
        height=\columnwidth/2.4,
        width=\columnwidth/1.9,
        ylabel={\it NMI},
        xmin=0.5, xmax=5.5,
        ymin=0.10, ymax=0.98,
        xtick={1,2,3,4,5},
        ytick={0.1,0.3,0.5,0.7,0.9},
        xticklabel style = {font=\footnotesize},
        yticklabel style = {font=\footnotesize},
        xticklabels={0,5,10,15,20},
        yticklabels={0.1,0.3,0.5,0.7,0.9},
        every axis y label/.style={font=\footnotesize,at={(current axis.north west)},right=2mm,above=0mm},
        legend style={fill=none,font=\small,at={(0.02,0.99)},anchor=north west,draw=none},
    ]
    \addplot[line width=0.3mm, mark=triangle]  
        plot coordinates {
(1,0.465)
(2,0.475)
(3,0.476)
(4,0.419)
(5,0.274)

    };

    \addplot[line width=0.3mm, mark=diamond]  
        plot coordinates {
(1,0.30)
(2,0.309)
(3,0.32)
(4,0.320)
(5,0.229)

    };

    \addplot[line width=0.3mm, mark=o]  
        plot coordinates {
(1,0.960)
(2,0.961)
(3,0.961)
(4,0.961)
(5,0.961)

    };

    \addplot[line width=0.3mm, mark=square]  
        plot coordinates {
(1,0.135)
(2,0.135)
(3,0.135)
(4,0.135)
(5,0.135)

    };

    \end{axis}
\end{tikzpicture}\hspace{0mm}\label{fig:vary-tf-NMI}%
}%
\subfloat[Varying $T_g$]{
\begin{tikzpicture}[scale=1,every mark/.append style={mark size=2pt}]
    \begin{axis}[
        height=\columnwidth/2.4,
        width=\columnwidth/1.9,
        ylabel={\it NMI},
        xmin=0.5, xmax=5.5,
        ymin=0.10, ymax=0.98,
        xtick={1,2,3,4,5},
        ytick={0.1,0.3,0.5,0.7,0.9},
        xticklabel style = {font=\footnotesize},
        yticklabel style = {font=\footnotesize},
        xticklabels={0,5,10,15,20},
        yticklabels={0.1,0.3,0.5,0.7,0.9},
        every axis y label/.style={font=\footnotesize,at={(current axis.north west)},right=2mm,above=0mm},
        legend style={fill=none,font=\small,at={(0.02,0.99)},anchor=north west,draw=none},
    ]
    \addplot[line width=0.3mm, mark=triangle]  
        plot coordinates {
(1,0.316)
(2,0.472)
(3,0.472)
(4,0.475)
(5,0.475)

    };

    \addplot[line width=0.3mm, mark=diamond]  
        plot coordinates {
(1,0.225)
(2,0.314)
(3,0.316)
(4,0.323)
(5,0.312)

    };

    \addplot[line width=0.3mm, mark=o]  
        plot coordinates {
(1,0.789)
(2,0.961)
(3,0.961)
(4,0.961)
(5,0.961)

    };

    \addplot[line width=0.3mm, mark=square]  
        plot coordinates {
(1,0.108)
(2,0.131)
(3,0.134)
(4,0.135)
(5,0.135)

    };
    \end{axis}
\end{tikzpicture}\hspace{0mm}\label{fig:vary-tg-NMI}%
}%
\subfloat[Varying $\dimU$]{
\begin{tikzpicture}[scale=1,every mark/.append style={mark size=2pt}]
    \begin{axis}[
        height=\columnwidth/2.4,
        width=\columnwidth/1.9,
        ylabel={\it NMI},
        xmin=0.5, xmax=5.5,
        ymin=0.10, ymax=0.98,
        xtick={1,2,3,4,5},
        ytick={0.1,0.3,0.5,0.7,0.9},
        xticklabel style = {font=\footnotesize},
        yticklabel style = {font=\footnotesize},
        xticklabels={16,32,64,128,256},
        yticklabels={0.1,0.3,0.5,0.7,0.9},
        every axis y label/.style={font=\footnotesize,at={(current axis.north west)},right=2mm,above=0mm},
        legend style={fill=none,font=\small,at={(0.02,0.99)},anchor=north west,draw=none},
    ]
    \addplot[line width=0.3mm, mark=triangle]  
        plot coordinates {
(1,0.432)
(2,0.438)
(3,0.429)
(4,0.475)
(5,0.421)

    };

    \addplot[line width=0.3mm, mark=diamond]  
        plot coordinates {
(1,0.251)
(2,0.317)
(3,0.268)
(4,0.236)
(5,0.229)

    };

    \addplot[line width=0.3mm, mark=o]  
        plot coordinates {
(1,0.727)
(2,0.961)
(3,0.961)
(4,0.961)
(5,0.961)

    };

    \addplot[line width=0.3mm, mark=square]  
        plot coordinates {
(1,0.123)
(2,0.136)
(3,0.134)
(4,0.126)
(5,0.124)

    };
    \end{axis}
\end{tikzpicture}\hspace{4mm}\label{fig:vary-dim-NMI}%
}%
\end{small}
 \vspace{-2mm}
\caption{Clustering quality (NMI) when varying parameters.} \label{fig:parameter-NMI}
\vspace{0mm}
\end{figure*}
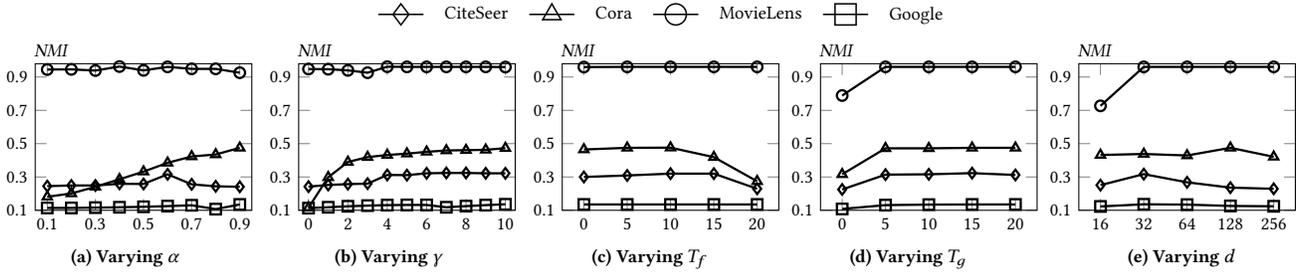

\stitle{Implementation Details}
For \textsf{KMeans}, \textsf{NMF}, \textsf{SpecClust}, \textsf{SCC}, and \textsf{SBC}, we use their standard implementations from the machine learning library scikit-learn \cite{sklearn_api} with default parameters. We also adopt the implementations of \textsf{InfoCC}, \textsf{SpecMOD}, and \textsf{CCMOD} from the Coclust package \cite{role2019coclust}.
As for the rest of the competitors, we collect their source codes from the respective authors and use the parameter settings suggested in their papers.
The data clustering methods and bipartite graph co-clustering algorithms are conducted on the $\XM_\U$ and the graph data (e.g., adjacency matrix) without $\XM_\U$, respectively, owing to their inherent designs.
The codes of all competitors are collected from their respective authors or popular open-source libraries, and all are implemented in Python. 

\stitle{Evaluation Metrics}
The specific mathematical definitions of {\em Clustering Accuracy} (ACC), {\em Normalized Mutual Information} (NMI), and {\em Adjusted Rand Index} (ARI) are as follows:
\begin{small}
\begin{equation*}
ACC = \frac{\sum_{u_i\in \U}{\mathbb{1}_{y_{u_i}=map(y^{\prime}_{u_i})}}}{|\U|},
\end{equation*}  
\end{small}
where $y^{\prime}_{u_i}$ and $y_{u_i}$ stand for the predicted and ground-truth cluster labels of node $u_i$, respectively, $map(y^{\prime}_{u_i})$ is the permutation function that maps each $y^{\prime}_{u_i}$ to the equivalent cluster label provided via Hungarian algorithm \cite{kuhn1955hungarian}, and the value of $\mathbb{1}_{y_{u_i}=map(y^{\prime}_{u_i})}$ is 1 if $y_{u_i}=map(y^{\prime}_{u_i})$ and 0 otherwise,
\begin{small}
\begin{equation*}
NMI = \frac{\sum_{i=1}^{k}\sum_{j=1}^{k}{|\C^{\ast}_i\cap \C_j|\cdot \log{\frac{|\C^{\ast}_i\cap \C_j|}{|\C^{\ast}_i|\cdot |\C_j|}}}}{\sqrt{\sum_{i=1}^k{|\C^{\ast}_i|\cdot \log{\frac{|{\C^{\ast}_i}|}{|\U|}}}}\cdot \sqrt{\sum_{i=1}^k{|\C_i|\cdot \log{\frac{|{\C_i}|}{|\U|}}}}},\ \text{and}
\end{equation*}
\begin{equation*}
ARI=\frac{\sum_{i=1}^k\sum_{j=1}^k{\binom {|\C^{\ast}_i\cap \C_j|}2}-\left(\sum_{i=1}^k{\binom {|\C^{\ast}_i|}2}\cdot \sum_{j=1}^k{\binom {|\C_j|} 2}\right)/{\binom {|\U|}2}}{0.5\left(\sum_{i=1}^k{\binom {|\C^{\ast}_i|}2}+ \sum_{j=1}^k{\binom {|\C_j|}2} \right)- \left(\sum_{i=1}^k{\binom {|\C^{\ast}_i|}2}\cdot \sum_{j=1}^k{\binom {|\C_j|}2}\right)/{\binom {|\U|}2} },
\end{equation*}
\end{small}
where $\C^{\ast}_i$ and $\C_i$ represent the $i$-th ground-truth and predicted clusters for $\U$ in $\G$, respectively.

\subsection{Parameter Analysis in Efficiency}\label{sec:exp-param-efficiency}

Figure \ref{fig:parameter-time} illustrates the running time of \algo when varying parameters $\gamma$, $T_f$, $T_g$, and $d$. It can be observed that $d$ is the most impactful parameter on the computational time of \algo as its time complexity is proportional to $d^2$ analyzed in Section \ref{sec:complexity}. Regarding $\gamma$ and $T_f$, they affect the efficiency of Algorithms \ref{alg:aa} and \ref{alg:onmf}, respectively, which engender slight runtime growth. In contrast, Figure \ref{fig:vary-tg-time} shows that the running time of \algo almost stays steady, demonstrating the superb efficiency of Algorithm \ref{alg:round} compared to the other two procedures.



Figure \ref{fig:parameter-NMI} presents the NMI scores when varying parameters of \algo as in Section \ref{sec:exp-param}, which are quantitatively similar to the accuracy results in Figure \ref{fig:parameter}.


\subsection{Theorems and Proofs}\label{sec:proofs}
\begin{theorem}[\bf Eckart–Young Theorem \cite{gloub1996matrix}]\label{lem:eym}
Suppose that $\MM_{k}\in\mathbb{R}^{n\times k}$ is the rank-$k$ approximation to $\MM\in\mathbb{R}^{n\times n}$ obtained by exact SVD, then
$\min_{rank(\widehat{\MM})\le k}{\|\MM-\widehat{\MM}\|_2}=\|\MM-\MM_{k}\|_2=\sigma_{k+1}$,
where $\sigma_{i}$ represents the $i$-th largest singular value of $\MM$.
\end{theorem}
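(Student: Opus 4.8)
The plan is to prove the identity in the standard two-sided manner: establish that the truncated SVD attains a residual spectral norm of exactly $\sigma_{k+1}$, and then show that no matrix of rank at most $k$ can achieve a smaller residual. First I would write the full SVD $\MM=\sum_{i=1}^{n}\sigma_i\mathbf{u}_i\mathbf{v}_i^{\top}$, where $\{\mathbf{u}_i\}$ and $\{\mathbf{v}_i\}$ are the orthonormal left and right singular vectors and $\sigma_1\ge\sigma_2\ge\cdots\ge\sigma_n\ge 0$; the rank-$k$ truncation is then $\MM_{k}=\sum_{i=1}^{k}\sigma_i\mathbf{u}_i\mathbf{v}_i^{\top}$. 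For the achievability direction, the residual $\MM-\MM_{k}=\sum_{i=k+1}^{n}\sigma_i\mathbf{u}_i\mathbf{v}_i^{\top}$ is itself in SVD form, so its spectral norm equals its largest surviving singular value, giving $\|\MM-\MM_{k}\|_2=\sigma_{k+1}$.

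For the lower bound I would argue by a dimension-counting (subspace-intersection) argument. Fix any $\widehat{\MM}$ with $\mathrm{rank}(\widehat{\MM})\le k$, so that $\dim\ker(\widehat{\MM})\ge n-k$. Let $W=\mathrm{span}(\mathbf{v}_1,\dots,\mathbf{v}_{k+1})$, which has dimension $k+1$. Since $(n-k)+(k+1)=n+1>n$, the subspaces $\ker(\widehat{\MM})$ and $W$ must share a nonzero vector; normalizing yields a unit vector $\mathbf{w}=\sum_{i=1}^{k+1}c_i\mathbf{v}_i$ with $\sum_{i=1}^{k+1}c_i^2=1$ and $\widehat{\MM}\mathbf{w}=\mathbf{0}$. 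Using the definition of the spectral norm together with orthonormality of the $\mathbf{u}_i$,
\[
\|\MM-\widehat{\MM}\|_2^2\ge\|(\MM-\widehat{\MM})\mathbf{w}\|^2=\|\MM\mathbf{w}\|^2=\sum_{i=1}^{k+1}c_i^2\sigma_i^2\ge\sigma_{k+1}^2\sum_{i=1}^{k+1}c_i^2=\sigma_{k+1}^2,
\]
where the last inequality uses $\sigma_i\ge\sigma_{k+1}$ for $i\le k+1$. Hence $\|\MM-\widehat{\MM}\|_2\ge\sigma_{k+1}$ for every rank-$\le k$ matrix.

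Combining the two directions yields $\min_{\mathrm{rank}(\widehat{\MM})\le k}\|\MM-\widehat{\MM}\|_2=\sigma_{k+1}=\|\MM-\MM_{k}\|_2$, as claimed. The one genuinely non-routine step is the subspace-intersection argument: I must verify that the dimension count forces a nonzero common vector and, crucially, that this vector both annihilates $\widehat{\MM}$ and expands over only the leading $k+1$ right singular vectors, which is exactly what pins the residual down to $\sigma_{k+1}$ rather than something smaller. The remaining manipulations---recognizing the truncated residual as an SVD and the final norm computation---are elementary.
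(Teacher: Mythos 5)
Your proof is correct and complete: the achievability direction correctly identifies the residual $\MM-\MM_{k}=\sum_{i=k+1}^{n}\sigma_i\mathbf{u}_i\mathbf{v}_i^{\top}$ as already being in SVD form, and the lower bound via the dimension count $\dim\ker(\widehat{\MM})+\dim\,\mathrm{span}(\mathbf{v}_1,\dots,\mathbf{v}_{k+1})\ge n+1$ is exactly the standard subspace-intersection argument. Note, however, that the paper does not prove this statement at all --- it imports it as a classical result with a citation to \cite{gloub1996matrix}, and your argument is precisely the textbook proof found in that reference, so there is nothing in the paper to diverge from.
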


\begin{proof}[\bf Proof of Lemma \ref{lem:ZPX}]
We first rewrite that Eq. \eqref{eq:Og} as 
\begin{equation}\label{eq:ZLLZ}
\Tr(\ZM_\U^{\top}\cdot(\IM-\LM_\U\LM_\U^{\top})\cdot\ZM_\U).
\end{equation}
The equivalence can be deduced by the definition of Eq. \eqref{eq:Og},
\begin{small}
\begin{align*}
&\OO_g=\frac{1}{2}\sum_{u_i,u_j\in \U}{{\widehat{w}(u_i,u_j)}\cdot \left\|\frac{\ZM_\U[i]}{\sqrt{\DM_\U[i,i]}}-\frac{\ZM_\U[j]}{\sqrt{\DM_\U[j,j]}}\right\|^2}\\
&=\sum_{u_i,u_j\in \U}{ \frac{\widehat{w}(u_i,u_j)}{2} \left(\frac{\|\ZM_\U[i]\|^2}{{\DM_\U[i,i]}}+\frac{\|\ZM_\U[j]\|^2}{{\DM_\U[j,j]}} -\frac{2\ZM_\U[i]\cdot\ZM_\U[j]}{\sqrt{\DM_\U[i,i]}\sqrt{\DM_\U[j,j]}} \right)} \\
&= \sum_{v_\ell\in \V}{\sum_{u_i,u_j\in \U}}{\frac{w(u_i,v_\ell) w(u_j,v_\ell)}{\DM_\V[\ell,\ell]} \left(\frac{\|\ZM_\U[i]\|^2}{\DM_\U[i,i]} - \frac{\ZM_\U[i]\cdot\ZM_\U[j]}{\sqrt{\DM_\U[i,i]}\sqrt{\DM_\U[j,j]}}\right)}\\
&= \sum_{u_i\in \U}{\|\ZM_\U[i]\|^2}- \sum_{v_\ell\in \V}{\sum_{u_i,u_j\in \U} \LM_\U[i,\ell]\cdot \LM[j,\ell]\cdot \ZM_\U[i]\cdot\ZM_\U[j]} \\
&=\sum_{u_i}{\ZM_\U[i]\cdot \ZM_\U[i]} - \sum_{v_\ell\in \V}{(\LM_\U^{\top}\ZM_\U)[\ell]\cdot (\LM_\U^{\top}\ZM_\U)[\ell]}  \\
&=\Tr(\ZM_\U^{\top}\cdot(\IM-\LM_\U\LM_\U^{\top})\cdot\ZM_\U).
\end{align*}
\end{small}
Next, we set the derivative of Eq. \eqref{eq:ZLLZ} w.r.t. $\ZM_\U$ to zero and get the optimal $\ZM_\U$ as:
\begin{align*}
& \frac{\partial{\{(1-\alpha)\cdot\|\ZM_\U - \XM_\U\|^2_F + \alpha \cdot \Tr(\ZM^{\top}_\U(\IM-\LM_\U\LM_\U^{\top})\ZM_\U)\}}}{\partial{\ZM_\U}}=0\\
& \Longrightarrow (1-\alpha)\cdot(\ZM_\U - \XM_\U) + \alpha (\IM-\LM_\U\LM_\U^{\top})^{\top}\ZM_\U = 0 \\
& \Longrightarrow \ZM_\U = ((1-\alpha)\cdot\IM+\alpha\cdot (\IM-\LM_\U\LM_\U^{\top}))^{-1}\cdot (1-\alpha)\XM_\U.
\end{align*}
By Neumann series, \ie $(\IM-\MM)^{-1}=\sum_{k=0}^{\infty}{\MM^k}$, we have
\begin{align*}
& \left((1-\alpha)\cdot\IM+\alpha (\IM-\LM_\U\LM_\U^{\top}) \right)^{-1} = \left((1-\alpha)\cdot\IM+ \alpha (\IM-\LM_\U\LM_\U^{\top})\right)^{-1}\\
& = \left(\IM - \alpha\cdot \LM_\U\LM_\U^{\top}\right)^{-1} = \sum_{r=0}^{\infty}{\alpha^r\cdot (\LM_\U\LM_\U^{\top})^{r}},
\end{align*}
which seals the proof.
\end{proof}

\begin{proof}[\bf Proof of Lemma \ref{lem:obj3}]
We first need the following lemma:
\begin{lemma}\label{lem:obj2}
The optimization objective in Eq. \eqref{eq:obj1} is equivalent to optimizing:
$\max_{\YM} \Tr\left(\YM^{\top} \SM \YM\right)$,
where $\YM$ is defined in Eq. \eqref{eq:Y} and $\SM$ is a $|\U|\times |\U|$ matrix where $\SM[i,j]=s(u_i,u_j)$.
\end{lemma}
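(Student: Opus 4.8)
The plan is to reduce the combinatorial objective in Eq.~\eqref{eq:obj1} to a trace functional of the normalized cluster indicator matrix $\YM$ via the classical ratio-cut-to-trace identity, and then to dispose of the residual (degree) term by exploiting the normalization built into the symmetric softmax of Eq.~\eqref{eq:s}. Concretely, I first turn the partition objective into $\Tr(\YM^\top\SM\YM)$ plus a term I will argue is a constant, so that minimizing the inter-cluster cost coincides with maximizing the intra-cluster trace.

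First I would rewrite each cluster contribution using indicator vectors. Let $\mathbb{1}_\ell\in\mathbb{R}^{|\U|}$ be the $0/1$ indicator of $\C_\ell$, so that the $\ell$-th column of $\YM$ equals $\tfrac{1}{\sqrt{|\C_\ell|}}\mathbb{1}_\ell$ by Eq.~\eqref{eq:Y}. Splitting the inner sum over $\U$ into intra- and inter-cluster parts gives, for each $\ell$,
\[
\frac{1}{|\C_\ell|}\sum_{u_i\in\C_\ell,\,u_j\in\U\setminus\C_\ell} s(u_i,u_j)
= \frac{1}{|\C_\ell|}\sum_{u_i\in\C_\ell} d_i \;-\; \frac{1}{|\C_\ell|}\sum_{u_i,u_j\in\C_\ell} s(u_i,u_j),
\]
where $d_i=\sum_{u_j\in\U}s(u_i,u_j)$. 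Summing over $\ell$ and recognizing $\tfrac{1}{|\C_\ell|}\mathbb{1}_\ell^\top\SM\mathbb{1}_\ell=\YM[:,\ell]^\top\SM\YM[:,\ell]$, the second term aggregates to $\Tr(\YM^\top\SM\YM)$, while the first aggregates to $\Tr(\YM^\top\DM_{\SM}\YM)$ with $\DM_{\SM}$ the diagonal degree matrix $\DM_{\SM}[i,i]=d_i$. Hence Eq.~\eqref{eq:obj1} equals $\Tr(\YM^\top(\DM_{\SM}-\SM)\YM)$, the trace of the unnormalized Laplacian of $\SM$ at $\YM$.

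To finish I would show that $\Tr(\YM^\top\DM_{\SM}\YM)=\sum_\ell \tfrac{1}{|\C_\ell|}\sum_{u_i\in\C_\ell}d_i$ is independent of the partition, whence minimizing Eq.~\eqref{eq:obj1} becomes equivalent to maximizing $\Tr(\YM^\top\SM\YM)$; this is the crux and the main obstacle. The lever is the symmetric softmax of Eq.~\eqref{eq:s}: writing $\mathbf{K}[i,j]=e^{\widehat{\ZM}_\U[i]\cdot\widehat{\ZM}_\U[j]}$ and $D_i=\sum_\ell\mathbf{K}[i,\ell]$, one has $s(u_i,u_j)=\mathbf{K}[i,j]/(\sqrt{D_i}\sqrt{D_j})$, so $d_i=\tfrac{1}{\sqrt{D_i}}\sum_j \mathbf{K}[i,j]/\sqrt{D_j}$. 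If the normalizers $D_j$ were uniform across $j$, this collapses exactly to $d_i=D_i/\bar{D}=1$; in general the bounded exponents $\widehat{\ZM}_\U[i]\cdot\widehat{\ZM}_\U[j]\in[-1,1]$ confine every $D_j$ to a narrow band, pinning each $d_i$ near a common value $c$. Under this (near-)uniform-degree regime that the symmetric normalization is designed to induce, $\DM_{\SM}=c\IM$ and $\Tr(\YM^\top\DM_{\SM}\YM)=c\sum_\ell 1=ck$, a partition-independent constant, which closes the argument. The sole delicate point is discharging this degree term rigorously — by positing constant row sums of $\SM$, or by carrying its deviation as a controlled additive error; the two algebraic reductions preceding it are routine.
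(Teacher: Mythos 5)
Your proposal is essentially the paper's own argument: the paper likewise reduces Eq.~\eqref{eq:obj1} to $\min_{\YM}\Tr\left(\YM^{\top}(\SM_d-\SM)\YM\right)$, where $\SM_d$ is your $\DM_{\SM}$, via the identity $\frac{1}{|\C_\ell|}\sum_{u_i\in \C_\ell,\, u_j\in \U\setminus\C_\ell}s(u_i,u_j)=\frac{1}{2}\sum_{u_i,u_j\in\U}\SM[i,j]\left(\YM[i,\ell]-\YM[j,\ell]\right)^2=\YM[:,\ell]^{\top}(\SM_d-\SM)\YM[:,\ell]$ --- algebraically the same reduction as your degree-minus-association split --- and then passes to $\max_{\YM}\Tr\left(\YM^{\top}\SM\YM\right)$. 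The one delicate point you flag, namely that dropping the degree term $\Tr\left(\YM^{\top}\SM_d\YM\right)$ requires it to be partition-independent (constant row sums of $\SM$, which the symmetric softmax normalization ensures only approximately), is precisely the step the paper dispatches with the unjustified remark that the objective ``can be further simplified,'' so your treatment is no less rigorous than the paper's and is more candid about where the equivalence is only approximate.
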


Then, by the connection of the Frobenius norm and trace of matrices, i.e., $\|\MM\|^2_F=\Tr(\MM^{\top}\MM)=\Tr(\MM\MM^{\top})$, we have
\begin{align*}
\mathcal{J} = \|\RM-\YM\HM^{\top}\|^2_F = & \Tr(\RM\RM^{\top}-\RM\HM\YM^{\top}-\YM\HM^{\top}\RM^{\top}+\YM\HM^{\top}\HM\YM^{\top})\\
= & \Tr(\RM\RM^{\top}) - \Tr(\RM\HM\YM^{\top}) - \Tr(\YM\HM^{\top}\RM^{\top}) \\
& + \Tr(\YM\HM^{\top}\HM\YM^{\top}) \\
= &  \Tr(\RM\RM^{\top}-2\YM\HM^{\top}\RM^{\top}) - \Tr(\YM^{\top}\YM\HM^{\top}\HM)
\\
= & \Tr(\RM\RM^{\top}-2\YM\HM^{\top}\RM^{\top} + \HM^{\top}\HM).
\end{align*}
The zero gradient condition $\frac{\partial \mathcal{J}}{\partial \HM}=-2\RM^{\top}\YM+2\YM=0$ leads to $\HM=\RM^{\top}\YM$. Hence,
\begin{align}
\mathcal{J}=&\Tr(\RM\RM^{\top}-2\YM\HM^{\top}\RM^{\top} + \HM^{\top}\HM) \notag\\
& = \Tr(\RM\RM^{\top}) - 2\Tr(\YM\YM^{\top}\RM\RM^{\top})+\Tr(\YM^{\top}\RM\RM^{\top}\YM) \notag \\
& = \Tr(\RM\RM^{\top}) - 2\Tr(\YM^{\top}\RM\RM^{\top}\YM)+ \Tr(\YM^{\top}\RM\RM^{\top}\YM) \notag \\
& = \Tr(\RM\RM^{\top}) + \Tr(\YM^{\top}\RM\RM^{\top}\YM).\label{eq:J-YRRY}
\end{align}
Since $\Tr(\RM\RM^{\top})$ is a constant, maximizing $\mathcal{J}=\|\RM-\YM\HM^{\top}\|^2_F$ is equivalent to maximizing $\Tr(\YM^{\top}\RM\RM^{\top}\YM)=\Tr(\YM^{\top}\SM\YM)$.
\end{proof}

\begin{proof}[\bf Proof of Lemma \ref{lem:obj2}]
Let $\SM$ be a $|\U|\times |\U|$ matrix where $\SM[i,j]=s(u_i,u_j)$ and $\SM_d$ be a $|\U|\times |\U|$ diagonal matrix in which $\SM_d[i,i]=\sum_{u_j\in \U}{s(u_i,u_j)}$. Then,
\begin{normalsize}
\begin{align*}
& \frac{1}{|\C_\ell|}\sum_{u_i\in \C_\ell, u_j\in \U\setminus \C_\ell}{s(u_i,u_j)} = \frac{1}{2}\sum_{u_i,u_j\in \U}{\SM[i,j]\cdot (\YM[i,\ell]-\YM[j,\ell])^2}\\
&= \YM[:,\ell]^{\top}\cdot (\SM_d-\SM)\cdot \YM[:,\ell].
\end{align*}
\end{normalsize}
Thus, we can rewrite Eq. \eqref{eq:obj1} as $\min_{\YM}{\Tr(\YM^{\top} (\SM_d-\SM) \YM)}$.
Note that it can be further simplified as $\max_{\YM} \Tr\left(\YM^{\top} \SM \YM\right)$, which completes the proof.
\end{proof}

\begin{proof}[\bf Proof of Theorem \ref{lem:RR}]
Let $\mathbf{z}=\widehat{\ZM}_\U[i] - \widehat{\ZM}_\U[j]$ and $\widehat{\QM}=\sqrt{d_\U}\cdot \QM$. Based on the definition of $\RM^{\prime}$ in Eq. \eqref{eq:R-prime}, we can get
\begin{normalsize}
\begin{align}
\mathbb{E}[\RM^{\prime}[i]\cdot \RM^{\prime}[j]] = & \mathbb{E} \Bigg[\frac{e}{d_\U}\sum_{\ell=1}^{d_\U} sin(\widehat{\QM}[\ell]\cdot \widehat{\ZM}_\U[i])\cdot sin(\widehat{\QM}[\ell]\cdot \widehat{\ZM}_\U[j]) \notag\\
&\quad\quad  - cos(\widehat{\QM}[\ell]\cdot \widehat{\ZM}_\U[i])\cdot cos(\widehat{\QM}[\ell]\cdot \widehat{\ZM}_\U[j]) \Bigg] \notag\\
= & \mathbb{E}\left[\frac{e}{d_\U}\sum_{\ell=1}^{d_\U} cos(\widehat{\QM}[\ell]\cdot (\widehat{\ZM}_\U[i] - \widehat{\ZM}_\U[j])) \right] \notag\\
= & \mathbb{E}\left[\frac{e}{d_\U}\sum_{\ell=1}^{d_\U} cos(\widehat{\QM}[\ell]\cdot \mathbf{z}) \right] \label{eq:RR}
\end{align}
\end{normalsize}
By Lemma 5 in \cite{yu2016orthogonal}, for any vector $\widehat{\QM}[\ell]$, 
\begin{normalsize}
\begin{equation*}
\left|\frac{\mathbb{E}[cos(\widehat{\QM}[\ell]\cdot \mathbf{z})]}{e^{-{\|\mathbf{z}\|^2}/{2}}} - \left(1-\frac{\|\mathbf{z}\|^4}{4d_\U}\right) \right| \le \frac{\|\mathbf{z}\|^4\cdot (\|\mathbf{z}\|^4+8\|\mathbf{z}\|^2+8)}{16d_\U^2}.
\end{equation*}
\end{normalsize}
Note that for each $u_i\in \U$, the row vector $\widehat{\ZM}_\U[i]$ is $L_2$ normalized (Line 5), i.e., $\|\mathbf{z}\|^2\in [0,1]$. Based thereon, the above inequality can be transformed into
\begin{normalsize}
\begin{equation}\label{eq:cos-ex-exp}
1- \frac{17}{16d_\U^2} - \frac{1}{4d_\U} \le \frac{\mathbb{E}[cos(\widehat{\QM}[\ell]\cdot \mathbf{z})]}{e^{-{\|\mathbf{z}\|^2}/{2}}} \le 1 + \frac{17}{16d_\U^2} + \frac{1}{4d_\U}.
\end{equation}
\end{normalsize}
According to Line 5 of Algorithm \ref{alg:aa}, $\|\widehat{\ZM}_\U[i]\|^2=1\ \forall{u_i\in \U}$. Thus, 
\begin{align*}
\|\widehat{\ZM}_\U[i] - \widehat{\ZM}_\U[j]\|^2 &= \|\widehat{\ZM}_\U[i]\|^2 + \|\widehat{\ZM}_\U[j]\|^2 - 2\widehat{\ZM}_\U[i]\cdot \widehat{\ZM}_\U[j]\\
& =2(1-\widehat{\ZM}_\U[i]\cdot \widehat{\ZM}_\U[j]).
\end{align*}
Thus, $e\cdot e^{-\frac{\|\mathbf{z}\|^2}{2}} = e \cdot e^{ -\frac{\| \widehat{\ZM}_\U[i] - \widehat{\ZM}_\U[j] \|^2}{2}} = e^{\widehat{\ZM}_\U[i]\cdot \widehat{\ZM}_\U[j]}$.
Combining Eq. \eqref{eq:cos-ex-exp} and Eq. \eqref{eq:RR} leads to
\begin{normalsize}
\begin{equation}\label{eq:RR-ee}
1- \frac{17}{16d_\U^2} - \frac{1}{4d_\U} \le \frac{\mathbb{E}[\RM^{\prime}[i]\cdot \RM^{\prime}[j]]}{e^{\widehat{\ZM}_\U[i]\cdot \widehat{\ZM}_\U[j]}} \le 1 + \frac{17}{16d_\U^2} + \frac{1}{4d_\U}.
\end{equation}
\end{normalsize}
Further, Eq. \eqref{eq:RR-ee} implies 
\begin{normalsize}
\begin{equation}\label{eq:R-r}
1- \frac{17}{16d_\U^2} - \frac{1}{4d_\U} \le \frac{\mathbb{E}[\sum_{u_\ell\in \U}{\RM^{\prime}[i]\cdot \RM^{\prime}[\ell]}]}{\sum_{u_\ell\in \U}{ e^{\widehat{\ZM}_\U[i]\cdot \widehat{\ZM}_\U[\ell]}}} \le 1 + \frac{17}{16d_\U^2} + \frac{1}{4d_\U}.
\end{equation}
\end{normalsize}
According to Eq. \eqref{eq:Ri}, we can obtain
\begin{normalsize}
\begin{equation*}
\mathbb{E}[\RM[i]\cdot \RM[j]] = \mathbb{E}\left[ \frac{\RM^{\prime}[i]\cdot \RM^{\prime}[j]}{\sqrt{\RM^{\prime}[i]\cdot \mathbf{r}}\cdot \sqrt{\RM^{\prime}[j]\cdot \mathbf{r}}} \right].
\end{equation*}
\end{normalsize}
Note that, by $\mathbf{r}$'s definition in Eq. \eqref{eq:Ri}, we have $\mathbb{E}[ \RM^{\prime}[i]\cdot \mathbf{r}] = \mathbb{E}[\sum_{u_\ell\in \U}{\RM^{\prime}[i]\cdot \RM^{\prime}[\ell]}]$. Therefore, plugging Eq. \eqref{eq:RR-ee} and Eq. \eqref{eq:R-r} into the above equation proves the theorem.
\end{proof}

\begin{proof}[\bf Proof of Lemma \ref{lem:RYH-RGammaH}]
According to Eq. \eqref{eq:J-YRRY}, optimizing Eq. \eqref{eq:obj3} is equivalent to maximizing $\Tr(\YM^{\top}\RM\RM^{\top}\YM)$.
Using the cyclic property of matrix trace, we have $\Tr(\YM^{\top}\RM\RM^{\top}\YM)=\Tr(\YM\YM^{\top}\RM\RM^{\top})$ and $\Tr(\FY^{\top}\RM\RM^{\top}\FY)=\Tr(\FY\FY^{\top}\RM\RM^{\top})$. Consequently,
\begin{normalsize}
\begin{align}
& \left|\|\RM-\YM\HM^{\top}\|^2_F-\|\RM-\FY\HM^{\top}\|^2_F \right|\\
&=\left|\Tr(\YM^{\top}\RM\RM^{\top}\YM)-\Tr(\FY^{\top}\RM\RM^{\top}\FY) \right|\notag\\
& = \left|\Tr((\YM\YM^{\top}-\FY\FY^{\top})\RM\RM^{\top}) \right|. \notag
\end{align}
\end{normalsize} 
The lemma is therefore proved.
\end{proof}

\begin{proof}[\bf Proof of Lemma \ref{lem:FF-ZZ}]
First, define $\PM_\U$ as follows:
\begin{normalsize}
\begin{equation}\label{eq:P}
\PM_\U = (1-\alpha)\sum_{r=0}^{\infty}{\alpha^r \cdot  \left(\LM_\U\LM_\U^{\top}\right)^{r}}.
\end{equation}
\end{normalsize}
Suppose that $\GaM\SGVM\PsiM^{\top}$ be the exact top-$d$ SVD of $\XM_\U$, by Eckart–Young Theorem \cite{gloub1996matrix} (Theorem \ref{lem:eym}), we have $\|\GaM\SGVM\PsiM^{\top}-\XM_\U\|_2\le \sigma_{d+1}$,
where $\sigma_{d+1}$ is the $(d+1)$-th largest singular value of $\XM_\U$. Additionally, we can obtain $\|\GaM\SGVM^2\GaM^{\top}-\XM_\U\XM_\U^{\top}\|_2 \le \sigma_{d+1}^2$.
By the definition of $\PM_\U$ in Eq. \eqref{eq:P}, we have 
$$\PM_\U = \DM^{1/2}_{\U}\cdot \sum_{r=0}^{\gamma}{\alpha^t\cdot  \left(\DM^{-1}_{\U}\BM\DM^{-1}_{\V}\BM^{\top}\right)^{r} \cdot \DM^{-1/2}_{\U}},$$
where $\DM^{-1}_{\U}\BM$ and $\DM^{-1}_{\V}\BM^{\top}$ are two row-stochastic matrices, i.e., the entries at each row sum up to $1$. Let $\MM=\DM^{-1}_{\U}\BM\DM^{-1}_{\V}\BM^{\top}$. Since the multiplication of two row-stochastic matrices yields a row-stochastic matrix, $\MM=\DM^{-1}_{\U}\BM\DM^{-1}_{\V}\BM^{\top}$ is a row-stochastic matrix, which further connotes that $\MM^r = \left(\DM^{-1}_{\U}\BM\DM^{-1}_{\V}\BM^{\top}\right)^{r}$ is row-stochastic, i.e., $\|\MM[i]\|_1=\sum_{u_\ell\in \U}{\MM[i,\ell]}=1\ \forall{u_i\in \U}$. Hence, given a matrix $\boldsymbol{\Pi}=\sum_{r=0}^{\infty}{\alpha^r \MM^r}$, 
$$\|\boldsymbol{\Pi}[i]\|_1=\sum_{r=0}^{\infty}{\alpha^r}=\frac{1}{1-\alpha}\ \forall{u_i\in \U}.$$
Therefore, we can derive that
\begin{normalsize}
\begin{align*}
& \left|{\FM[i]\cdot \FM[j]}-{\ZM_\U[i]\cdot \ZM_\U[j]}\right| \\
& = \left|{(\PM_\U\GaM\SGVM)[i]\cdot (\PM_\U\GaM\SGVM)[j]}-{(\PM_\U\XM_\U)[i]\cdot (\PM_\U\XM_\U)[j]}\right|\\
& = \left| \PM_\U[i]\cdot (\GaM\SGVM^2\GaM^{\top}-\XM_\U\XM_\U^{\top}) \cdot \PM_\U[j]^{\top} \right| \\
& = \sum_{u_\ell\in \U}{\PM_\U[i,\ell]\cdot \sum_{u_h\in \U}{\PM_\U[j,h]\cdot \left(\GaM\SGVM^2\GaM^{\top}-\XM_\U\XM_\U^{\top}\right)[\ell,h]}}\\
& \le \sum_{u_\ell\in \U}{\PM_\U[i,\ell]\cdot \sum_{u_h\in \U}{\PM_\U[j,h]\cdot \sigma_{d+1}^2 }}\\
& = \sum_{u_\ell\in \U}{ \sqrt{\frac{\DM_\U[i,i]}{\DM_\U[\ell,\ell]}}\cdot \boldsymbol{\Pi}[i,\ell] \cdot \sum_{u_h\in \U}{ \sqrt{\frac{\DM_\U[j,j]}{\DM_\U[h,h]}}\cdot \boldsymbol{\Pi}[j,h] \cdot \sigma_{d+1}^2 }}\\
&\le \sqrt{\DM_\U[i,i]\cdot \DM_\U[j,j]}\cdot \frac{\sigma_{d+1}^2}{1-\alpha},
\end{align*}
\end{normalsize}
which finishes the proof.
\end{proof}

\balance

\bibliographystyle{ACM-Reference-Format}
\bibliography{main}

\end{document}